\documentclass[11pt,USenglish,thm-restate]{article}

\usepackage{setspace}
\usepackage{subcaption}

\usepackage{authblk}

\usepackage[margin=1.2in,letterpaper]{geometry}

\usepackage[utf8]{inputenc}

\usepackage{anyfontsize}

\usepackage{amssymb}
\usepackage{amsthm}
\usepackage{amsmath}
\usepackage{amsfonts}
\usepackage{nccmath}
\usepackage{bbm}
\usepackage[b]{esvect}
\usepackage{mathrsfs}
\usepackage{mathtools}
\usepackage{centernot}
\usepackage{xspace}
\usepackage{stmaryrd}

\usepackage[dvipsnames]{xcolor}
\usepackage{verbatim}

\usepackage{graphicx}
\usepackage{lscape}
\usepackage{float}
\usepackage{supertabular}
\usepackage{multirow}
\usepackage{longtable}
\usepackage{enumerate}
\usepackage[inline]{enumitem}
\usepackage[framemethod=tikz]{mdframed} %and thus tikz
\usepackage{algpseudocode}

\usepackage{hyperref} %Load last if possible
\usepackage[capitalize]{cleveref}

\usepackage{makecell}

\usepackage{thm-restate}

\usepackage{caption}
\usepackage{subcaption}

\usepackage{textcomp}
\usepackage{pgfplots}
\pgfplotsset{compat=1.18}
\usepgfplotslibrary{fillbetween}
\newtheorem{theorem}{Theorem}
\newtheorem{lemma}{Lemma}
\newtheorem{corollary}{Corollary}
\newtheorem{definition}{Definition}

\makeatletter
\DeclareRobustCommand*{\bfseries}{%
  \not@math@alphabet\bfseries\mathbf
  \fontseries\bfdefault\selectfont
  \boldmath
}
\makeatother

\renewcommand{\paragraph}[1]{\noindent{\bf #1}}
\RequirePackage{totcount}
\RequirePackage{color}
\RequirePackage[colorinlistoftodos]{todonotes}

\newtotcounter{notecount}
\newcommand{\notewarning}{%
\ifnum\totvalue{notecount}>0%
 \vspace{1ex}
\begin{center}
 \begin{tikzpicture}[baseline=(A.south)]
    \node (A) [] at (0,0){};
    \node [rounded corners=1pt,rectangle, draw=red, fill=red!20,text=black](B) at (0.1ex,0ex){
        \Large \raggedright {\bf Warning:} There are still some notes left!
    };
 \end{tikzpicture}
\end{center}
 \vspace{1ex}
\fi
}
\makeatletter
\def\myaddcontentsline#1#2#3{%
  \addtocontents{#1}{\protect\contentsline{#2}{#3}{Section \thesubsection\ at p. \thepage}{}}}
\renewcommand{\@todonotes@addElementToListOfTodos}{%
    \if@todonotes@colorinlistoftodos%
        \myaddcontentsline{tdo}{todo}{{%
            \colorbox{\@todonotes@currentbackgroundcolor}%
                {\textcolor{\@todonotes@currentbackgroundcolor}{o}}%
            \ \@todonotes@caption}}%
    \else%
        \myaddcontentsline{tdo}{todo}{{\@todonotes@caption}}%
   \fi}%
\newcommand*\mylistoftodos{%
  \begingroup
       \setbox\@tempboxa\hbox{Section 9.9 at p. 99}%
       \renewcommand*\@tocrmarg{\the\wd\@tempboxa}%
       \renewcommand*\@pnumwidth{\the\wd\@tempboxa}%
       \listoftodos%
  \endgroup
}
\makeatother
\definecolor{lightgreen}{rgb}{0.86, 0.93, 0.78}
\definecolor{bordergreen}{rgb}{0.55, 0.76, 0.74}
\definecolor{lightblue}{rgb}{0.70, 0.90, 0.99}
\definecolor{borderblue}{rgb}{0.01, 0.66, 0.96}
\definecolor{lightamber}{rgb}{1, 0.93, 0.70}
\definecolor{borderamber}{rgb}{1, 0.76, 0.03}
\definecolor{lightcolor4}{rgb}{ 0.93, 0.70, 1}
\definecolor{bordercolor4}{rgb}{0.76, 0.03, 1}
\definecolor{lightcolor5}{rgb}{0.78,0.86,0.93}
\definecolor{bordercolor5}{rgb}{0.74,0.55,0.76}

\RequirePackage{algpseudocode}
\newcommand{\algoHead}[1]{\underline{\textbf{#1}}}

\makeatletter
\algnewcommand{\ExtendedState}[1]{\State
\parbox[t]{\dimexpr\linewidth-\ALG@thistlm}{\hangindent=\algorithmicindent\strut\hangafter=3#1\strut}}
\makeatother
\algnewcommand\algorithmicinput{\textbf{Input:}}
\algnewcommand\Input{\item[\algorithmicinput]}

\algrenewcommand{\algorithmiccomment}[1]{{\color{gray}// #1}}
\algnewcommand{\IIf}[1]{\State\algorithmicif\ #1\ \algorithmicthen}
\algnewcommand{\EndIIf}{\unskip\ \algorithmicend\ \algorithmicif}

 \RequirePackage{varwidth}
 \RequirePackage{color}
 \RequirePackage[most]{tcolorbox}%with most option

\newtcolorbox{titlebox}[5]{enhanced,center,colframe=black,colback=white,boxrule={#3},arc={#2},auto outer arc,%
 breakable,
 pad at break*=5pt,vfill before first,before={},after={},top=12pt,left=4pt,%
 enlarge top by=2pt,%enlarge bottom by=7pt,%
 fontupper=\small,
 title={\rule[-.3\baselineskip]{0pt}{\baselineskip}\normalsize\sffamily\bfseries #1}, varwidth boxed title*=-30pt, 
 attach boxed title to top left={yshift=-10pt,xshift=10pt}, coltitle=black,
 boxed title style={colback=white,boxrule={#5},arc={#4},auto outer arc}
}

 \newenvironment{dianabox}[1]
 {\begin{titlebox}{\normalfont #1}{0.5pt}{0.5pt}{1pt}{0.75pt}}
 {\end{titlebox}}

\makeatletter
\let\orgdescriptionlabel\descriptionlabel
\renewcommand*{\descriptionlabel}[1]{%
  \let\orglabel\label
  \let\label\@gobble
  \phantomsection
  \edef\@currentlabel{#1}%
  \let\label\orglabel
  \orgdescriptionlabel{#1}%
}
\makeatother

\newcommand{\xmath}[1]{\ensuremath{#1}\xspace}

\newcommand{\parameter}[1]{\xmath{\texttt{#1}}}

\let\emptyset\varnothing

\DeclarePairedDelimiter\abs{\big\lvert}{\big\rvert}% |#1|
\newcommand{\party}{P\xspace}
\newcommand{\PS}{\mathcal{P}} 
\newcommand{\false}{\parameter{false}}
\newcommand{\true}{\parameter{true}}

\newcommand{\poly}{\mathsf{poly}}

\newcommand{\msg}{\mathtt{msg}}

\newcommand{\msgset}{\mathcal{M}}

\newcommand{\gcc}[2]{\Pi_{\text{GC}}}

\newcommand{\inputt}[0]{\mathsf{in}}
\newcommand{\outputt}[0]{\mathsf{out}}
\newcommand{\ba}[0]{\textsc{BA}}
\newcommand{\approximateagreement}[0]{\textsc{AA}}

\newcommand{\concat}[0]{\mathbin\Vert}

\newcommand{\gba}[0]{\textsc{gBA}}

\newcommand{\ca}[0]{\textsc{CA}}
\newcommand{\plainbc}[0]{\textsc{BC}}

\newcommand{\rs}[0]{\textsc{RS}}
\newcommand{\mt}[0]{\textsc{MT}}

\newcommand{\leftt}[0]{\textsc{left}}
\newcommand{\rightt}[0]{\textsc{right}}

\newcommand{\bitcost}[0]{\textsc{bits}}
\newcommand{\roundcost}[0]{\textsc{rounds}}

\newcommand{\naturalnumbers}[0]{\mathbb{N}}

\newcommand{\encode}[0]{\textsc{encode}}
\newcommand{\decode}[0]{\textsc{decode}}

\newcommand{\share}[0]{\textsc{s}}

\newcommand{\buildd}[0]{\textsc{Build}}
\newcommand{\verifyy}[0]{\textsc{Verify}}

\newcommand{\enc}[0]{\mathsf{enc}}

\newcommand{\largebitcostca}[0]{\mathsf{HighCostCA}}

\newcommand{\supernodefactor}[0]{\gamma}

\newcommand{\initSupernodes}[0]{\mathsf{initSupernodes}}

\newcommand{\supernodeCA}[0]{\mathsf{supernodeCA}}

\newcommand{\reduceSupernodes}[0]{\mathsf{reduceSupernodes}}

\newcommand{\supersend}[0]{\mathsf{supersend}}

\newcommand{\sendShares}[0]{\mathsf{sendShares}}

\newcommand{\reconstructFromShares}[0]{\mathsf{recFromShares}}

\newcommand{\supernodesCA}[0]{\mathsf{committeeCA}}

\newcommand{\botCommitteeCA}[0]{\bot\text{-}\mathsf{committeeCA}^{\widetilde{L}}}

\newcommand{\hull}[1]{\langle{#1}\rangle}

\newcommand{\safe}[0]{\mathsf{safe}}
\newcommand{\restrict}[0]{\mathsf{restrict}}
\newcommand{\threshbad}[0]{F_{\text{bad}}}

\newcommand{\bigO}{\mathcal{O}}

\newcommand{\exponentialSearch}[0]{\mathsf{naiveApproximation}}

\usepackage{cite}

\title{General Convex Agreement with Near-Optimal Communication}

\author[1]{Marc Dufay}
\author[2]{Diana Ghinea}
\author[1]{Anton Paramonov}

\affil[1]{ETH Zurich\\
\texttt{\{mdufay,aparamonov\}@ethz.ch}}
\affil[2]{Lucerne University of Applied Sciences and Arts\\
\texttt{diana.ghinea@hslu.ch}}

\date{}

\begin{document}
\maketitle

\fontsize{11pt}{13pt}\selectfont 
\singlespacing
\begin{abstract}
\normalsize
Byzantine Agreement (BA) considers a setting of $n$ parties out of which up to $t$ can be byzantine (malicious), and requires the honest parties to agree on an input subject to a condition called \emph{validity}: if all honest parties have input $v$, the output agreed upon must be $v$. 
Convex Agreement (CA) strengthens BA by requiring the output agreed upon to lie in the convex hull of the honest parties' inputs. This validity condition captures aggregation tasks, such as robust learning and sensor fusion, where honest inputs may differ but should still constrain the final decision. Existing protocols for CA over general convexity spaces require at least $O(L \cdot n^2)$ bits of communication for $L$-bit inputs,
leaving a gap with BA's $\Omega(L \cdot n)$ lower bound.

We investigate this gap, and we present deterministic synchronous CA protocols with near-optimal communication complexity in the long-message regime. When $L=\Omega(n\cdot\kappa)$, where $\kappa$ is a security parameter, our protocols use $\mathcal{O}(L\cdot n\log n)$ bits of communication for finite convexity spaces and $\mathcal{O}(L\cdot n^{1+o(1)})$ communication for Euclidean spaces $\mathbb{R}^d$. Our protocols also have asymptotically optimal round complexity $\mathcal{O}(n)$. If an
upper bound  $L$ on the honest inputs' length in bits is known in advance, we achieve near-optimal resilience
$t<n/(\omega+\varepsilon)$ for any constant $\varepsilon>0$, where $\omega$ is the Helly number
of the convexity space. When no such bound is known, we achieve resilience
$t<n/(\omega+\varepsilon+1)$.

As a sample application, we show how our protocols can be used to obtain efficient solutions for parallel instances
of BA.

Our main technical contribution is the use of extractor graphs to obtain a deterministic assignment of parties to committees, which is robust against adaptive adversaries.
\end{abstract}

\thispagestyle{empty} 

\newpage
\pagenumbering{arabic}

\section{Introduction}

Byzantine Agreement ($\ba$) \cite{LaShPe82} considers a setting of $n$ parties, of which up to
$t$ may be byzantine. Informally, $\ba$ asks the honest parties to ``decide the same thing''.
However, agreement is useful only if the decision is also meaningful, or
credible relative to the honest parties' inputs. This is handled by a property of $\ba$ called \emph{validity}. The standard definition only imposes restrictions on the output agreed upon when the honest parties hold the same input -- otherwise, any output is allowed. This guarantee is somewhat weak especially in scenarios where honest inputs are different, but still ``close''. Examples include: blockchain oracle mechanisms that aggregate multiple price reports into a single outcome \cite{delphi24}, aggregating model updates in distributed or federated learning \cite{SuVai16,federated20,federated21,SPAA:CMMS25}, and robust multi-sensor fusion \cite{sensor24}.

Convex Agreement ($\ca$) \cite{PODC:VaiGar13, DISC:NoRy19, JACM:DLPSW86} strengthens the validity condition of $\ba$ by requiring that the common output lies in the honest inputs' convex hull (e.g., if the inputs are real values, inside the range spanned by the honest inputs), which captures the principle that the decision should be explainable as a ``mixture'' of honest views.

$\ca$ immediately inherits the lower bounds known for $\ba$ regarding efficiency. Matching $\ba$'s lower bound regarding round complexity in a synchronous network (where parties hold synchronized clocks and messages get delivered within a predefined amount of time) is immediate via Byzantine Broadcast ($\plainbc$) \cite{PSL80, DolStr83}: this provides the parties with an identical view over the inputs, after which every party can locally compute a valid output using a deterministic ``safe area'' rule.  Such an approach also achieves optimal resilience \cite{PODC:VaiGar13, DISC:NoRy19, eprint:ConvexWorld}.

On the other hand, approaching $\ba$'s asymptotically tight lower bound for communication complexity becomes challenging. If the honest parties' inputs can be represented using $L$ bits, $\plainbc$-based approaches, such as the one described above, incur at least $\bigO(L n^2)$ bits of communication. In fact, such approaches require honest parties to relay messages from byzantine parties, leading to an adversarially-chosen communication complexity. This stands in contrast to $\ba$, where \emph{extension protocol} techniques allow $\bigO(L n + \poly(n,\kappa))$ bits of communication to be sufficient even via deterministic protocols, where $\kappa$ denotes a security parameter, matching the lower bound of $\Omega(L n)$ bits \cite{PODC:FitHir06,PODC:LiaVai11,PODC:GanPat16,DISC:NRSVX20,AC:BLLN22}. A natural hope is then to port this communication efficiency from $\ba$ to $\ca$. Recent work makes concrete progress in this direction: \cite{PODC:GhiLiuWat25} proposes a deterministic protocol that, assuming a hash function, achieves $\ca$ for integer inputs with communication complexity $\bigO(L n + \poly(n, \kappa))$, breaking the $\Theta(L n^2)$ of $\plainbc$-based approaches. As pointed out by \cite{PODC:GhiLiuWat25}, matching $\ba$-style communication in $\ca$ is intrinsically more delicate: techniques from  $\ba$ extension protocols involve compressing values and hence lose information about the honest inputs' convex hull.
In this paper, we investigate this communication gap beyond integers, for \emph{general} convexity spaces, where prior protocols achieve at least $\bigO(L n^2)$ bits of communication. We therefore ask  the following:

\begin{center}
    \fbox{%
        \parbox{0.95\linewidth}{%
            \centering
            Given that the parties' inputs can be represented using up to $L$ bits,
            is there a \emph{deterministic} protocol achieving $\ca$ on \emph{any} convexity space $\mathcal{C}$ with communication complexity $o(L n^2)$?
        }
    }
\end{center}

\subsection{Our Contribution}
We answer the question affirmatively in the synchronous model, i.e., where messages get delivered within a predefined amount of time $\Delta$ and parties have synchronized clocks: we show that $\ca$ on convexity space $\mathcal{C}$ can be achieved deterministically with near-optimal communication in the long-message regime, with asymptotically optimal round complexity, and near-optimal resilience.
Concretely, we present deterministic protocols achieving communication complexity $\bigO(L \cdot n \log n + n^2 \kappa \log n)$ when, informally, all elements of the convexity space can be represented using $L$ bits (e.g., in finite spaces), and $\bigO(L \cdot n^{1 + o(1)} + n^2 \kappa \log^{1 + o(1)} n)$ otherwise. Here $\kappa$ is a security parameter that comes from using hashes and signatures. Note that $\Omega(L\cdot n + n^2)$ is a communication complexity lower bound, since $n$ parties need to learn the $L$-bit output and $\Omega(n^2)$ messages are needed even for binary BA \cite{dolev1985bounds}.

At a high level, our approach avoids ``compressing'' values and instead \emph{compresses the system}. We arrange the parties into \emph{supernodes}, where each supernode is a group of parties meant to be \emph{good}, that is, acting as a single identity which holds a value in the honest inputs' convex hull. We start with $n$ supernodes, and we \emph{iteratively} reduce these to a small number of supernodes. Once done, the safe-area approach suffices since supernodes can distribute their values efficiently via prior techniques for extension protocols on $\ba$.
This way, the main challenge in our approach consists of assigning the parties into supernodes in a \emph{deterministic} manner to ensure security against an \emph{adaptive adversary}. In particular, the following holds in each iteration: a large fraction of the supernodes (i) contain sufficiently many honest parties to act as a single identity and (ii) hold feasible values, that is, values in the honest inputs' convex hull.

We achieve both of these properties using \emph{extractors}: roughly, these are bipartite graphs that can be constructed efficiently deterministically but provide (pseudo)random properties. We use an extractor in each iteration to assign parties to new supernodes: parties correspond to the left part, supernodes to the right, and the edge signifies an assignment. Extractor's structure then ensures that property (i) holds. To ensure property (ii), in our protocol, each supernode receives a value decided by a committee of (old) supernodes. If properties (i) and (ii) hold for most of the supernodes in the committee, the value decided in a committee is guaranteed to be feasible. 
It turns out that if the supernodes to committees assignment is also done via an extractor, then properties (i) and (ii) indeed hold for most of the supernodes in most of the committees. Moreover, extractor-based assignment yields small committees. This enables us to rely on the safe-area approach (together with our mechanism for distributing values efficiently among supernodes) to decide a value within the committee.  

We also highlight that our protocols only perform safe area computations on a small number of supernodes/values: $\bigO(1)$ in finite spaces, and $\bigO(\log\log n)$ in spaces such as $\mathbb{R}^d$. This substantially reduces the amount of local computation compared with prior approaches, where parties compute safe areas over views of size $\Theta(n)$.

We present a protocol that, following this approach, achieves $\ca$ whenever an upper bound on the honest inputs' bit-length is known in \Cref{section:general-protocol}. Afterwards, \Cref{section:unknown-L} removes this assumption. Finally, \Cref{section:extractor-magic} is concerned with constructing the extractor-based assignments that our protocols rely on.

We now state our main results, starting with $\ca$ on finite convexity spaces.

\begin{restatable}{theorem}{MainResultFiniteSpace}\label{theo:main-theo-finite}
    Consider a finite convexity space $\mathcal{C}$ with constant Helly number $\omega \geq 2$ where all elements have $L$-bit encodings. Then, for any constant $\varepsilon > 0$, there is a protocol achieving $\ca$ on $\mathcal{C}$ secure against $t < n / (\omega + \varepsilon)$ byzantine corruptions in authenticated settings and $t < n / (\max(3, \omega) + \varepsilon)$ in unauthenticated settings, communication complexity $\mathcal{O}(n L \log n + n^2 \log n  \cdot \kappa)$ and round complexity $\bigO(n)$.
\end{restatable}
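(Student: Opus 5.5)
We prove \Cref{theo:main-theo-finite} by the supernode-compression strategy from the introduction. We start with $n$ singleton supernodes, each holding its owner's input. We then run $O(\log n)$ reduction iterations, ideally shrinking the number of supernodes by a constant factor $\supernodefactor$ per iteration while keeping them of manageable size. Once $O(1)$ supernodes remain, a final safe-area step is run among them.

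\textbf{Invariant.} We maintain the following invariant: in iteration $i$ there are $m_i$ supernodes, each a set of parties of size roughly $n/m_i \cdot \mathrm{polylog}$. All but a $\delta$-fraction are \emph{good}, meaning
\begin{itemize}
\item a $(1-1/(\omega+\varepsilon'))$-fraction of their members are honest, so they can act as a single identity;
\item all honest members hold the same value, which lies in the honest inputs' convex hull.
\end{itemize}

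\textbf{One iteration.} Given the supernodes of iteration $i$, we do three things.
\begin{enumerate}
\item Use a deterministic extractor graph (constructed in \Cref{section:extractor-magic}) to assign parties to $m_{i+1}$ new supernodes.
\item Use a second extractor to assign to each new supernode a committee of $k=O(1)$ old supernodes. This is possible because the Helly number $\omega$ is constant; we need $k$ large enough that the safe area of $k$ values with at most a $1/(\omega+\varepsilon)$ fraction bad is nonempty, and $k=O(\omega/\varepsilon)$ suffices.
\item Within each committee, the old supernodes distribute their $L$-bit values to each other and to the new supernode's members. For this we use the extension-protocol / erasure-coding distribution (sharding plus hashes), at cost $O(L\cdot(\text{number of parties involved}) + \poly\cdot\kappa)$. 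Each member then locally computes the deterministic safe-area rule on the $k$ received values.
\end{enumerate}

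\textbf{Correctness of an iteration.} The extractor property states that for every set $B$ of at most $t$ byzantine parties (hence robustly against an adaptive adversary), all but a small fraction of right vertices have at most a $t/n+\varepsilon''$ fraction of neighbours in $B$. This yields property (i) for most new supernodes. Applying the same property to the committee extractor, with the bad old supernodes as $B$, shows that most committees contain at most a $(1/\omega$-ish$)$ fraction of bad supernodes. Since the good ones hold feasible values and the Helly-number safe-area lemma applies, the output lies in the hull; this gives property (ii). Because $\varepsilon$ is fixed, we can choose the extractor parameters so the bad fraction does not compound over $O(\log n)$ iterations. Either the fraction is reset to a fixed $\delta$ each round by the extractor guarantee (it holds for any bad set of bounded density), or we take $\varepsilon'$ decreasing geometrically.

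\textbf{Resilience.} The authenticated versus unauthenticated distinction enters only in the intra-committee agreement primitive. Unauthenticated BA-type subroutines need the $3$ in $\max(3,\omega)$.

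\textbf{Final step and cost accounting.} With $O(1)$ supernodes left, all parties run $\ca$ via safe area over the supernodes' values, distributed to everyone by the same efficient mechanism. This costs $O(nL + n^2\kappa)$.

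For communication, each iteration moves every value to $O(1)$ committees, whose members total $O(n)$ parties. Each iteration therefore costs $O(nL + n^2\kappa)$, and $O(\log n)$ iterations give $O(nL\log n + n^2\kappa\log n)$.

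For rounds, the iterations involve supernodes of geometrically growing size, and we run the BA/broadcast inside supernodes in $O(\text{size})$ rounds. The sizes form a geometric series summing to $O(n)$, hence round complexity $O(n)$.

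\textbf{Main obstacle.} The main obstacle is the deterministic extractor-based assignment. We need it to simultaneously give constant-degree committees, per-iteration bad fractions bounded uniformly against every corruption set, and balanced loads. This requires a careful choice of extractor parameters (degree $\mathrm{polylog}$ or $O(1)$ versus error) so that errors do not accumulate. We would first try reusing a single explicit extractor family per level and bounding the bad fraction afresh at each level via the "every bad set" guarantee.
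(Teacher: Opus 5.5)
Your outline follows the paper's proof in all essentials. The shared ingredients are: extractor-based deterministic assignment of parties to supernodes and of supernodes to $\bigO(1)$-size committees; reduction by a constant factor (the paper takes $\supernodefactor=2$, and finite spaces have dilation factor $1$, so values stay $L$-bit); safe-area computations on $\bigO(1)$ values moved by a Reed--Solomon/Merkle/$\ba$ supersend; and a geometric series for the rounds. The paper also re-establishes the bad fraction at every level, as you propose: byzantine new supernodes are bounded via the fresh party-to-supernode extractor, confused ones via bad committees, each at most half the budget by \Cref{thm:big-one}. Starting from singleton supernodes instead of $\initSupernodes^L$ is harmless.

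The step that needs fixing is what makes a committee good. You invoke an ``intra-committee agreement primitive'', but you only show that most committees have at most a ``$1/\omega$-ish'' fraction of bad supernodes. If the committee as a whole must agree, as in the paper (which runs $\supernodesCA^L$ inside the committee and then supersends the result to $S_i$), this is not enough. The committee's virtual parties must be $\ba$-friendly. Nearly a $1/\omega$ fraction of fully byzantine supernodes, plus up to a $1/(\omega'+\varepsilon/2)$ fraction of corruptions inside each good one, exceeds the $\ba$ threshold; for $\omega=\eta=2$ this is about $3/4$ byzantine. That is why the paper maintains $\threshbad$, whose second term $1/(\eta+\varepsilon/3)-1/(\omega'+\varepsilon/2)$ is only $\Theta(\varepsilon)$ when $\omega'=\eta$. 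It is also why the committee extractor uses $\beta=\min(1/\omega,\,1/(\eta+\varepsilon/4)-1/(\omega'+\varepsilon/2))$. Your bad-fraction parameter must be chosen that small.

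Alternatively, as your step 3 seems to allow, let every old supernode of the committee supersend its value directly to the new supernode, and let the new supernode's members compute the safe area. By \Cref{lemma:supersending}, only the receiving supernode then needs to be $\ba$-friendly. That is your property (i), once stated with $\omega'=\max(\omega,\eta)$ instead of $\omega$. The condition ``fewer than a $1/\omega$ fraction bad'' is then exactly what the safe-area lemma needs, and the second term of $\threshbad$ becomes unnecessary. This is a genuine simplification of the paper's route at the same asymptotic cost.

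Three smaller corrections.
\begin{itemize}
\item The committee size is dictated by the extractor degree, not by nonemptiness of the safe area, which only needs fewer than a $1/\omega$ fraction of bad values.
\item The ``decrease $\varepsilon'$ geometrically'' option does not work. Shrinking the slack over $\log n$ levels forces the extractor degree, hence each party's supernode multiplicity, up to $\mathrm{poly}(n)$. The $\bigO(nL)$ per-level term needs multiplicity $\bigO(1)$ and supernode sizes $\bigO(n/N_i)$, not $n/m_i\cdot\mathrm{polylog}$. Only the reset argument with a constant-degree extractor works.
\item Merkle witnesses make the $\kappa$-cost of level $i$ equal to $\bigO(\kappa n^2\log n/N_i)$. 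The last levels, and your final step, therefore cost up to $\bigO(\kappa n^2\log n)$ each. The $\bigO(\kappa n^2\log n)$ total comes from summing this geometric series, not from $\bigO(\log n)$ levels of $\bigO(\kappa n^2)$.
\end{itemize}
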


The next result considers Euclidean spaces. We formally define later what it means for a value in $\mathbb{R}^d$ to be encodable with $L$ bits.
\begin{restatable}{theorem}{MainResultRd}\label{theo:main-R-d}
    For any constants $\varepsilon > 0$ and $d \geq 2$, there is a protocol achieving $\ca$ on $\mathbb{R}^d$ with straight-line convexity assuming that honest parties' inputs can be encoded using $L$ bits, even when up to $t$ of the parties involved are byzantine, where $t < n / (d + 1 + \varepsilon)$ if $L$ is known a priori, and $t < n / (d + 2 + \varepsilon)$ otherwise. The protocol has communication complexity $\bigO(n^{1 + o(1)} L + n^2 \log^{1+o(1)} n  \cdot \kappa)$  and round complexity $\bigO(n)$.
\end{restatable}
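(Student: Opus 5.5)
We prove \Cref{theo:main-R-d} by instantiating the general supernode-reduction protocol of \Cref{section:general-protocol}, and its unknown-$L$ variant from \Cref{section:unknown-L}, on $\mathbb{R}^d$. Here the Helly number is $\omega = d+1$. We use the extractor-based assignments of \Cref{section:extractor-magic}. The argument has three parts: correctness of one reduction iteration, the cost of the whole run, and the handling of unknown $L$.

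\textbf{One iteration.} We maintain a set of supernodes, starting with $n$ singletons. In each iteration an extractor assigns parties to new supernodes and old supernodes to committees. By the extractor property, all but a small fraction of the new supernodes contain enough honest parties to act as a single identity; choose the fraction to be at most $\varepsilon'$ for a suitable $\varepsilon' \ll \varepsilon$. Likewise, all but a small fraction of the committees consist mostly of good supernodes holding feasible values. A good committee runs the safe-area rule on the values of its members, distributing them via the $\ba$-extension techniques. Since fewer than a $1/(d+1+\varepsilon)$ fraction of its members are bad, the Helly-type safe-area argument gives a point in the convex hull of the good members' values. By induction, that point lies in the honest inputs' convex hull. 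After $O(\log n)$ or fewer iterations only a few supernodes remain. A final safe-area computation over them, together with one $\ba$ extension protocol on the chosen value, yields agreement and validity. Resilience degrades by only an $\varepsilon'$ fraction per iteration, and the bound $t<n/(d+1+\varepsilon)$ absorbs this loss.

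\textbf{Cost; the main obstacle.} The difference from \Cref{theo:main-theo-finite} is that safe-area outputs in $\mathbb{R}^d$ need not be input points. They are solutions of linear systems over inputs, so their bit length grows with each application. To control this, I would make each committee use only $O(1)$ values: committee size $\omega\cdot O(1/\varepsilon)$, still a constant. Then each of the $r$ iterations blows the encoding length up by a constant factor, and over the $r$ iterations the length is $L\cdot 2^{O(r)}$. I would therefore shrink the number of supernodes super-polynomially per round, say by a factor $n^{1/r}$, which requires committees of size $n^{\Theta(1/r)}$. Choosing $r = \Theta(\sqrt{\log n})$, or tuning $r$ to $\log\log n$ as the paper hints, makes both the committee fan-out $n^{1/r}$ and the length growth $2^{O(r)}$ equal to $n^{o(1)}$. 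This gives total communication $O(n^{1+o(1)}L)$, plus $O(n^2\kappa\log^{1+o(1)} n)$ for the hash and signature overhead of the extension protocols summed over iterations. The main difficulty is this balancing between committee size (the extractor degree) and the encoding blow-up of safe-area points, together with showing that the extractor parameters of \Cref{section:extractor-magic} remain achievable at degree $n^{o(1)}$.

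\textbf{Rounds and unknown $L$.} Each iteration costs $O(1)$ invocations of an $O(n)$-round primitive, but these operate on geometrically shrinking supernode sets. With the round cost of the committee $\ba$ proportional to committee size, the total is a geometric sum of order $O(n)$. When $L$ is unknown, we first run the length-estimation step of \Cref{section:unknown-L}, which costs one extra unit in the resilience denominator. This gives $t<n/(d+2+\varepsilon)$ with the same asymptotic costs.
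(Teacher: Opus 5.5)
Your architecture is the paper's: fresh extractor-based supernodes, extractor-based committees running a safe-area $\ca$ over supersent values, a final supersend, and the length-estimation front end of \Cref{section:unknown-L}. Note also that $d\ge 2$ gives $\omega=d+1\ge 3$, so $\max(3,\omega)=d+1$ in the unauthenticated case. The genuine gap is in the cost analysis, which is the real content of this theorem.

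You justify a constant per-iteration growth of encodings only for committees that use $O(1)$ values. You then switch to committees of $n^{\Theta(1/r)}$ supernodes, and that justification no longer covers their safe-area step. With $O(1)$-size committees you would instead need $\Theta(\log n)$ iterations and would pay $\delta^{\Theta(\log n)}=n^{\Theta(1)}$.

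The missing ingredient is \Cref{theo:dilation-factor-euclidian}. A vertex of $\safe(S)$ is the unique solution of $d$ active affine constraints. Each constraint comes from a facet of some $\hull{A}$ and is therefore determined by at most $d$ points of $S$. So the vertex can be encoded by referencing at most $d^2$ points of $S$, in at most $4d^2L$ bits, independently of $|S|$. This size-independent dilation factor $\delta=4d^2$ is exactly what lets committees grow while the blow-up per iteration stays constant.

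Relatedly, the extractor degree never has to grow. \Cref{thm:big-one} keeps the left degree a constant $D$ and obtains committees of $D\lfloor N/N'\rfloor$ supernodes by grouping right-hand vertices. So the obstacle you flag about extractors of degree $n^{o(1)}$ does not arise.

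Your parameter balance also fails for the $\kappa$-term. Agreement inside a committee of $n_C$ supernodes of size $n_S$ via supersending costs $\Theta(\kappa n_C^3n_S^2\log n)$ in hash and $\ba$ overhead (\Cref{lemma:supernode-ca}). Hence the last reduction level together with the final committee already costs $\kappa n^2\log n\cdot\supernodefactor^{\Omega(1)}$, where $\supernodefactor$ is the per-iteration shrink factor; the paper's bound is $\bigO(\kappa n^2\log n\cdot\supernodefactor^2)$. With $\supernodefactor=n^{1/r}$ and $r=\Theta(\sqrt{\log n})$ or $r=\log\log n$, this is $\kappa n^{2+\Theta(1/r)}$, not $\bigO(\kappa n^2\log^{1+o(1)}n)$.

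The claimed bound needs two things at once. First, $\supernodefactor\to\infty$, so that $\delta^{\log_{\supernodefactor}n}=n^{\log\delta/\log\supernodefactor}=n^{o(1)}$. Second, $\supernodefactor^2=\log^{o(1)}n$. The paper's $\log\log n$ is the shrink factor $\supernodefactor$, not the number of iterations. It gives $\log n/\log\log\log n$ iterations and, via \Cref{theorem:ca-L-final}, $\bigO(Ln\supernodefactor\log_{\supernodefactor}n\cdot\delta^{\log_{\supernodefactor}n})=\bigO(Ln^{1+o(1)})$. This again works only because $\delta$ does not depend on $\supernodefactor$.

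Finally, a resilience loss of $\varepsilon'$ per iteration cannot be absorbed by constant slack over $\omega(1)$ iterations unless $\varepsilon'$, and with it the extractor degree, depends on $n$. The paper's invariant does not accumulate any loss:
\begin{itemize}
\item New supernodes are drawn afresh from the parties, so at most $\threshbad/2\cdot N'$ of them are byzantine regardless of history (\Cref{cor:nodes-to-supernodes}).
\item Confused supernodes arise only from bad committees, of which there are at most $\threshbad/2\cdot N'$ whenever at most $\threshbad\cdot N$ old supernodes were bad (\Cref{cor:nodes-to-coms}).
\end{itemize}
So the bad fraction stays below $\threshbad$ in every iteration.
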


The finite-space result also enables us to batch several $\ca$ instances at the same time in an efficient manner. The parties can bundle their inputs into one tuple and run the protocol once, instead of running a separate protocol for each instance. The communication then depends on the total input length $L=\sum_i L_i$, while the $\poly(n, \kappa)$ overhead is paid only once.

\begin{restatable}{theorem}{MainResultParallel}\label{theo:main-theo-parallel}
    Consider $q$ finite convexity spaces $\mathcal{C}_1, \mathcal{C}_2, \ldots, \mathcal{C}_q$, each of the elements in $\mathcal{C}_i$ has an $L_i$-bit encoding, and let $\omega_i \in \bigO(1)$ be the Helly number of $\mathcal{C}_i$. Additionally, let $\omega_{\max} = \max \omega_i$ and $L = \sum L_i$.
    Then, for any constant $\varepsilon > 0$, there is a protocol achieving $\ca$ simultaneously in all of $\mathcal{C}_1, \mathcal{C}_2, \ldots, \mathcal{C}_q$ even when up to $t < n / (\omega_{\max} + \varepsilon)$ of the $n$ parties involved are byzantine in authenticated settings and up to $t < n / (\max(3, \omega_{\max}) + \varepsilon)$ in unauthenticated settings, with communication complexity $\bigO(n L \log n + n^2 \log n  \cdot \kappa)$ and round complexity $\bigO(n)$.
\end{restatable}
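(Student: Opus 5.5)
The plan is to reduce \Cref{theo:main-theo-parallel} to \Cref{theo:main-theo-finite} by running a single instance of $\ca$ on a product convexity space. I will define $\mathcal{C} = \mathcal{C}_1 \times \cdots \times \mathcal{C}_q$, whose convex sets are exactly the products $K_1 \times \cdots \times K_q$ with each $K_i$ convex in $\mathcal{C}_i$. This is the standard product convexity: it contains $\emptyset$ and the whole space, and it is closed under intersections because the intersection of products is computed coordinatewise. Each party bundles its inputs $(v_1,\dots,v_q)$ into one tuple. The tuple is encoded by concatenating the $L_i$-bit encodings, for a total of $L=\sum L_i$ bits. Since every $\mathcal{C}_i$ is finite, $\mathcal{C}$ is finite as well, so the hypotheses of \Cref{theo:main-theo-finite} hold as soon as its Helly number is controlled.

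The first key step is to show that the convex hull in $\mathcal{C}$ is the product of the coordinatewise hulls. Let $S \subseteq \mathcal{C}$ have projections $\pi_i(S)$. The product $\prod_i \hull{\pi_i(S)}$ is convex and contains $S$, so $\hull{S}$ is contained in it. Conversely, $\hull{S}$ is itself a product $\prod_i K_i$ of convex sets, and each $K_i$ must contain $\pi_i(S)$, so the two sets are equal. Consequently, an output lying in the $\mathcal{C}$-hull of the honest tuples has each coordinate $i$ in the $\mathcal{C}_i$-hull of the honest $i$-th inputs. Combined with agreement on the tuple, this gives $\ca$ simultaneously in every $\mathcal{C}_i$.

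The second key step, and the one I expect to be the main obstacle, is bounding the Helly number of $\mathcal{C}$ by $\omega_{\max}$. Take any finite family of convex sets $A^j=\prod_i A^j_i$ such that every $\omega_{\max}$ of them intersect. Then for each coordinate $i$, every $\omega_{\max}\ge\omega_i$ of the sets $A^j_i$ intersect, so every $\omega_i$ of them do. By the Helly property of $\mathcal{C}_i$, the whole family $\{A^j_i\}_j$ has a common point $x_i$. The tuple $(x_1,\dots,x_q)$ then lies in every $A^j$. A subtle point is that the empty set is a product with an empty factor; I would handle it by noting that a nonempty product has all factors nonempty. A second subtle point is that the paper may define the Helly number through the Radon-type or safe-area characterization rather than through intersections. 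In that case I would instead argue through the characterization the paper uses, namely the safe-area rule: the safe area of a multiset of tuples is nonempty whenever each coordinate's safe area is nonempty, because the intersection of hulls of subsets factors coordinatewise.

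With $\omega(\mathcal{C}) \le \omega_{\max} = \bigO(1)$, \Cref{theo:main-theo-finite} applies to $\mathcal{C}$ with $L=\sum L_i$. It tolerates $t<n/(\omega_{\max}+\varepsilon)$ corruptions in the authenticated setting and $t<n/(\max(3,\omega_{\max})+\varepsilon)$ in the unauthenticated one, and both thresholds are monotone in the Helly number. It uses $\bigO(nL\log n+n^2\log n\cdot\kappa)$ bits and $\bigO(n)$ rounds, which is exactly the claimed bound.
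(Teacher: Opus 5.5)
Your proposal is correct and takes essentially the same route as the paper: build the product space $\mathcal{C}_1 \times \cdots \times \mathcal{C}_q$ with coordinatewise hulls, bound its Helly number by $\omega_{\max}$ by the same coordinatewise Helly argument (\Cref{theo:parallel-is-easy}), and invoke \Cref{theo:main-theo-finite} with $L = \sum L_i$. Deriving that hulls factor, rather than defining them that way as the paper does, is a slightly cleaner presentation, and the safe-area fallback is unnecessary because the paper defines the Helly number through intersections.
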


Our approach also leads to results beyond $\ca$, most notably for parallel instances of $\ba$. As $\ba$ is a particular case of $\ca$, we obtain the following corollaries for instances with various input lengths, and for instances with binary inputs.

\begin{corollary}
    For any constant $\varepsilon > 0$,
    it is possible to solve $q$ parallel instances of $\ba$ with inputs of size $L_1,L_2, \ldots,L_q$ with near-optimal resilience $t < n/(2 + \varepsilon)$, near-optimal communication complexity $\mathcal{O}(n L \log n)$, and optimal round complexity $\bigO(n)$, assuming $L = \sum L_i = \Omega(n \kappa)$.
\end{corollary}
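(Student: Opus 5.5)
The corollary follows from \Cref{theo:main-theo-parallel} once each $\ba$ instance is recast as a $\ca$ instance on a suitable finite convexity space with Helly number $2$. For the $i$-th instance, let $V_i$ be the set of $L_i$-bit strings and equip it with the \emph{discrete convexity} $\mathcal{C}_i$: a subset of $V_i$ is convex iff it has at most one element or equals all of $V_i$. This family contains $\emptyset$ and $V_i$ and is closed under intersection. The convex hull of a set $S$ is then $S$ itself if $|S| \le 1$ and $V_i$ otherwise. Hence $\ca$ validity on $\mathcal{C}_i$ is exactly $\ba$ validity: if all honest inputs equal $v$, the output must be $v$; otherwise any value of $V_i$ is allowed. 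Agreement and termination coincide directly.

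Next I check that each $\mathcal{C}_i$ has Helly number $\omega_i \le 2$. Let $K_1,\dots,K_m$ be convex sets that pairwise intersect. If every $K_j$ equals $V_i$, the intersection is $V_i \neq \emptyset$. Otherwise some $K_j$ is a singleton $\{x\}$, since pairwise intersection rules out $K_j=\emptyset$ whenever $m\ge 2$. Every other $K_{j'}$ meets $\{x\}$, so it contains $x$, and the whole family intersects at $x$. The degenerate case $|V_i|\le 1$ is trivial. So $\omega_i \le 2$, and I take $\omega_i = 2$ to match the hypothesis form $\omega\ge 2$ from \Cref{theo:main-theo-finite}. Each element is encoded by its $L_i$ bits, and $\mathcal{C}_i$ is finite, so all hypotheses of \Cref{theo:main-theo-parallel} hold with $\omega_{\max}=2$.

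Applying \Cref{theo:main-theo-parallel} gives $\ca$ simultaneously on all $\mathcal{C}_i$, hence $q$ parallel $\ba$ instances. In authenticated settings the resilience is $t<n/(2+\varepsilon)$, in $\bigO(n)$ rounds, with communication $\bigO(nL\log n + n^2\log n\cdot\kappa)$. The assumption $L=\Omega(n\kappa)$ gives $n^2\kappa\log n = \bigO(nL\log n)$, so the total is $\bigO(nL\log n)$. This is within a $\log n$ factor of the $\Omega(nL)$ lower bound, and the $\bigO(n)$ rounds match the round lower bound.

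The only step needing care is the modeling one: discrete convexity must be a legitimate convexity space as the paper defines it, and its hull must reproduce exactly $\ba$ validity, with neither more nor fewer outputs allowed. The resilience $t<n/(2+\varepsilon)$ is the authenticated bound. In unauthenticated settings \Cref{theo:main-theo-parallel} only gives $t<n/(3+\varepsilon)$, so I would state the corollary for the authenticated model.
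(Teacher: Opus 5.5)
Your proposal is correct and follows the paper's route. The paper justifies this corollary only by noting that $\ba$ is a special case of $\ca$ and invoking \Cref{theo:main-theo-parallel}. Your discrete convexity (convex sets are $\emptyset$, singletons, and the whole space, giving Helly number $2$) makes that reduction explicit, and $L = \Omega(n\kappa)$ then absorbs the $n^2 \kappa \log n$ term exactly as you describe. You are also right that the stated resilience $t < n/(2+\varepsilon)$ is the authenticated bound, since in unauthenticated settings the theorem only yields $t < n/(3+\varepsilon)$.
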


\begin{corollary}
    For any $\varepsilon > 0$, it is possible to solve $n \cdot \kappa$ parallel instances of binary $\ba$ with near-optimal resilience $t < n/(2 + \varepsilon)$, near-optimal communication complexity $\mathcal{O}(\kappa \cdot n^2 \log n)$, and optimal round complexity $\mathcal{O}(n)$.
\end{corollary}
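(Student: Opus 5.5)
We derive this corollary from \Cref{theo:main-theo-parallel} by encoding each binary $\ba$ instance as a $\ca$ instance on a small finite convexity space and batching all of them. For each instance $i \in \{1,\dots,n\kappa\}$, let $\mathcal{C}_i$ be the set $\{0,1\}$ with the \emph{discrete} convexity, in which every subset is convex. Then the convex hull of a set of honest inputs is the set itself. Hence if all honest parties hold the same input $v$, any output in the hull equals $v$, which is exactly $\ba$ validity. Agreement and termination carry over unchanged. Each element of $\mathcal{C}_i$ has a $1$-bit encoding, so $L_i = 1$.

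Next we check the Helly number. In the discrete convexity on a set of size $2$, any family of convex sets in which every two members intersect has nonempty total intersection. Indeed, the sets are subsets of $\{0,1\}$; if no element lies in all of them, then some member avoids $0$ and some member avoids $1$. These two members are $\subseteq\{1\}$ and $\subseteq\{0\}$ respectively, so they are disjoint, contradicting pairwise intersection. Thus $\omega_i \le 2$, and we take $\omega_{\max} = 2$, a constant as the theorem requires. The resilience bound becomes $t < n/(2+\varepsilon)$ in the authenticated setting, which is the claimed bound. Since $\kappa$ appears, the authenticated (signatures and hashes) setting is intended.

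Now apply \Cref{theo:main-theo-parallel} with $q = n\kappa$ and $L = \sum_i L_i = n\kappa$. The communication bound is $\mathcal{O}(nL\log n + n^2\log n\cdot\kappa) = \mathcal{O}(n^2\kappa\log n + n^2\kappa\log n) = \mathcal{O}(\kappa\, n^2\log n)$, and the round complexity is $\mathcal{O}(n)$.

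For the optimality claims:
\begin{itemize}
\item Round complexity $\mathcal{O}(n)$ is optimal because deterministic $\ba$ needs $t+1 = \Theta(n)$ rounds.
\item Communication is near-optimal because each of the $n\kappa$ instances has $n$ parties that must learn a bit, and $\Omega(n^2)$ messages are needed even for a single binary $\ba$. This gives an $\Omega(n^2\kappa)$-type bound in the regime where $L = n\kappa$, matching up to the $\log n$ factor.
\end{itemize}

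The only step needing care is confirming that the discrete two-point space satisfies the paper's definition of a finite convexity space with Helly number $2$. Depending on the convention, this means checking that $\emptyset$ and the whole space are convex and that convex sets are closed under intersection, all of which hold trivially here. Everything else is direct substitution into \Cref{theo:main-theo-parallel}.
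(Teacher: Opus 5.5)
Your proposal is correct and follows the paper's route. The paper justifies this corollary only with the remark that $\ba$ is a special case of $\ca$; your argument fills that in by taking the two-point space with Helly number $2$ and applying \Cref{theo:main-theo-parallel} with $q=n\kappa$, $L_i=1$, $L=n\kappa$. One small correction: the authenticated setting is needed because for $\varepsilon<1$ the bound $n/(2+\varepsilon)$ exceeds the $n/3$ barrier of the unauthenticated branch, not because $\kappa$ appears, since the hash function brings in $\kappa$ in the unauthenticated setting too.
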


\subsection{Related Work}
\paragraph{Convex Agreement.}
The strengthened validity condition requiring honest parties' outputs to lie in the honest inputs' convex hull was introduced in the context of Approximate Agreement ($\approximateagreement$) on real numbers \cite{JACM:DLPSW86}. $\approximateagreement$ relaxes the agreement requirement of $\ba$: parties are allowed to output values that differ by at most a predefined error $\varepsilon>0$.
$\approximateagreement$ has prompted a long line of work focusing, e.g., on optimal round complexity~\cite{Fekete90, PODC:Fekete87, BenDoHo10, ARXIV:FuGhPa25}, improved resilience bounds~\cite{OPODIS:AAD04, PODC:GhLiWa22, PODC:LenLos22}, reducing message complexity \cite{MoseArxivNew}. Moreover, variants beyond real values have been considered: $\mathbb{R}^d$~\cite{PODC:VaiGar13,STOC:MenHer13,SPAA:GhLiWa23,PODC:GhMeMi26} as well as from discrete spaces such as graphs ~\cite{DISC:NoRy19, SIROCCO:Alistarh21, PODC:Ledent21, eprint:ConvexWorld, DISC:ACFR19, ARXIV:FuGhPa25}. For a broader overview on $\approximateagreement$, see \cite{SokDiana}.

Convex Agreement ($\ca$) was first defined for $\mathbb{R}^d$ by Vaidya and Garg~\cite{PODC:VaiGar13,DIST:MHVG15}, and was later defined for abstract convexity spaces~\cite{DISC:NoRy19, eprint:ConvexWorld}. A related direction on $\mathbb{R}$ asks for outputs that satisfy a stronger requirement, namely being close to the median of the honest inputs~\cite{OPODIS:StolWat15, OPODIS:CGHWW23}, or to the $k$-th smallest honest input~\cite{IEEE:MelWat18}. 

Each of these prior protocols involves a step where the parties distribute their (input) values, leading to $\Omega(L\cdot n^2)$ bits of communication for $L$-bit inputs. Recently, \cite{PODC:GhiLiuWat25} has proposed a $\ca$ protocol for integers (assuming a hash function), tolerating $t<n/3$ corruptions, with communication $\bigO(Ln+\kappa\cdot n^2\log^2 n)$. Their approach deviates from prior works on $\ca$ and $\approximateagreement$ by relying on a byzantine version of the \emph{longest common prefix} problem: viewing inputs in $\mathbb{Z}$ as bitstrings, the longest common prefix among honest parties pinpoints a subset of the honest input range, which can then be leveraged to realize $\ca$. For general convexity spaces, however, it is unclear whether prefixes/small parts of the inputs' binary encodings can reveal such information, which causes our work to rely on a different approach.

\paragraph{Other extension protocols.} Protocols for long messages have been the subject of a long line of works. Turpin and Coan~\cite{IPL:TurCoa84} have proposed a reduction from long-messages $\ba$ to short-messages $\ba$ with a communication cost of $\bigO(L n^2)$ bits, given $t < n/3$. In the honest-majority regime, Fitzi and Hirt~\cite{PODC:FitHir06} achieved $\ba$ with asymptotically optimal communication $\bigO(Ln+\poly(n,\kappa))$ (given a universal hash function).
Further works have focused on error-free solutions, aiming to reduce the additive $\poly(n,\kappa)$ overhead, both for the $t<n/3$ settings ~\cite{PODC:LiaVai11,PODC:GanPat16, DISC:NRSVX20} and honest-majority settings~\cite{PODC:GanPat16, DISC:NRSVX20, AC:BLLN22}.
The recent work of \cite{PODC:CDGGKV24} has focused on long-message BA with external validity (i.e., values can be verified via a public predicate) whose communication adapts to the actual number of corruptions $f \leq t$, while remaining asymptotically optimal.
Complementarily, \cite{PODC:CDGGKV25} studies multi-shot $\ba$ and related tasks under classical and external validity, and presents protocols with asymptotically optimal amortized communication for long inputs.
Extension protocols have also been studied for closely related primitives, including $\plainbc$ with $t<n$~\cite{AC:HirRay14, TCC:ChoOst18} and asynchronous Reliable Broadcast~\cite{SRDS:CacTes05, DISC:NRSVX20}.

\paragraph{Protocols for short messages.} 
Reducing communication is an interesting topic for short inputs (i.e., a constant number of bits) as well. Dolev and Reischuk~\cite{PODC:DolRei82} proved that any deterministic $\ba$ protocol must exchange $\Omega(t^2)$ messages/bits in the presence of $t$ byzantine faults, and this is tight~\cite{CoaWel92, DISC:MoRe21}. In contrast, randomized $\ba$ protocols may achieve subquadratic communication ~\cite{SODA:KSSV06,king09,PODC:KinSai10,STOC:GelKom24,TCC:BKLL20,PODC:ACDNPR19,DISC:CohKeiSpi20}.

\paragraph{Pseudorandomness.}
Our approach relies heavily on the use of extractors. Extractors are part of what are called \emph{pseudorandom objects} \cite{pseudorandomness} -- these keep properties about random structures while being fully deterministic. Other common pseudorandom graph structures are, for example, expanders, samplers or condensers. Expanders in particular have been already used multiple times to disseminate information with sparse communication \cite{DISC:MoRe21,PODC:CDGGKV24,DISC25:AdaptiveBA}. As far as we are aware, extractors in distributed computing have only been used to improve randomness when only a weak random source is available \cite{FOCS:KLRZ08}.

\paragraph{Parallel instances of $\ba$.}
As a corollary of our main results, we obtain a near-optimal protocol for parallel $\ba$ on $\Omega(n)$ instances. This has already been studied under the name Multidimensional $\ba$ by \cite{MultiDimBA}, who propose a randomized protocol, but do not specify its communication complexity. A closely related problem is parallel Broadcast, also called Interactive Consistency, which can be solved with optimal bit complexity \cite{PODC:CDGGKV24}. We note that parallel $\ba$ implies parallel Broadcast with the same complexity but the opposite is not true. Another similar area is Multishot Agreement, where instances no longer have to be processed at the same time and can depend on each other. Civit et al. \cite{PODC:CDGGKV25} present a protocol for this problem, but it requires $\Omega(n^3)$ instances to become optimal.

\section{Preliminaries}

We describe our model, and present a few key concepts and definitions.

\paragraph{Model.}
We consider a setting with $n$ parties $\party_1, \party_2, \dots, \party_n$ in a fully-connected network where links model authenticated channels. 
The network is synchronous: the parties hold synchronized clocks, and messages are delivered within a publicly known amount of time $\Delta$.

\paragraph{Adversary.}
We consider a computationally-bounded adaptive adversary that can corrupt up to $t$ parties at any point in the protocol's execution, causing them to become byzantine: corrupted parties may deviate arbitrarily from the protocol.

\paragraph{Cryptographic assumptions.}
We consider a security parameter $\kappa$, a collision-resistant hash function $H_{\kappa} : \{0, 1\}^{\star} \rightarrow \{0, 1\}^\kappa$. Informally, $H_\kappa : \{0, 1\}^{\star} \rightarrow \{0, 1\}^\kappa$ is collision-resistant if, for any computationally-bounded adversary $\mathcal{A}$, the probability that $\mathcal{A}(1^{\kappa})$ outputs two values $x \neq y$ such that $H_{\kappa}(x) = H_{\kappa}(y)$ is negligible in $\kappa$ \cite{Rog06}.

We consider both \emph{unauthenticated} settings and \emph{authenticated} settings. The term \emph{unauthenticated} refers to a setting where the only cryptographic assumption is the hash function. In contrast, \emph{authenticated settings}, on top of the hash function, assume a public key infrastructure (PKI) and a secure signature scheme, with security parameter $\kappa$. 

For simplicity of presentation, our proofs assume that the adversary is unable to find any collisions for $H_{\kappa}$ or to forge signatures. When these primitives are replaced with real-world instantiations, our protocols' guarantees still hold except with probability negligible in $\kappa$.

\paragraph{Byzantine Agreement.} Our protocols rely on $\ba$ as a building block, defined below \cite{LaShPe82}.

\begin{definition}[Byzantine Agreement]
Let $\Pi$ be an $n$-party protocol where each party holds a value $v_{\inputt}$ as input, and parties terminate upon generating an output $v_{\outputt}$. $\Pi$ achieves $\ba$ if the following properties hold even when $t$ of the $n$ parties are corrupted:
\textbf{(Termination)} All honest parties terminate;
\textbf{(Validity)} If all honest parties hold the same input value $v$, they output $v_{\outputt} = v$;
\textbf{(Agreement)} All honest parties output the same value.
\end{definition}

\paragraph{Basic convexity notions and $\ca$.}
Before defining $\ca$, we recall a few notions on convexity spaces.
Let $\mathcal{C}$ be a non-empty set of values.
A \emph{convexity notion} on $\mathcal{C}$ is a family $\mathcal{K} \subseteq 2^{\mathcal{C}}$
such that $\varnothing, \mathcal{C} \in \mathcal{K}$ and $\mathcal{K}$ is closed under arbitrary intersections.
Members of $\mathcal{K}$ are called \emph{convex sets}.
For simplicity, we will often refer to $\mathcal{C}$ itself as a \emph{convexity space}, leaving the underlying family $\mathcal{K}$ implicit when it is clear from context.
This framework captures \emph{straight-line convexity} on $\mathcal{C}=\mathbb{R}^d$ by taking $\mathcal{K}$ to be the sets closed under line segments, but it also allows other choices, e.g., \emph{box convexity}, where $\mathcal{K}$ consists of axis-aligned boxes $I_1 \times \cdots \times I_D$ with each $I_i$ a closed interval.
For $S \subseteq \mathcal{C}$, the \emph{convex hull} of $S$, denoted by $\hull{S}$, is the intersection of all convex sets containing $S$, i.e.
$\hull{S} = \bigcap \{ K \in \mathcal{K} : S \subseteq K \}$.
Note that $\hull{S}$ is convex by intersection closure.
We may then state the definition of $\ca$ \cite{PODC:VaiGar13, DISC:NoRy19}. Throughout the paper, \emph{valid value} refers to a value satisfying Convex Validity, as defined below.
\begin{definition}[Convex Agreement]
Let $\Pi$ be an $n$-party protocol where each party holds a value $v_{\inputt}$ as input, and parties terminate upon generating an output $v_{\outputt}$. $\Pi$ achieves $\ca$ if the following properties hold even when $t$ of the $n$ parties are corrupted:
\textbf{(Termination)} All honest parties terminate; \textbf{(Convex Validity)} Honest parties' outputs lie in the honest inputs' convex hull; \textbf{(Agreement)} All honest parties output the same value.
\end{definition}

Optimal resilience thresholds for $\ca$ on a convexity space depend on a feature of the space called the \emph{Helly number}, denoted by $\omega$. In $\mathbb{R}^d$ with standard (line-segment) convexity, Helly's theorem states that if every subcollection of size $D+1$ of a finite family of convex sets intersects, then the whole family intersects \cite{helly}. 
Motivated by this, the \emph{Helly number} $\omega$ of a convexity space $\mathcal{C}$ is the smallest $h$ such that for every finite family $\{C_i\}\subseteq\mathcal{K}$, if every $h$ of the sets intersect, then $\bigcap_i C_i\neq\varnothing$. For example, $\omega=D+1$ for standard convexity in $\mathbb{R}^d$, while $\omega=2$ for box convexity. We note that, if a space has Helly number $\omega = 1$, then there is an element which appears in the convex hull of every set. On such space, $\ca$ can be solved with no communication. Therefore, our paper assumes $\omega \geq 2$.

\paragraph{Resilience and notations.}
We recall that, in the synchronous model, $\ba$ can be achieved up to $t < n/3$ corruptions in unauthenticated settings, and $t < n/2$ corruptions in authenticated settings.
For $\ca$ on a convexity space with Helly number $\omega$, the optimal resilience thresholds are $t < n / \max(3, \omega)$ in unauthenticated settings, and $n / \omega$ in authenticated settings \cite{PODC:VaiGar13, DISC:NoRy19, eprint:ConvexWorld}. $t < n / 3$ is necessary in unauthenticated settings due to \emph{Agreement} for $\ba$ and hence also for $\ca$, while $t < n / \omega$ comes from the stronger validity condition in $\ca$.
For simplicity, as our work considers both authenticated and unauthenticated settings, we define the variable $\eta$, which refers to the denominators in the $\ba$ optimal thresholds: $\eta = 2$ in authenticated settings, and $\eta = 3$ in unauthenticated settings. Then, we define $\omega' = \max(\omega, \eta)$ so that we can focus on providing a protocol solving $\ca$ if $t < n/(\omega' + \varepsilon)$ regardless of the setting.

\paragraph{Bit representations.}
Throughout the paper, even though parties hold values from a convexity space, they must also be manipulated as bitstrings for communication purposes. Since not every convexity space admits an encoding of all its elements, we may have to distinguish between the original convexity space and the subset of values that may appear as inputs.

As such, we always assume that every party is given the same encoding function $\enc : X \to \{0,1\}^\star$, where $X \subseteq \mathcal{C}$ and $\enc$ is injective. The encoding of an element $x \in X$ is defined as $\enc(x)$. We note that this implies that $X$ is at most countably infinite, while $\mathcal{C}$ may not be. Honest parties' inputs are said to be encoded using $L$ bits if they are all in $X$ and for every such input $v \in X$, $|\enc(v)| \leq L$. We call $X$ the input domain of $\mathcal{C}$. 

We note that, in our $\ca$ protocols, parties have to compute intermediate values: these may have encodings that are longer than the honest inputs' bit-length, or may even fall outside of $X$.
Given a convexity space $\mathcal{C}$ with encoding $\enc$ and input domain $X$, we say $\enc^\star : X^\star \to \{0,1\}^\star$ is an \emph{extension} of $\enc$ if $X \subseteq X^\star \subseteq \mathcal{C}$ and, for every $v \in X$, $|\enc^\star(v)| \leq |\enc(v)|+1$. The set $X^\star$ represents possible intermediate outputs during the protocol. Note that $X^\star$ may be strictly larger than $X$.
In our protocols, all elements of $\mathcal{C}$ exchanged between parties will be encoded using such an \emph{extension} of $\enc$ which contains all possible intermediate values. To ensure that such an extension exists and to control the growth of the intermediate values' representations, we introduce the following notion:
\begin{definition}[Dilation factor]\label{def:dilation-factor}
Let $\mathcal{C}$ be an encoded convexity space with Helly number $\omega$, encoding $\enc$ and input domain $X$. For a given $\delta > 0$, we say that $\mathcal{C}$ has dilation factor $\delta$ if there exists a set $X^\star$ and an extension $\enc^\star : X^\star \to \{0,1\}^\star$ of $\enc$ such that $X \subseteq X^\star \subseteq \mathcal{C}$, and for any $L \geq \log(\omega)$ and any non-empty finite multiset $S \subseteq X^\star$ whose elements have $\enc^\star$-encodings of length at most $L$, there is an element $v \in \bigcap_{A \subseteq S,\ |S \setminus A| < |S|/\omega} \hull{A}$ such that $v \in X^\star$ and $|\enc^\star(v)| \leq \delta \cdot L$.
\end{definition}

We say that a convexity space $\mathcal{C}$ with a given encoding has finite dilation number if there is a $\delta > 0$ such that $\mathcal{C}$ has dilation factor $\delta$. We analyze the dilation factor for finite convexity spaces, where the input domain is the whole space, and for straight-line convexity in $\mathbb{R}^d$, where honest inputs come from an encoded input domain $X \subseteq \mathbb{R}^d$.

The next theorem is a direct consequence of the definition of the Helly number and can be obtained using $X^\star=X$ and $\enc^\star=\enc$.
\begin{theorem}\label{theo:dilation-factor-finite}
Let $\mathcal{C}$ be a finite convexity space with Helly number $\omega$, where every element is encoded using $L$ bits. Then $\mathcal{C}$ has dilation factor $\delta := 1$.
\end{theorem}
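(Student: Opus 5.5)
We take $X^\star = X = \mathcal{C}$ and $\enc^\star = \enc$. This is trivially an extension of $\enc$, since $|\enc^\star(v)| = |\enc(v)| \le |\enc(v)|+1$. Every element of $X^\star$ has an encoding of length at most $L$, so any $v \in \mathcal{C}$ automatically satisfies $v \in X^\star$. Moreover, every $S$ in the definition consists of elements of $\enc^\star$-length at most some $L' \ge \log(\omega)$. Given such an $L'$, the only genuine content of \Cref{def:dilation-factor} is therefore that, for every non-empty finite multiset $S \subseteq \mathcal{C}$, the set
\[
I(S) := \bigcap_{A \subseteq S,\ |S \setminus A| < |S|/\omega} \hull{A}
\]
is non-empty. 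The required length bound $|\enc^\star(v)| \le 1 \cdot L'$ should then follow from the fact that all elements use (at most) the same $L$ bits. The one subtlety is when $L'$ is smaller than the global encoding length: $v$ need not be shorter than the elements of $S$. I would read the statement's hypothesis that every element is encoded using exactly $L$ bits as a uniform encoding length, so that any $S$ with elements of length $\le L'$ forces $L' \ge L = |\enc(v)|$.

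The main step is non-emptiness via Helly. Write $m = |S|$ and let $\mathcal{F} = \{\hull{A} : A \subseteq S,\ |S\setminus A| < m/\omega\}$. This is a finite family of convex sets, because $S$ is finite and we are taking sub-multisets. By the definition of the Helly number, it suffices to show that any $\omega$ members $\hull{A_1},\dots,\hull{A_\omega}$ of $\mathcal{F}$ have a common point.

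For this, count removed elements with multiplicity. Each $A_i$ misses fewer than $m/\omega$ elements of $S$, so together they miss fewer than $\omega \cdot m/\omega = m$ elements. Hence some element $s \in S$ lies in every $A_i$, and therefore $s \in \bigcap_i \hull{A_i}$. Helly then gives $I(S) \neq \varnothing$, and any $v \in I(S)$ works.

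I expect the only real obstacle to be bookkeeping. First, the multiset counting must be done carefully: treat $S$ as indexed copies so that "missing" is well defined. Second, the case $\omega$ larger than $|\mathcal{F}|$ is handled because the pigeonhole argument applies to any $k \le \omega$ sets. Third, the length issue above. The convexity argument itself is immediate.
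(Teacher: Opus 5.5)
Your proposal is correct and matches the paper, which takes $X^\star = X$ and $\enc^\star = \enc$ and calls the result a direct consequence of the definition of the Helly number; you make the same choice. Your pigeonhole step (any $\omega$ of the sets $\hull{A}$ contain a common copy of an element of $S$, because together they omit fewer than $|S|$ elements) fills in what the paper leaves implicit, and so does your fixed-length reading of the encoding hypothesis, which is what makes $|\enc(v)| \le L'$ hold automatically.
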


We now consider Euclidean spaces. Note that $\mathbb{R}^d$ is uncountably infinite, which means that any encoding would restrict the parties' input to a strict subset of $\mathbb{R}^d$ which is at most countably infinite. We defer the proof of the theorem below to \Cref{appendix:dilation-factor}.
\begin{restatable}{theorem}{DilationFactorR}\label{theo:dilation-factor-euclidian}
    Let $d > 0$, then for any encoding, $\mathbb{R}^d$ has dilation factor $\delta := 4 d^2$.
\end{restatable}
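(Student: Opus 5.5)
The plan is to build $X^\star$ and $\enc^\star$ as a ``closure'' of $X$ under a geometric description: a new point is named by the encodings of the points of $S$ whose affine spans cut it out. This avoids any assumption on the coordinates of inputs, which matters because $\enc$ is arbitrary: a short $\enc$-string may denote a point with irrational coordinates, so we cannot hope to write intermediate values coordinate-wise. Concretely, define a language $D\subseteq\{0,1\}^\star$ of \emph{descriptions} inductively. A string $0\concat\enc(x)$ with $x\in X$ denotes $x$. A string $1\concat\langle s_{1,1},\dots,s_{k,m}\rangle$ denotes a point $v$ when two conditions hold. First, the strings $s_{i,j}\in D$ are written in a self-delimiting way (e.g.\ bit-doubling plus a terminator), giving at most $2|s|+2$ bits each. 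Second, they are grouped into $k\le d$ groups of at most $d$ strings each, such that the affine hulls of the groups' denoted points intersect in exactly the single point $v$. Let $X^\star$ be the set of denoted points, and let $\enc^\star(v)$ be the shortest (then lexicographically least) description of $v$. We choose $\enc^\star$ this way, rather than the plain $0\concat\enc(v)$, because it makes $\enc^\star$ injective and gives $|\enc^\star(v)|\le|\enc(v)|+1$ for $v\in X$, so $\enc^\star$ is an extension of $\enc$. Moreover, any description of $v$ whose sub-strings are replaced by the $\enc^\star$ codes of the referenced points is still a description, so $|\enc^\star(v)|$ is at most its length.

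Next, fix a finite multiset $S\subseteq X^\star$ whose codes have length at most $L$, and let $\safe(S)=\bigcap_{A\subseteq S,\,|S\setminus A|<|S|/\omega}\hull{A}$ with $\omega=d+1$. First I show $\safe(S)\neq\varnothing$ by the standard Helly argument used in the $\ca$ literature. Any $\omega$ of the sets $\hull{A}$ each miss fewer than $|S|/\omega$ elements, so together they miss fewer than $|S|$ elements (counted with multiplicity). Hence some element of $S$ lies in all of them, and Helly's theorem gives a common point of the whole (finite) family.

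Then I argue that $\safe(S)$ has a vertex with a short description. Work inside the affine hull $F$ of $S$, of dimension $k\le d$. Each $\hull{A}$ is a polytope in $F$ whose facets lie in affine hyperplanes of $F$ spanned by $k$ points of $S$. Therefore $\safe(S)$ is a nonempty bounded polytope in $F$ cut out by finitely many such hyperplanes, together with $F$ itself. A bounded nonempty polyhedron has a vertex $v$, which is the unique intersection of $k$ of these hyperplanes. We use $F$'s spanning set as one more group if $k<d$; in total that is at most $d$ groups, each of at most $d$ points. So $v$ has a description using at most $d^2$ codes of elements of $S$ (for the degenerate case $k<d$, group sizes are $\le k+1\le d$ once we take hyperplanes relative to $F$ and intersect with $F$). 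The resulting length is at most $1+d^2(2L+2)$, which is at most $4d^2L$ since $L\ge\log\omega\ge 1$ and $d\ge1$. Thus $v\in X^\star\cap\safe(S)$ and $|\enc^\star(v)|\le 4d^2L$, which is exactly the required dilation factor.

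The main obstacle is the bookkeeping in this last step. It has two parts. One is making the ``vertex $=$ intersection of at most $d$ flats each spanned by at most $d$ points of $S$'' claim exact in the affinely degenerate case, so that all group sizes stay $\le d$. The other is checking that the recursive, shortest-string definition of $\enc^\star$ is well founded: descriptions are finite strings, and $v$ is determined by the denoted sub-points, so the definition proceeds by induction on string length. If the counting of self-delimiting overhead turns out slightly too loose, I would use a more economical length-prefix encoding, which costs $L+O(\log L)$ bits per code and leaves ample slack below $4d^2L$.
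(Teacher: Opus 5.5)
Your route is sound, but it builds $X^\star$ and $\enc^\star$ differently from the paper. The paper takes the closure of $X$ under a fixed vertex choice $v(\safe(S))$. For each new point it fixes one witnessing multiset $S\subseteq X_{k-1}$ and lists $d$ active constraints, each given by $m_i$, $l_i$ and the $m_i$ points of $S$ spanning a facet; points of $X$ keep $1\concat\enc(x)$. You instead close $X$ under ``unique intersection of at most $d$ affine spans of at most $d$ described points'' and take $\enc^\star$ to be the shortest description.

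That choice buys uniformity. The dilation definition quantifies over every multiset $S$, while a per-point witness only controls the code length relative to that witness. For example, with $d=2$ and $S$ the four vertices of a convex quadrilateral, $\safe(S)$ is exactly the crossing point $q$ of the diagonals. If $q\in X$ has a long $\enc$-code, the paper's $1\concat\enc(q)$ gives no bound, whereas your minimal description does, via your substitution argument. Naming points by affine spans also spares the decoder from reconstructing half-spaces from facet data.

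Two local steps need repair, though neither is a missing idea.

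\textbf{(i) Lower-dimensional hulls.} You claim each $\hull{A}$ is a polytope in $F=\mathrm{aff}(S)$ whose facets lie in hyperplanes of $F$ spanned by $k$ points of $S$. This is false when $\hull{A}$ is lower-dimensional than $F$, e.g.\ many copies of two points plus one point off their line. Your $k<d$ patch addresses a different degeneracy. The conclusion still holds, in either of two ways:
\begin{itemize}
\item Extend affinely independent subsets of $A$ to affine bases of $F$ inside $S$. This writes $\hull{A}$ as an intersection of closed half-spaces of $F$ bounded by hyperplanes each spanned by $k$ points of $S$.
\item More cleanly, for a vertex $v$ of $\safe(S)$, one has $\{v\}=\bigcap_A \mathrm{aff}(F_A)$, where $F_A$ is the minimal face of $\hull{A}$ containing $v$. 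Greedily, at most $d$ of these flats already meet only in $v$. Each chosen flat is proper, hence spanned by at most $d$ points of $S$.
\end{itemize}

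\textbf{(ii) Length count.} The inequality $1+d^2(2L+2)\le 4d^2L$ fails at $L=1$, and your count omits the bits that delimit groups. Both are harmless:
\begin{itemize}
\item For $d=1$ the vertex is itself a point of $S$, so its code has length at most $L$.
\item For $d\ge 2$, the bound $L\ge\log(d+1)$ leaves room for a $\lceil\log d\rceil$-bit header giving $k$. After the header, the description is exactly $k$ groups of $k$ codes.
\item The extra group spanning $F$ is unnecessary, since hyperplanes of $F$ already lie in $F$.
\end{itemize}
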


\paragraph{Communication complexity and round complexity.} 
We denote the communication complexity for $L$-bit inputs of an $n$-party protocol $\Pi$ by $\bitcost_{L, n}(\Pi)$: this is the worst-case total number of bits sent by honest parties if they all hold inputs which can be encoded using at most $L$ bits.
In addition, $\roundcost_{n}(\Pi)$ denotes the worst-case round complexity of $\Pi$. We add that for some of our subprotocols we will use different subscripts for the $\bitcost$ notation, and the meaning will be clear in context.

\section{Protocol for Fixed Inputs' Bit-Length} \label{section:general-protocol}

In this section, we assume a publicly known upper bound $L$ on the honest inputs' bit-length and we present our core $\ca$ protocol $\Pi_{\ca}^L$.  Without this assumption, a byzantine party could follow the protocol correctly with an arbitrarily long input, and force honest parties to relay it, impacting the communication complexity. We remove this assumption in \Cref{section:unknown-L}, where we describe a mechanism that estimates $L$ and enables us to rely on $\Pi_{\ca}^L$, at the cost of  slightly weaker resilience guarantees.

At a high level, $\Pi_{\ca}^L$ relies on reducing the number of inputs from up to $n$ valid inputs to up to a small number $\supernodefactor$ of valid values, where $\supernodefactor$ is defined as $\supernodefactor := 2$ if the input space has dilation factor $\delta = 1$, and $\supernodefactor := \log \log n$ otherwise.
To reduce the number of inputs, we proceed in iterations: the parties are initially arranged into $N := n$ \emph{supernodes} such that each (\emph{good}) supernode holds a valid value. Each iteration rearranges these into $\lfloor N/\supernodefactor \rfloor$ supernodes, while maintaining the same fraction of good supernodes. Once $\supernodefactor$ supernodes/values are reached, we can rely on a safe-area approach to achieve $\ca$.

\paragraph{Supernodes.} Supernodes are (possibly intersecting) groups of parties of equal size -- these should be acting as a single identity. The assignment is known by all parties at all times. We distinguish between three types of supernodes: \emph{byzantine}, \emph{confused} and \emph{good}.

A supernode $S$ consisting of $n_S$ parties is \textbf{byzantine} if it contains at least $\frac{n_S}{\omega' + \varepsilon / 2}$ byzantine parties. This threshold is chosen so that $\ba$ can be achieved within $S$ and $\ca$ can be achieved within committees we will design using supernodes.

For a non-byzantine supernode $S$, we may define the \textbf{input} of $S$: this is a value that all honest parties in $S$ hold for $S$.
Non-byzantine supernodes may not always hold valid inputs, due to past interactions with byzantine parties: we say that a supernode is \textbf{confused} if it is non-byzantine but holds a non-valid input. 
If a supernode is either byzantine or confused, we call it \textbf{bad}.
Finally, if supernode $S$ is neither byzantine nor confused, then it is \textbf{good}.

An invariant we will preserve throughout our protocol is that the proportion of bad supernodes is at most $\threshbad = \min\left(\frac{1}{\omega' + \varepsilon},  \frac{1}{\eta + \varepsilon / 3} - \frac{1}{\omega' + \varepsilon/2} \right)$. Looking ahead, the first part of the threshold $\threshbad$ ensures that committees of supernodes have at most a $1/\omega$-proportion of bad supernodes and can therefore compute a valid output, and the second part is crafted so that $\ba$ can be achieved within our committees.

\paragraph{Initializing the supernodes.} We implement the initial assignment of the $n$ parties into $N := n$ supernodes satisfying the invariant above in a subprotocol $\initSupernodes^L$, which we describe in \Cref{subsection:init-supernodes}. 
$\initSupernodes^L$ prepares the initial assignment using an extractor. Then, to provide an input to each supernode $S$, the parties in supernode $S$ run a $\ca$ protocol.

\paragraph{Reducing the number of supernodes.} 
When reducing the number of supernodes from $N$ to $N' = \lfloor N/ \supernodefactor \rfloor$, we want to maintain the threshold $\threshbad$ of bad supernodes. This step is implemented in $\reduceSupernodes^L$, described in \Cref{subsection:reduce-supernodes}. Roughly, this proceeds in two parts: (1) it defines the new $N'$ supernodes using an extractor, and (2) for each (new) supernode $S_i$, it defines a committee $C_i$ of (old) supernodes that is responsible for computing a valid input for $S_i$. The committee assignment also relies on an extractor. Then, to compute $S_i$'s input, the supernodes in $C_i$ run a subprotocol  $\supernodesCA^L$ that achieves $\ca$ among these in a communication-efficient manner. We describe $\supernodesCA^L$ in \Cref{subsection:supernodes-ca}.

\paragraph{Final step.} Once we reach at most $\supernodefactor$ supernodes, we run the subprotocol $\supernodesCA^L$, which provides each honest party in these supernodes with a valid value in a communication-efficient manner. It is possible, however, that not all honest parties are assigned to supernodes. Hence, in this final step, the parties among the remaining up to $\gamma$ supernodes distribute the final value to all parties via a communication-efficient subprotocol called $\supersend^L$, which is roughly an all-to-all variant of $\plainbc$ that we discuss in \Cref{section:supersending}.

\paragraph{Main protocol.} We present the code of $\Pi_{\ca}^L$ below. We note that the $\supersend^L$ call mentions \emph{virtual} parties -- this will be discussed in the subsequent sections, and comes from the fact that supernodes are not necessarily disjoint.
We also recall that the constant $\delta$ is the dilation factor of the convexity space.

\begin{dianabox}{$\Pi_{\ca}^L$}
	\algoHead{Code for party $\party$ with input $v_{\inputt}$}
	\begin{algorithmic}[1]
    \Statex \textbf{Initial assignment:}
    \State Run $\initSupernodes^L$ to get the initial $N := n$ supernodes. 
    \Statex \textbf{Reducing the number of supernodes from $N$ to up to $\supernodefactor$}:
    \State Set $L := \delta \cdot L$.
    \While{$N \geq \supernodefactor$}
        \State Invoke $\reduceSupernodes^L$ to reassign the parties from $N$ to $N' := \lfloor N / \supernodefactor \rfloor$ supernodes.
        \State $N := N', L := \delta \cdot L$.
    \EndWhile
    \Statex \textbf{Final output:}
    \State If you are part of a final supernode, participate in $\supernodesCA^L$ and obtain $v_{\outputt}$.
    \State The virtual parties in the final supernodes invoke $\supersend^{\delta L}$ with input $v_{\outputt}$ as senders, and all $n$ parties participate as receivers. Output the value received from $\supersend^{\delta L}$.
	\end{algorithmic}
\end{dianabox}

$\Pi_{\ca}^L$ enables us to state the following results for spaces with dilation factor $\delta := 1$ and for spaces with finite dilation factor. We defer the proofs for these theorems, along with the formal analysis of $\Pi_{\ca}^L$, to  \Cref{subsection:ca-L-analysis}.

\begin{theorem}\label{theo:abstract-for-1-fixed-L}
Consider an arbitrary constant $\varepsilon > 0$ and a convexity space $\mathcal{C}$ with constant Helly number $\omega \geq 2$ and dilation factor $\delta = 1$. Then, there is a protocol $\Pi_{\ca}^L$ achieving $\ca$ on $\mathcal{C}$ whenever: honest parties hold $L$-bit inputs from $\mathcal{C}$, up to $t$ of the $n$ parties involved are byzantine, where $t < n / (\max(3, \omega) + \varepsilon)$ in unauthenticated settings and $t < n / (\omega  + \varepsilon)$ in authenticated settings. $\Pi_{\ca}^L$ has communication complexity $\bitcost_{L, n}(\Pi_{\ca}^L) =  \bigO(Ln \log n + \kappa \cdot n^2 \cdot \log n)$ and round complexity $\roundcost_{n}(\Pi_\ca^L) = \bigO(n)$.
\end{theorem}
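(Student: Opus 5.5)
The proof composes the guarantees of the three subprotocols of $\Pi_{\ca}^L$. I assume their (forward-referenced) properties as established in \Cref{subsection:init-supernodes}, \Cref{subsection:reduce-supernodes}, \Cref{subsection:supernodes-ca} and \Cref{section:supersending}. Since $\delta = 1$, we have $\supernodefactor = 2$, and the bit-length parameter $L$ never grows across iterations. Correctness is shown by induction on the iterations, with the following invariant. At the start of each iteration all honest parties agree on the current assignment of $N$ supernodes. Every non-byzantine supernode has a well-defined input, i.e., a value held by all its honest members and encoded with at most $L$ bits. At most an $\threshbad$ fraction of the supernodes are bad.

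\textbf{Base case.} $\initSupernodes^L$ establishes the invariant for $N = n$. Here I would invoke its lemma: the extractor guarantees that few supernodes contain at least $n_S/(\omega'+\varepsilon/2)$ byzantine parties. The in-supernode $\ca$ then makes every non-byzantine supernode good, since its honest members output a value inside the hull of their own, hence honest, inputs.

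\textbf{Inductive step.} Apply the $\reduceSupernodes^L$ lemma. The new supernodes are again mostly non-byzantine by the extractor property. The committees are built with a second extractor, so most committees have at most an $\threshbad$-type fraction of bad old supernodes. Within such a committee, $\supernodesCA^L$ outputs a value in the hull of the good members' inputs. That hull lies in the honest-input hull by the safe-area argument, which needs $\threshbad \le 1/(\omega'+\varepsilon)$ together with the Helly-number bound (\Cref{theo:dilation-factor-finite}). The output also has encoding length at most $\delta L = L$.

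Adaptive corruption is harmless here because the extractor guarantees are worst-case over all corruption sets of size $t$. A party corrupted later only turns a good supernode bad within the already-counted budget, so the counting argument applies to the final corruption set.

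\textbf{Termination and output.} After $\bigO(\log n)$ halvings, $N < 2$, so a final supernode exists and is good by the invariant. Its honest members run $\supernodesCA^L$ and obtain a common valid value. $\supersend^{L}$ then delivers this value to all $n$ parties, which gives Agreement and Convex Validity. Any deviation in the final committee, for example a byzantine final supernode, is excluded because $\threshbad < 1/2$ forces the single remaining supernode to be good. If the final step keeps up to $\supernodefactor$ supernodes, a majority of good ones suffices.

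\textbf{Complexity.} I sum over iterations. The key claim to take from the subprotocol lemmas is that an iteration with $N$ supernodes costs $\bigO(nL + n^2\kappa)$ bits. The reason is that each party belongs to $\bigO(1)$ supernodes or committees, since extractor degrees are constant for constant $\varepsilon$. Moreover, $\supernodesCA^L$ on constant-size committees uses extension-protocol dissemination, so each party handles $\bigO(L)$ value bits plus $\bigO(n\kappa)$ for hashes and signatures. With $\bigO(\log n)$ iterations the total is $\bigO(Ln\log n + \kappa n^2\log n)$. $\initSupernodes^L$ and $\supersend^L$ fit within the same bound.

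For rounds, the committee sizes make each iteration cost rounds proportional to the sizes of the supernodes involved. Those sizes must be geometrically controlled, shrinking or growing in a balanced way, so that the sum is $\bigO(n)$ rather than $\bigO(n\log n)$. \emph{This round bound is the main obstacle:} a naive bound of $\bigO(n)$ per iteration gives $\bigO(n\log n)$. I would first try to show that, in iteration $j$, the $\ba$ within supernodes runs in rounds linear in the supernode size, and that these sizes form a geometric series summing to $\bigO(n)$, using the extractor parameters from \Cref{section:extractor-magic}.
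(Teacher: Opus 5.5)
Your correctness argument follows the paper's proof. It runs an induction on the invariant through \Cref{lemma:init-supernodes} and \Cref{lemma:reduce-supernodes}. With $\supernodefactor=2$ the loop ends at $N=1$, so the single final supernode must be good. Then $\supernodesCA^L$ runs, followed by $\supersend^L$.

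\textbf{Communication gap.} Your accounting rests on a false step. You claim an iteration with $N$ supernodes costs $\bigO(nL+n^2\kappa)$ bits, because each party handles only $\bigO(n\kappa)$ hash/signature bits. This contradicts \Cref{lemma:supersending}, which you are assuming. Every Reed--Solomon share travels with a Merkle witness of $\bigO(\kappa\log n_B)$ bits. So each supersend among supernodes of size $\Theta(n/N)$ has $\kappa$-cost of order $\kappa (n/N)^2\log(n/N)$. Accordingly, \Cref{lemma:reduce-supernodes} gives $\bigO(Ln\supernodefactor+\kappa (n^2/N)\log n\cdot\supernodefactor^2)$ per iteration.

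In the last iterations, $N=\bigO(1)$ and supernodes have size $\Theta(n)$, so an iteration costs $\Theta(\kappa n^2\log n)$. Multiplying that worst case by the $\Theta(\log n)$ iterations only yields $\bigO(\kappa n^2\log^2 n)$, which misses the claimed bound. You reach the right total through two compensating errors: the $\log n$ in the $\kappa$-term comes from the witness length, not from the number of iterations.

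The missing step is that iteration $i$ has $\kappa$-cost $\Theta(\kappa n\cdot(n/N_i)\log n)$, proportional to the supernode size $n/N_i=\Theta(2^i)$. These costs therefore form a geometric series, $\sum_i\kappa n(n/N_i)\log n=\bigO(\kappa n^2\log n)$, dominated by the last iteration. This is exactly the mechanism you flag for rounds, and you need it for communication as well. Only the $L$-term is uniform across iterations: it is $\bigO(Ln)$ each, since $\delta=1$ keeps $L$ fixed. That uniform term is what produces $\bigO(Ln\log n)$.

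\textbf{Rounds and smaller points.} The round bound, which you leave as an open plan, also follows directly from a lemma you already assume. \Cref{lemma:reduce-supernodes} gives $\bigO(n\supernodefactor/N)$ rounds per iteration, so the loop takes $\sum_i\bigO(2^i)=\bigO(n)$ rounds. The initialization runs on constant-size supernodes, and the final $\supernodesCA^L$ and $\supersend^L$ add at most $\bigO(n)$.

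There are also some smaller imprecisions:
\begin{itemize}
\item A committee is good only if it has fewer than a $1/\omega$-fraction of bad supernodes \emph{and} is $\ba$-friendly. This is why the committee extractor uses $\beta=\min(1/\omega,\,1/(\eta+\varepsilon/4)-1/(\omega'+\varepsilon/2))$ rather than an ``$\threshbad$-type'' fraction.
\item ``A majority of good ones suffices'' would be wrong for $\omega>2$, though that case never arises here.
\item Non-emptiness of the safe area comes from \Cref{lemma:safe-area}, not \Cref{theo:dilation-factor-finite}.
\item The resilience split in the statement comes from instantiating $\Pi_{\ba}$ with \Cref{thm:magic-ba-no-pki}, or with \Cref{thm:magic-ba-pki} in authenticated settings with $\omega=2$. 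Your proposal does not address this.
\end{itemize}
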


\begin{theorem}\label{theo:abstract-for-finite-fixed-L}
Consider an arbitrary constant $\varepsilon > 0$ and a convexity space $\mathcal{C}$ with constant Helly number $\omega \geq 2$ and finite dilation factor $\delta$. Then, there is a protocol $\Pi_{\ca}^L$ achieving $\ca$ on $\mathcal{C}$ whenever: honest parties hold $L$-bit inputs from $\mathcal{C}$, up to $t$ of the $n$ parties involved are byzantine, where $t < n / (\max(3, \omega)  + \varepsilon)$ in unauthenticated settings and $t < n / (\omega  + \varepsilon)$ in authenticated settings. $\Pi_{\ca}^L$ has communication complexity $\bitcost_{L, n}(\Pi_{\ca}^L) =  \bigO(Ln^{1+o(1)} + \kappa \cdot n^2 \cdot \log^{1+o(1)} n)$ and round complexity $\roundcost_{n}(\Pi_\ca^L) = \bigO(n)$.
\end{theorem}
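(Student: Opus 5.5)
We prove \Cref{theo:abstract-for-finite-fixed-L} by analyzing $\Pi_{\ca}^L$ with $\supernodefactor := \log\log n$. We assume the guarantees of the three subprotocols, to be established in the following subsections. $\initSupernodes^L$ produces $N=n$ supernodes, at most an $\threshbad$-fraction of them bad. $\reduceSupernodes^L$ maps $N$ supernodes whose values have length at most $L$ to $\lfloor N/\supernodefactor\rfloor$ supernodes, preserves the $\threshbad$ invariant, and outputs values of length at most $\delta L$. $\supernodesCA^L$ and $\supersend^L$ achieve $\ca$ and dissemination among the final $\supernodefactor$ supernodes. The proof has two parts: correctness by induction over iterations, and a cost accounting in which the dilation blow-up is balanced against the number of iterations.

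\textbf{Correctness.} The loop invariant is that after each iteration the current supernodes satisfy the bad-fraction bound $\threshbad$, and that every good supernode holds a valid value whose $\enc^\star$-encoding has length at most the current $L$. The length part follows from \Cref{def:dilation-factor}. A committee decides via a safe-area rule on at most $\supernodefactor$ values, each of length $\le L$. The definition then gives a point in the safe area of length $\le \delta L$, and this point lies in the hull of the good committee members' values, hence in the honest inputs' hull. At termination at most $\supernodefactor$ supernodes remain, and the bad fraction is still at most $\threshbad\le 1/(\omega'+\varepsilon)$. So $\supernodesCA^L$ gives all honest parties in the final supernodes a common valid value, and $\supersend^{\delta L}$ delivers it to everyone. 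This yields Agreement and Convex Validity, and Termination is immediate.

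\textbf{Complexity.} The number of iterations is $k=\log_{\supernodefactor} n=\log n/\log\log\log n = o(\log n)$. The length parameter therefore grows to at most $\delta^{k} L = 2^{O(\log n/\log\log\log n)}L = n^{o(1)}L$, since $\delta$ is constant. Iteration $i$ costs $\bitcost(\reduceSupernodes^{\delta^i L})$. We will bound this per iteration by $O(\supernodefactor\cdot \delta^i L\, n\cdot \mathrm{polylog})$ plus a $\kappa$ term of order $n^2\supernodefactor^{O(1)}$. This relies on extractor degrees being polylogarithmic in $\supernodefactor$, or at least $n^{o(1)}$, and on committees of size $O(\supernodefactor)$. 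Summing the $L$-terms gives $k\cdot n^{o(1)}\cdot n L = n^{1+o(1)}L$. For the $\kappa$-terms, we use that the number of supernodes shrinks geometrically by a factor $\supernodefactor$. The $\kappa$-overhead is then dominated by the first iterations and by $\initSupernodes^L$, totalling $\kappa n^2\log^{1+o(1)} n$. Rounds: each iteration costs $O(\text{supernode size})$ rounds, and supernode sizes grow like $n/N$, so the total round count is a geometric sum bounded by $O(n)$. $\initSupernodes^L$ and the final steps also use $O(n)$ rounds.

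\textbf{Main obstacle.} The hard step is the per-iteration cost bound for $\reduceSupernodes$ with $\supernodefactor=\log\log n$. The extractor parameters (left degree, committee size) must be small enough that the multiplicative overhead per iteration, compounded with $\delta^i$, stays $n^{o(1)}$. At the same time they must give the invariant preservation with the fixed constant slack $\varepsilon$. I would first try to take these from \Cref{section:extractor-magic} with degree $\mathrm{poly}(\supernodefactor)$. Then I would verify that $\prod_i \delta\cdot\mathrm{poly}(\supernodefactor)$ over $o(\log n)$ iterations, together with the extra $\log^{o(1)}n$ factors, matches the claimed exponents.
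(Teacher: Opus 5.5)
Your skeleton matches the paper's proof: an invariant carried by induction, per-iteration costs summed as in \Cref{theorem:ca-L-final}, and then $\supernodefactor=\log\log n$ so that $\delta^{\log_{\supernodefactor} n}=n^{o(1)}$. The gap is in the step you call the main obstacle: you leave it unproven, and the plan you sketch for it would fail. You propose extractor degree $\mathrm{poly}(\supernodefactor)$ and then checking that $\prod_i \delta\cdot\mathrm{poly}(\supernodefactor)$ over the $\log_{\supernodefactor} n$ iterations is $n^{o(1)}$. But $\mathrm{poly}(\supernodefactor)^{\log_{\supernodefactor} n}=n^{\Theta(1)}$, so any overhead polynomial in $\supernodefactor$ that compounds across iterations destroys the bound.

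The missing point is that nothing except the value length compounds. \Cref{thm:big-one} is applied with $\alpha,\beta,\mu$ depending only on $\omega$ and $\varepsilon$, so the degree $D(\omega,\varepsilon)$ is a constant, independent of $n$ and $\supernodefactor$. Committees have $\Theta(D\supernodefactor)$ members only because right-hand vertices are grouped into blocks of $\lfloor N/N'\rfloor$. Moreover, each iteration reassigns all $n$ parties from scratch via \Cref{cor:nodes-to-supernodes}. Hence at every stage each supernode has $D\lfloor n/N\rfloor$ parties and each party lies in at most $D$ supernodes. This is what lets \Cref{lemma:reduce-supernodes} give an additive per-iteration cost of $\bigO(\delta^{i+1}L\cdot n\supernodefactor+\kappa\,\supernodefactor^2(n^2/N)\log n)$, in which only $\delta^{i+1}$ compounds with $i$.

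Your $\kappa$-accounting also runs in the wrong direction. By \Cref{lemma:supersending}, the $\kappa$-dependent cost of $\supersend$ is $\kappa n_B\log n_B(n_A+n_B)$. This is quadratic in the group sizes, which grow like $n/N$. So the $\kappa$-term of an iteration scales like $\kappa\supernodefactor^2(n^2/N)\log n$ and increases geometrically as $N$ shrinks. The total is therefore dominated by the last iterations and by the final $\supernodesCA$ and $\supersend$ calls, each costing at most about $\kappa n^2\supernodefactor\log n$. It is not dominated by the first iterations or by $\initSupernodes^L$, which costs only $\bigO((L+\kappa)n)$.

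A uniform per-iteration bound does not rescue this either. Your $\kappa n^2\supernodefactor^{\bigO(1)}$ drops the $\log n$ factor from witnesses of $\kappa\log n_B$ bits, since $n_B=\Theta(n)$ in the last iterations. Once that factor is restored, multiplying by $\log n/\log\log\log n$ iterations gives at least $\kappa n^2\log^2 n/\log\log\log n$, which is not $\kappa n^2\log^{1+o(1)}n$. You need the geometric sum, taken in the correct direction, to obtain $\bigO(\kappa n^2\log n\cdot\supernodefactor^2)$.

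Finally, for the last $\supernodesCA$ call you must also check that the virtual parties of the final supernodes form a $\ba$-friendly group. Their byzantine fraction is at most $\threshbad+1/(\omega'+\varepsilon/2)\le 1/(\eta+\varepsilon/3)$, which is what the second term of $\threshbad$ is for. Showing only that fewer than a $1/\omega$-fraction of supernodes are bad is not enough.
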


These results already enable us to discuss some of the claims in our paper's introduction. \Cref{theo:main-theo-finite} becomes a consequence of \Cref{theo:abstract-for-1-fixed-L} as, 
according to \Cref{theo:dilation-factor-finite}, spaces where each element has an $L$-bit label have dilation factor $\delta = 1$. 

Further, recall \Cref{theo:main-theo-parallel}, which states that multiple convex agreement instances can be efficiently run in parallel. This is because one can obtain a single large convex space as a product of convex spaces (see \Cref{theo:parallel-is-easy} below) and run a single convex agreement on this large space, ``automatically'' obtaining a convex agreement along each axis. 
\begin{restatable}{theorem}{ParallelIsEasy} \label{theo:parallel-is-easy}
    Let $q \geq 1$, and consider $q$ finite convexity spaces $\mathcal{C}_1, \mathcal{C}_2, \ldots, \mathcal{C}_q$. Assume that convexity space $\mathcal{C}_i$ has Helly number $\omega_i$, and that its elements can be represented using $L_i$ bits, and let $\omega_{\max} = \max \omega_i$ and $L = \sum L_i$.
    Then, there is a finite convexity space $\mathcal{C}'$ with Helly number at most $\omega_{\max}$ whose elements can be represented using $L$ bits such that: $\mathcal{C}'$ can be used to represent the convexity relation on all spaces $\mathcal{C}_1, \mathcal{C}_2, \ldots, \mathcal{C}_q$ at the same time.
\end{restatable}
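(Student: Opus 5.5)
The plan is to take $\mathcal{C}'$ to be the product space $\mathcal{C}_1 \times \cdots \times \mathcal{C}_q$ equipped with the \emph{product convexity}. Concretely, $\mathcal{K}'$ consists of all products $K_1 \times \cdots \times K_q$ with $K_i \in \mathcal{K}_i$; since the empty set is such a product, this family contains $\varnothing$. Encoding is coordinatewise: $\enc'(x_1,\dots,x_q) = \enc_1(x_1)\concat\cdots\concat\enc_q(x_q)$. Each $\enc_i$ is padded to exactly $L_i$ bits, or else we use a fixed-length layout; this keeps the encoding injective and uniquely parsable, with total length $L=\sum L_i$. Finiteness of $\mathcal{C}'$ is immediate.

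\textbf{Convexity axioms.} Nonempty products are closed under arbitrary intersections, since $\bigcap_j \prod_i K_{i,j} = \prod_i \bigcap_j K_{i,j}$. Any intersection that is empty is $\varnothing \in \mathcal{K}'$. Finally, $\mathcal{C}' = \prod_i \mathcal{C}_i$ belongs to $\mathcal{K}'$.

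\textbf{Hulls factor, so agreement on $\mathcal{C}'$ yields agreement on every $\mathcal{C}_i$.} For $S \subseteq \mathcal{C}'$, let $\pi_i$ denote the $i$-th coordinate projection. I will show $\hull{S} = \prod_i \hull{\pi_i(S)}$.
\begin{itemize}
\item[$\supseteq$:] Any convex $K=\prod K_i \supseteq S$ with $S\neq\varnothing$ satisfies $K_i \supseteq \pi_i(S)$, hence $K_i \supseteq \hull{\pi_i(S)}$.
\item[$\subseteq$:] The right-hand side is itself a convex set containing $S$.
\end{itemize}
Consequently, an output in $\hull{S}$ has each coordinate in $\hull{\pi_i(S)}$. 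If $S$ is the set of honest parties' tuples, then $\pi_i(S)$ is exactly the set of honest inputs for instance $i$. This is the sense in which $\mathcal{C}'$ "represents all spaces at once".

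\textbf{Helly number (the main obstacle).} Take a finite family $\{K^{(j)}\}_j \subseteq \mathcal{K}'$ in which every $\omega_{\max}$ members intersect. If some $K^{(j)}$ is empty, then every $\omega_{\max}$-subfamily containing it has empty intersection; so when the family has at least $\omega_{\max}$ members, all members are nonempty. When the family has fewer members, the hypothesis applied with repetitions already gives a common point, so that case is trivial. Otherwise write $K^{(j)} = \prod_i K^{(j)}_i$ with every $K^{(j)}_i$ nonempty; this decomposition is unique. Fix a coordinate $i$. Any $\omega_i \le \omega_{\max}$ of the sets $K^{(j)}_i$ are the $i$-th projections of at most $\omega_{\max}$ members of the family, and those members share a common point. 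So every $\omega_i$ of the $K^{(j)}_i$ intersect, and by the definition of $\omega_i$ the whole family $\{K^{(j)}_i\}_j$ has a common point $x_i$. The tuple $(x_1,\dots,x_q)$ then lies in every $K^{(j)}$, which gives Helly number at most $\omega_{\max}$.

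The point needing care is the handling of empty sets and of families with fewer than $\omega$ members. I will resolve this by reading the Helly condition with repetitions allowed, which is the standard convention.
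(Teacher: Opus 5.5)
Your proposal is correct and follows the same route as the paper: the product space with product convex sets, the observation that hulls factor coordinatewise, and a coordinate-by-coordinate Helly argument using $\omega_i \le \omega_{\max}$. It is more careful than the paper's proof, which defines the convexity only through the hull formula and does not address empty members or families with fewer than $\omega_{\max}$ sets. One point that you and the paper both leave loose is the $L$-bit encoding. Zero-padding variable-length codes is not injective, and a fixed-length $L_i$-bit code requires $|\mathcal{C}_i| \le 2^{L_i}$; otherwise a self-delimiting layout costs $O(q)$ extra bits, which is harmless asymptotically.
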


We prove this theorem in \Cref{section:parallel-composition}.

\subsection{Building Block: $\ba$ within Supernodes}\label{section:ba-main}
We rely on a $\ba$ protocol $\Pi_{\ba}$ that achieves optimal communication for short messages: this will be run in various settings, e.g., on sets of \emph{supernodes}. As a party may belong to multiple supernodes, the same party may appear in the protocol multiple times, hence simulating multiple \emph{virtual parties}. We say that a virtual party is honest if it is simulated by an honest party, and that a virtual party is byzantine if it is simulated by a byzantine party. 
We then consider the following property:

\vspace{0.2em}
\noindent \textbf{(Multi-Participation Safety)} If $\Pi_{\ba}$ achieves $\ba$ in an $n$-party setting where up to $t$ parties  are byzantine, then it also achieves $\ba$ even when it runs in a setting of $n$ \emph{virtual} parties out of which up to $t$ may be simulated by byzantine parties.
\vspace{0.2em}

In addition, we will be running $\Pi_{\ba}$ in settings where the number of (virtual) byzantine parties may exceed the number of corruptions that $\Pi_{\ba}$ can tolerate. We need to ensure that it maintains its communication complexity and round complexity even in such settings.

\vspace{0.2em}
\noindent \textbf{(Extra-Corruptions Safety)} If $\Pi_{\ba}$ has round complexity $\roundcost_{n}(\Pi_{\ba})$ and the honest parties send $\bitcost_{L, n}(\Pi_{\ba})$ bits in total by round $\roundcost_{n}(\Pi_{\ba})$ in an $n$-party setting where up to $t$ parties are byzantine, then: in a setting of $n$ (virtual) parties where the number of \emph{(virtual)} byzantine parties exceeds $t$, honest parties still complete the execution of  $\Pi_{\ba}$ within $\roundcost_{n}(\Pi_{\ba})$ rounds, and still send  $\bitcost_{L, n}(\Pi_{\ba})$ bits in total by round $\roundcost_{n}(\Pi_{\ba})$.
\vspace{0.2em}

These properties naturally apply to many unauthenticated protocols, and also to authenticated protocols that rely solely on PKI and signatures (notably, without threshold signatures). 
As our resilience threshold is $t < n / \omega' + \varepsilon = n / \max(\omega, \eta) + \varepsilon$ for any $\varepsilon > 0$, we can use the protocol of \Cref{thm:magic-ba-no-pki} for $\omega \geq 3$ or in unauthenticated settings. In authenticated settings, for spaces with  $\omega = 2$, the resilience threshold exceeds $n/3$, and hence we cannot rely on \Cref{thm:magic-ba-no-pki} -- in this case, our resilience threshold enables the protocol of \Cref{thm:magic-ba-pki}. We discuss the details regarding the two additional properties for these protocols in \Cref{section:ba:extra-properties}. We also note that \cite{DISC:MoRe21} analyzes the communication complexity of their protocol assuming inputs of length $L = \bigO(\kappa)$, and we extend their analysis to general $L$ in \Cref{sec:appendix:ba communication}.

\begin{theorem}[Theorem 3 of \cite{UnauthenticatedBA}]\label{thm:magic-ba-no-pki}
    There is an unauthenticated $\ba$ protocol $\Pi_{\ba}$ resilient against $t < n/3$ corruptions that additionally achieves Multi-Participation Safety and Extra-Corruptions Safety. $\Pi_{\ba}$ has round complexity $\roundcost_{n}(\Pi_{\ba}) = \bigO(n)$ and communication complexity $\bitcost_{L, n}(\Pi_{\ba}) = O(L n^2)$.
\end{theorem}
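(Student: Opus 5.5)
Since this theorem is imported from \cite{UnauthenticatedBA}, the plan is not to re-prove agreement and validity. Instead we take the deterministic unauthenticated protocol given there, which achieves $\ba$ for $t<n/3$ with $\bigO(n)$ rounds, and check two things about it. First, its $\bigO(n^2)$ message bound becomes $\bigO(Ln^2)$ bits for $L$-bit values. Second, it has the two extra properties, Multi-Participation Safety and Extra-Corruptions Safety. Both properties are about how the protocol behaves in degenerate settings, so the argument is a syntactic inspection of the protocol plus a small hardening step. It uses no new combinatorics.

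\emph{Communication for general $L$.} Every message in the cited protocol carries either $\bigO(1)$ control information or one candidate value. Candidate values are always honest inputs or values received from others. We will add the rule that honest parties discard any received value that is not a well-formed $L$-bit string. With this rule, each message has $\bigO(L)$ bits. The $\bigO(n^2)$ message count then gives $\bigO(Ln^2)$ bits.

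\emph{Multi-Participation Safety.} The protocol is deterministic and unauthenticated, and its analysis treats each identity as an endpoint of an authenticated point-to-point channel. An honest physical party simulating several virtual parties runs each copy independently. It delivers messages between its own copies locally, and tags every outgoing message with the virtual identity that sends it. A byzantine physical party controlling several virtual parties gives the adversary nothing beyond jointly controlling those virtual identities. The standard adversary already has this power, since byzantine parties may coordinate arbitrarily and no cryptographic secret is tied to physical parties. Therefore any execution with at most $t$ byzantine virtual parties is an execution of the original protocol with at most $t$ corruptions, and $\ba$ follows.

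\emph{Extra-Corruptions Safety.} This is the step I expect to need the most care. We must show that neither the honest parties' round count nor their total sent bits depend on how many parties are byzantine. We enforce this by a hardening step that does not affect correctness when at most $t$ parties are byzantine:
\begin{enumerate}
\item every honest party follows a fixed round schedule of length $\roundcost_n(\Pi_{\ba})$, and at the end it outputs a default value if it has not decided;
\item in every round, each honest party sends each message type at most once per recipient, with size bounded as above;
\item malformed messages and duplicate messages are ignored.
\end{enumerate}
The main obstacle is protocol steps that are triggered by received messages, for example ``upon receiving $n-t$ echoes, send ...''. With extra corruptions, an adversary could try to fire such a trigger repeatedly. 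I would handle this by showing that each trigger in the cited protocol fires at most once per phase for each honest party, and enforcing this with an explicit flag. After this, the honest sending pattern is bounded by the schedule alone, so the bit and round bounds hold regardless of the number of byzantine virtual parties.
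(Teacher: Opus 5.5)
Your proposal follows the same route as the paper's own justification in \Cref{section:ba:extra-properties}. In both, the protocol of \cite{UnauthenticatedBA} is taken as given, and virtual parties are simulated by local delivery plus tagged messages, so the simulated run is a standard run in which a virtual party is byzantine exactly when its simulator is. The round bound comes from a timeout with a default output. The bit bound comes from the cited protocol's message pattern being fixed in advance with bounded-size messages; note that your generic caps (1)--(3) alone only give $\bigO(Ln^3)$, so this protocol-specific fact is what carries $\bigO(Ln^2)$, exactly as in the paper.

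One detail to tighten is the tagging. Tag each message with the sender's \emph{and} the recipient's positions within their simulators' $\bigO(1)$-size lists of virtual identities, as the paper does, so the overhead is $\bigO(1)$ bits per message. A global virtual identity costs $\Theta(\log m)$ bits among $m$ virtual parties, which breaks $\bigO(Ln^2)$ for short inputs such as the binary $\Pi_{\ba}$ instances in $\supersend^L$. Without a recipient tag, a party simulating several virtual parties cannot route the messages it receives.
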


\begin{theorem}[Theorem 2 of \cite{DISC:MoRe21}]\label{thm:magic-ba-pki}
    There is an authenticated $\ba$ protocol $\Pi_{\ba}$ resilient against $t < (1/2 - \varepsilon) \cdot n$ corruptions, where $\varepsilon > 0$ is a positive constant, that additionally achieves Multi-Participation Safety and Extra-Corruptions Safety. $\Pi_{\ba}$ has round complexity $\roundcost_{n}(\Pi_{\ba}) = \bigO(n)$ and communication complexity $\bitcost_{L, n}(\Pi_{\ba}) = \bigO((L + \kappa) \cdot n^2)$, where $\kappa$ denotes the security parameter.
\end{theorem}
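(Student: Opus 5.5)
The statement is imported from \cite{DISC:MoRe21}, so the plan is not to reprove Byzantine Agreement. Instead I would take $\Pi_{\ba}$ to be the Momose--Ren protocol and establish three things on top of their Theorem~2. First, the communication bound $\bigO((L+\kappa)n^2)$ for arbitrary $L$; their analysis only treats $L=\bigO(\kappa)$. Second, Multi-Participation Safety. Third, Extra-Corruptions Safety. Termination, Validity and Agreement for $t<(1/2-\varepsilon)n$, together with the $\bigO(n)$ round bound, I take directly from \cite{DISC:MoRe21}.

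\emph{Communication.} I would go through the message types of the protocol one by one: proposals, votes, certificates, and the expander-based dissemination messages. For each type I would check that an honest message carries at most a constant number of raw input values, with everything else being signatures or hashes. Where certificates are relayed, I would have signatures computed over $H_\kappa(v)$ rather than over $v$. A certificate is then $\bigO(n\kappa)$ bits, or $\bigO(\kappa)$ per edge on the sparse expander, independent of $L$. The full value $v$ is forwarded only a constant number of times per phase along a quadratic number of links. Summing gives $\bigO(n^2 L)$ bits for values and $\bigO(n^2\kappa)$ bits for the signature and hash traffic. This is the step I expect to be the real obstacle, and it is what \Cref{sec:appendix:ba communication} must carry out. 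The risk is some phase in which an honest party relays a value or a certificate chain whose size grows with the number of distinct proposals. I would first try to rule this out by having honest parties forward at most one value per leader or phase, namely the first valid one they see, and by showing that their safety argument never needs more than that.

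\emph{Multi-Participation Safety.} The protocol is symmetric in identities, and uses only a PKI and ordinary signatures, with no threshold setup. I would give each virtual identity $(\party,j)$ its own signing capability by having $\party$ sign messages tagged with $j$. An execution with $n$ virtual parties, of which at most $t$ are simulated by byzantine real parties, is then exactly an execution of $\Pi_{\ba}$ in which the adversary controls those $t$ virtual identities. Adaptive corruption of a real party corresponds to adaptively corrupting all of its virtual identities at once. This is still within the adaptive model of \cite{DISC:MoRe21} provided the virtual-corruption count stays at most $t$, so their guarantees transfer.

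\emph{Extra-Corruptions Safety.} Round complexity is immediate, because the protocol runs on a fixed synchronous schedule of $\bigO(n)$ rounds and honest parties halt at the last round regardless of what they receive. For communication, I would show that honest sending is bounded per round by a quantity that does not depend on the number of corruptions: each honest party sends a bounded number of messages per phase, with value and certificate sizes bounded as in the communication step. If some step of their protocol makes this bound rely on honest majority, I would add an explicit per-round budget and a forced halt. This modification is inactive whenever at most $t$ parties are byzantine, so correctness is untouched, and it guarantees that at most $\bitcost_{L,n}(\Pi_{\ba})$ bits are sent by round $\roundcost_n(\Pi_{\ba})$ in every execution.
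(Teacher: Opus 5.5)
Your treatment of Multi-Participation Safety matches what the paper does in \Cref{section:ba:extra-properties}: the real party signs $(\msg, j)$ on behalf of virtual identity $j$, and you use the variant of \cite{DISC:MoRe21} without threshold signatures. Your treatment of Extra-Corruptions Safety also matches: honest parties halt after the fixed $\bigO(n)$-round schedule, and per-step sending is bounded.

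The gap is in the communication bound for general $L$. You conclude $\bigO(n^2L)$ from ``the full value is forwarded a constant number of times per phase along a quadratic number of links''. That sum only closes if there are $\bigO(1)$ such phases, and there are not, because the protocol of \cite{DISC:MoRe21} is recursive. On $s$ parties it proceeds as follows:
\begin{itemize}
\item it runs $\Pi_{\gba}$ on all $s$ parties;
\item it runs $\Pi_{\ba}$ recursively on the first half, which reports its output to all $s$ parties;
\item it runs $\Pi_{\gba}$ again, recurses on the second half, and reports again;
\item at a constant size $M$ it switches to an inefficient protocol.
\end{itemize}
The recursion has $\Theta(\log n)$ levels and $\Theta(n)$ invocations of $\Pi_{\gba}$. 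Charging quadratically many links per phase therefore gives a superquadratic total. Your proposal never identifies this recursion, nor explains why its cost does not accumulate.

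The missing idea is the one carried out in \Cref{sec:appendix:ba communication}. First, bound a single $\Pi_{\gba}$ on $s$ parties by $\bigO((L+\kappa)s^2)$: each party sends $\bigO(s)$ messages of $\bigO(L+\kappa)$ bits, plus only $\bigO(d)$ certificates, where $d=\bigO(1/\varepsilon)$, each of $\bigO(s(L+\kappa))$ bits. Then solve $C(s)=\bigO((L+\kappa)s^2)+C(\lfloor s/2\rfloor)+C(\lceil s/2\rceil)$ with $C(s)=\bigO(L+\kappa)$ for $s\le M$. Level $i$ of the recursion costs $2^i\cdot\bigO((L+\kappa)(n/2^i)^2)$. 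This is a geometric series dominated by the top level, which yields $\bigO((L+\kappa)n^2)$.

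The same accounting shows your modification of signing $H_\kappa(v)$ instead of $v$ is unnecessary. Each party sends only $\bigO(1)$ certificates per $\Pi_{\gba}$, so certificates carrying full values already fit in the budget. Keeping the protocol unmodified lets you cite the correctness of \cite{DISC:MoRe21} directly, rather than re-arguing it under collision resistance. The certificate blow-up you worried about (size growing with the number of distinct proposals) also does not occur in the $\Pi_{\gba}$ of \cite[Section~4.2]{DISC:MoRe21}.
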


For simplicity of presentation, we will be using the term \textbf{$\ba$-friendly} to refer to groups of supernodes/groups of virtual parties where the assumed $\ba$ protocol succeeds, i.e., where the number of byzantine (virtual) parties does not exceed the resilience of the assumed protocol $\Pi_{\ba}$. In settings that are not $\ba$-friendly, we only use Extra-Corruptions Safety to upper-bound the round/communication complexity, without relying on correctness. We note that, regardless of the setting, if the proportion of byzantine (virtual) parties in a group is at most $1/(\eta + \varepsilon/4)$, then this group is $\ba$-friendly.

\subsection{Supersending}
To achieve near-optimal communication complexity overall, we need to ensure that groups of (virtual) parties communicate efficiently. We implement a primitive called \emph{supersending}, described by \Cref{lemma:supersending}, stated below. This is a variant of $\plainbc$ \cite{PSL80,DolStr83} with multiple senders, and relies on prior techniques from extension protocols, e.g., \cite{DISC:NRSVX20,AC:BLLN22, PODC:GhiLiuWat25}.
Roughly, if a group $A$ of $n_A$ (virtual) parties wants to send a message $m$ to a group $B$ of $n_B$ (virtual) parties, each virtual party $\party$ in $A$ will first compute the Reed-Solomon encoding of $m$ -- this provides $\party$ with $n_B$ shares so that any $\lfloor n_B/2\rfloor$ + 1 of the shares reconstruct $m$ \cite{ReedSolomon}. These shares are then accumulated in an encoding $z$ as the root of the Merkle tree constructed from the shares \cite{MerkleTrees}. $\party$ sends a share and Merkle tree witness to every virtual party in $B$. The virtual parties in $B$ agree on an encoding to be reconstructed by using the assumed $\ba$ protocol $\Pi_{\ba}$, and then they attempt to reconstruct the initial message. We present the supersending primitive in detail, along with the proof of \Cref{lemma:supersending}, in \Cref{section:supersending}.

\begin{restatable}{lemma}{SupersendLemma} \label{lemma:supersending}
    Let $A$ and $B$ be two groups of (virtual) parties of size $n_A$ and $n_B$ respectively. Assume that the honest virtual parties in $A$ join $\supersend^L$ with an $L$-bit message $m$ as input. Then, every honest virtual party in $B$ completes $\supersend^L$ with either an $L$-bit output or $\bot$, within $\roundcost_{n_A, n_B}(\supersend^L) = \bigO(n_B)$ rounds, with communication complexity $\bitcost_{L, n_A, n_B}(\supersend^L) = \bigO(L \cdot (n_A + n_B) + \kappa n_B \log n_B \cdot (n_A + n_B))$.

     Moreover, if $B$ is $\ba$-friendly, the honest virtual parties in $B$ obtain the same output $m'$. If, in addition, $A$ contains more than $n_A/2$  honest virtual parties, $m' = m$.
\end{restatable}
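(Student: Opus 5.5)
I would prove \Cref{lemma:supersending} by analyzing the protocol exactly as sketched before the statement, in three phases.

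\textbf{Phase 1 (dispersal).} Each honest virtual party in $A$ computes the Reed--Solomon encoding of its message $m$ into $n_B$ shares $s_1,\dots,s_{n_B}$, each of $\bigO(L/n_B + 1)$ bits, such that any $\lfloor n_B/2\rfloor+1$ shares determine $m$. It builds the Merkle tree over these shares, with root $z = $ a $\kappa$-bit hash, and sends to the $j$-th virtual party of $B$ the tuple $(z, s_j, w_j)$, where $w_j$ is a Merkle witness of $\bigO(\kappa\log n_B)$ bits. This costs $n_A n_B(L/n_B + \kappa\log n_B)$ bits, i.e.\ $\bigO(L n_A + \kappa n_A n_B\log n_B)$.

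\textbf{Phase 2 (agreement on a root).} Each virtual party of $B$ picks a candidate root, namely the root received from a majority of $A$, or $\bot$ if there is none. The parties of $B$ run $\Pi_{\ba}$ on these $\bigO(\kappa)$-bit values. By \Cref{thm:magic-ba-no-pki} / \Cref{thm:magic-ba-pki} this takes $\bigO(n_B)$ rounds and $\bigO(\kappa n_B^2)$ bits. Extra-Corruptions Safety makes these bounds hold even when $B$ is not $\ba$-friendly, and Multi-Participation Safety handles the virtual parties.

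\textbf{Phase 3 (reconstruction).} Each party $j$ of $B$ that holds a share $s_j$ with a valid witness for the agreed root $z^\ast$ forwards $(s_j, w_j)$ to all of $B$. This costs $n_B^2(L/n_B + \kappa\log n_B)$ bits, i.e.\ $\bigO(L n_B + \kappa n_B^2\log n_B)$. Each receiver keeps only the shares that verify against $z^\ast$. It then decodes from any $\lfloor n_B/2\rfloor+1$ verified shares, re-encodes the result, and checks that the Merkle root equals $z^\ast$ (and that the length is at most $L$). If the check fails or too few shares are available, it outputs $\bot$.

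Summing the three phases gives the claimed $\bigO(L(n_A+n_B) + \kappa n_B\log n_B(n_A+n_B))$ bits and $\bigO(n_B)$ rounds.

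\textbf{Correctness, and the main obstacle.} The delicate part is showing that all honest parties of a $\ba$-friendly $B$ output the same value, even when the agreed root $z^\ast$ was produced by a byzantine party.

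First, every verified share is the genuine leaf of $z^\ast$, by collision resistance. An honest party accepts a value only if its re-encoding matches $z^\ast$. In that case the leaves of $z^\ast$ form a valid codeword, so every subset of $\lfloor n_B/2\rfloor+1$ verified shares decodes to the same message $m^\ast$. Conversely, if $z^\ast$ is not the root of a codeword, no decoding can re-encode to $z^\ast$, so every honest party outputs $\bot$.

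The remaining risk is that some honest parties succeed while others lack enough shares. To rule this out, I would require that $z^\ast \ne \bot$ be supported by some honest party of $B$. Since $\Pi_{\ba}$ achieves only standard validity, I would add a preceding step: each party of $B$ checks that it holds a verified share for $z^\ast$, and the parties run a second binary $\Pi_{\ba}$ on this bit. This second agreement should guarantee that enough honest shares exist to reach the decoding threshold. Because $B$ is $\ba$-friendly, more than half of its virtual parties are honest, and this is what I would use to argue availability of more than $n_B/2$ honest shares. Fixing the exact thresholds here is the step I expect to be hardest.

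\textbf{Case where $A$ has an honest majority.} All honest parties of $A$ send the same root $z$ of the true message $m$, so every honest party of $B$ sees $z$ from a majority of $A$ and inputs $z$ to $\Pi_{\ba}$. By validity, $z^\ast = z$. Every honest party of $B$ then holds its genuine share of $m$, and there are more than $n_B/2$ of them, so decoding succeeds and yields $m' = m$.
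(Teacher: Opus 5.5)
\textbf{Verdict: there is a genuine gap, in the agreement step you flag as hardest, and the fix you propose does not close it.} Your dispersal, root agreement, re-encode check and cost accounting are sound. You also correctly show that after decoding, every honest party of $B$ ends with either $\bot$ or one common $m^\ast$.

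\textbf{Why the share-possession bit fails.} A binary $\Pi_{\ba}$ on ``I hold a verified share for $z^\ast$'' only guarantees, on output $1$, that \emph{at least one} honest party of $B$ holds such a share. It does not imply that more than $n_B/2$ honest shares exist. Here is a concrete failure:
\begin{itemize}
\item The byzantine parties of $A$ commit to a genuine codeword of some $m^\ast$ with root $z^\ast$.
\item They give valid tuples only to a set $H_1$ of honest parties of $B$ with $|H_1| \le n_B/2 < |H_1| + t_B$.
\item Honest inputs differ in both the root agreement and your bit agreement, so validity constrains neither, and both may return $z^\ast$ and $1$.
\item The byzantine members of $B$ then forward their shares to some honest parties and withhold them from others.
\item The former reach the threshold, decode, and pass the re-encoding check. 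The latter stay at $|H_1| \le n_B/2$ shares and output $\bot$.
\end{itemize}
Agreement is violated. The paper remarks that the ``some honest party proposed $z^\star$'' check of earlier extension protocols loses its force here for exactly this reason. No threshold on share-possession bits repairs it, because honest parties of $B$ may simply not hold enough shares.

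\textbf{The missing idea is re-dispersal.}
\begin{enumerate}
\item Attempt reconstruction as you do.
\item Run a binary $\Pi_{\ba}$ on ``I decoded some $m^\ast$ whose re-encoding has Merkle root $z^\ast$''. If it returns $0$, everyone outputs $\bot$.
\item If it returns $1$, validity guarantees that some honest party decoded. By your uniqueness argument, all honest decoders hold the same $m^\ast$.
\item Each honest decoder recomputes the Reed--Solomon shares and Merkle tree of $m^\ast$, whose root is $z^\ast$. It sends every member of $B$ its share and witness. This costs $\bigO(L n_B + \kappa n_B^2 \log n_B)$ bits, which is within budget.
\item Now every honest party of $B$ holds its own verified share for $z^\ast$. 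Since a $\ba$-friendly $B$ has an honest majority, a second forward-and-decode round gives every honest party more than $n_B/2$ verified shares, and all of them output $m^\ast$.
\end{enumerate}
In the paper, this is the bit agreement followed by the second $\reconstructFromShares^L$ call inside $\supersend^L$. With this step added, your treatment of the case where $A$ has an honest majority goes through as written.
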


\subsection{Initializing the Supernodes} \label{subsection:init-supernodes}

We describe the subprotocol $\initSupernodes^L$, which is responsible for defining the initial $N$ supernodes. To ensure a threshold of up to $\threshbad$ bad supernodes, we rely on the theorem below, which  we prove in \Cref{section:extractor-magic} using an extractor.  We define the function $D(\omega, \varepsilon)$ in \Cref{section:extractor-magic}: for now, as $\omega$ and $\varepsilon$ are constants, we can regard $D(\omega, \varepsilon)$ as a constant.
\begin{restatable}{theorem}{InitialExtract} \label{thm:initial-extractor}
    Assume $n$ parties, out of which up to $\frac{n}{\omega' + \varepsilon}$ can be byzantine. Then, there is a deterministic $\poly(n, 1/\varepsilon, \omega)$-time algorithm  that assigns the $n$ parties to $n$ supernodes such that: at most $\threshbad \cdot n$ supernodes are byzantine;
    every party belongs to at most $D(\omega, \varepsilon)$ many supernodes;
    each supernode contains $D(\omega, \varepsilon)$ parties.
\end{restatable}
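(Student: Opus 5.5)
We realize the assignment with a deterministic extractor. Identify parties with $[n]$ and supernodes with $[n]$, and take a bipartite graph $G=([n],[n],E)$ that is $D$-left-regular and $D$-right-regular, with $D=D(\omega,\varepsilon)$. Supernode $j$ consists of its $D$ neighbours. Then every party lies in exactly $D$ supernodes and every supernode has exactly $D$ members, so the two degree conditions hold by construction.

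The remaining task is to bound the number of byzantine supernodes. Let $T\subseteq[n]$ be the corrupted set, with $|T|\le n/(\omega'+\varepsilon)$, so its density is $\mu:=|T|/n\le 1/(\omega'+\varepsilon)$. A supernode is byzantine if its fraction of neighbours in $T$ is at least $\beta:=1/(\omega'+\varepsilon/2)$. Since $\beta-\mu\ge \beta-1/(\omega'+\varepsilon)=:\theta>0$ is a constant depending only on $\omega$ and $\varepsilon$, a byzantine supernode overestimates the density of $T$ by at least $\theta$.

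We want the right vertices to be a \emph{sampler} for all test sets $T$, adaptively chosen. Concretely, for every $T\subseteq[n]$ the number of right vertices $j$ with $|N(j)\cap T|/D\ge |T|/n+\theta$ must be at most $\threshbad\cdot n$. By the standard equivalence between extractors and samplers, this follows if $G$, viewed as a function from right vertices and seeds to left vertices, is a $(k,\theta)$-extractor with $k=\log(n/(\threshbad n))$-entropy deficiency: if the set $B$ of bad right vertices had size more than $\threshbad n$, the uniform distribution on $B$ has min-entropy $\log n-\log(1/\threshbad)$. Its image under a random neighbour must then be $\theta$-close to uniform on the left, but it hits $T$ with probability at least $\mu+\theta$, a contradiction.

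Since $\threshbad$ and $\theta$ are constants, we need an extractor with constant entropy deficiency $\log(1/\threshbad)$ and constant error $\theta$. Such extractors exist with constant degree $D=O(1/(\threshbad\theta^2))$ and are explicitly constructible. One option is to take $G$ to be a $D$-regular expander (e.g.\ an explicit Ramanujan-type or zig-zag graph) on $n+n$ vertices and apply the expander mixing lemma. Let $\lambda$ be the normalized second eigenvalue. Then
\[
\sum_{j\in B}|N(j)\cap T|\le D\frac{|B||T|}{n}+\lambda D\sqrt{|B||T|}.
\]
By assumption, the left-hand side is at least $D|B|(\mu+\theta)$. This forces $|B|\le \lambda^2\mu n/\theta^2\le \lambda^2 n/\theta^2$, which is at most $\threshbad n$ once $\lambda\le \theta\sqrt{\threshbad}$. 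This in turn holds when $D\ge 4/(\theta^2\threshbad)$. We define $D(\omega,\varepsilon)$ accordingly.

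The main obstacle is constructing, deterministically in $\poly(n,1/\varepsilon,\omega)$ time, a biregular expander with the required spectral gap for every $n$, not only special values of $n$. My first attempt is a known explicit family (e.g.\ Margulis--Gabber--Galil or zig-zag products) on a size $n_0\ge n$ close to $n$. I would then fold the extra vertices by merging and pruning. This keeps the degrees within a constant factor and perturbs the mixing bound only by a constant, absorbed by enlarging $D$; padding supernodes to equal size $D$ is done by adding arbitrary parties, which only shrinks the corrupted fraction margin slightly. Alternatively, since $n$ is polynomial, I would search for a good graph by brute force over a small explicit family and verify the eigenvalue in polynomial time.

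Finally, note that the bound holds for every $T$ simultaneously, because it is a worst case over all sets. This is what makes the fixed public assignment secure against adaptive corruptions.
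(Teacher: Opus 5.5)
Your core argument matches the paper's. \Cref{thm:initial-extractor} is \Cref{cor:nodes-to-supernodes} with $N=n$, i.e.\ \Cref{thm:big-one} with $\alpha=1/(\omega'+\varepsilon)$ and $\beta=1/(\omega'+\varepsilon/2)$. That result rests on the extractor of \Cref{thm:finding-the-extractor}, which is exactly the bipartite double cover of a $D$-regular Ramanujan graph, analysed through its spectral expansion. Your direct expander-mixing computation is correct and slightly more streamlined than the paper's detour through its extractor definition: $|B|\le\lambda^2|T|/\theta^2\le\threshbad n$ once $\lambda\le\theta\sqrt{\threshbad}$, so $D\ge 4/(\theta^2\threshbad)$ suffices.

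\textbf{The gap.} What is missing is the step you flag yourself: an explicit, polynomial-time, $D$-biregular graph on exactly $n+n$ vertices with $\lambda\le\theta\sqrt{\threshbad}$, for \emph{every} $n$. The paper settles this by citing Cohen's polynomial-time construction of Ramanujan multigraphs of every size and degree (FOCS 2016). Your proposed workarounds do not work as stated.
\begin{itemize}
\item \emph{Folding.} Folding an expander on $n_0\in(n,2n)$ vertices onto $[n]$ gives some parties two preimages (double degree) and others one. The assignment is public, so the adversary corrupts only doubly covered parties. The corrupted fraction of edge endpoints then rises to $2|T|/n_0$. For $|T|=n/(\omega'+\varepsilon)$ and $n_0=1.5n$ this is $\frac{4}{3(\omega'+\varepsilon)}$, which exceeds $\beta$ whenever $\varepsilon<\omega'$. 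This bias moves the \emph{main} term $D|B||T|/n$ of the mixing bound, not the $\lambda$-dependent error term, so enlarging $D$ cannot absorb it. Mixing then pushes almost every supernode's corrupted fraction to roughly $\frac{4}{3(\omega'+\varepsilon)}>\beta$, so almost all supernodes become byzantine.
\item \emph{Padding.} Padding supernodes ``with arbitrary parties'' has the same defect. Those edges are public and unstructured, and nothing bounds how many supernodes the adversary tips over by corrupting the padding parties.
\item \emph{Brute force.} This option names no polynomial-size family guaranteed to contain a good graph on exactly $n$ vertices.
\end{itemize}

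\textbf{A correct repair.} Besides citing a size-exact Ramanujan construction, you can proceed as follows.
\begin{enumerate}
\item Take a constant-degree explicit expander on $n_0\in[n,2n]$ vertices and merge each surplus vertex into a distinct vertex of $[n]$.
\item Add self-loops so the result is regular and lazy. Edge expansion survives up to constant factors, so by Cheeger the spectral gap stays a constant.
\item Raise the graph to a constant power to push $\lambda$ below $\theta\sqrt{\threshbad}$.
\item Only then take the double cover.
\end{enumerate}
Being regular on exactly $n$ vertices is what removes the density bias, and your mixing argument then applies verbatim.
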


Afterwards, we need to provide each supernode with a value. As supernodes consist of $\bigO(1)$ parties, we may rely on the safe-area approach described in our paper's introduction \cite{PODC:VaiGar13, DISC:NoRy19, eprint:ConvexWorld}. We implement this step in a separate subprotocol $\supernodeCA^L$. In this subprotocol, we provide all parties within supernode $S$ with an identical view over the inputs of the parties in $S$. We can achieve this using, e.g., a $\plainbc$ protocol, or the $\ba$ protocols described in \Cref{section:ba-main}. As we assume an upper bound $L$ on the inputs' length, and $S$ has size $\bigO(1)$, this step has communication complexity $\bigO(L + \kappa)$ per supernode.
We then denote the multiset of values in the common view by $\msgset$. If $S$ is a non-byzantine supernode of size $n_S$, it contains up to $t_S < n_S / (\omega' + \varepsilon / 2)$ corruptions -- hence, at least $n_S - t_S$ of the values in $\msgset$ come from honest parties. Then, each party computes a \emph{safe area} $S := \safe_{k}(\msgset)$ with $k := \abs{\msgset} - (n_S - t_S)$, as defined below.

\begin{restatable}[Safe area]{definition}{SafeAreaDef}\label{definition:safe-area}
    Let $\msgset$ be a multiset.
    For a given $k$, the safe area of $\msgset$ is
    $
        \safe_{k}(\msgset) = \bigcap_{M \in \restrict_{k}(\msgset)} \hull{M},
    $
    where $\restrict_{k}(\msgset) = \{M \subseteq \msgset : \abs{M} = 
    \abs{\msgset} - k\}$.
\end{restatable}

The parties apply a deterministic decision to output a value in $S$ (if any), which already implies Agreement and Termination. The lemma below (proven in \Cref{appendix:safe-area-proofs}) follows from \cite[Lemma 15]{eprint:ConvexWorld} and the safe area definition, and ensures that: if $S$ is a non-byzantine supernode, the resulting safe area is non-empty and included in the convex hull of the inputs of the honest parties in $S$, which implies that Convex Validity holds. 
\begin{restatable}{lemma}{NonEmptySafeArea} \label{lemma:safe-area}
    Let $\msgset$ be a multiset of $n - t + k$ values, with $0 \leq k \leq t$, where the values come from a convexity space with Helly number $\omega$. Then, $\safe_k(\msgset) \subseteq \hull{\mathcal{H}}$ for any multiset $\mathcal{H} \subseteq \msgset$ of size $n - t$. Moreover, if  $t < n / \omega$, $\safe_k(\msgset) \neq \emptyset$.
\end{restatable}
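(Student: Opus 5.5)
The plan has two parts, one for each claim of \Cref{lemma:safe-area}.

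\textbf{Containment.} Fix any sub-multiset $\mathcal{H} \subseteq \msgset$ with $|\mathcal{H}| = n - t$. Since $|\msgset| = n - t + k$, we have $|\msgset| - k = n - t$. Hence $\mathcal{H} \in \restrict_k(\msgset)$ by \Cref{definition:safe-area}. The safe area is the intersection of $\hull{M}$ over all such $M$, so it is contained in the term indexed by $\mathcal{H}$, which gives $\safe_k(\msgset) \subseteq \hull{\mathcal{H}}$. This part is purely definitional.

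\textbf{Non-emptiness.} Assume $t < n/\omega$. The family $\{\hull{M} : M \in \restrict_k(\msgset)\}$ is finite, because $\msgset$ is a finite multiset, and every member is convex. By the definition of the Helly number, it suffices to show that any $\omega$ members $\hull{M_1}, \dots, \hull{M_\omega}$ have a common point. We use a counting argument. Each $M_i$ omits exactly $k$ elements of $\msgset$ (counted with multiplicity, i.e.\ as positions of the multiset). The union of the omitted positions therefore has size at most $\omega k$. Hence
\[
\Bigl|\bigcap_{i} M_i\Bigr| \geq n - t + k - \omega k .
\]
Since $k \leq t$ and $\omega \geq 1$, the right-hand side is at least $n - t + t - \omega t = n - \omega t$, and this is positive because $t < n/\omega$. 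So some element $x$ of $\msgset$ lies in every $M_i$, and thus $x \in \bigcap_i \hull{M_i}$. The Helly property now gives $\bigcap_{M} \hull{M} \neq \emptyset$, i.e.\ $\safe_k(\msgset) \neq \emptyset$.

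\textbf{Main subtleties.} The first is the handling of multisets. I will index the elements of $\msgset$ by positions, so that the $M$'s are sets of positions. Then $\hull{M}$ is the hull of the underlying values, and "common position" implies "common value". The second is the Helly number with $\omega$ versus fewer sets. If the family has fewer than $\omega$ members, the same counting argument applies directly to all of them, so the degenerate case needs no separate treatment. This should match \cite[Lemma 15]{eprint:ConvexWorld}, which could alternatively be invoked for the non-emptiness part.
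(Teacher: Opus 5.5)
Your proof is correct. The containment part is the same definitional argument as in the paper.

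For non-emptiness you take a different route. The paper does not argue directly: it invokes \Cref{lemma:original-safe-area} (Lemma 15 of \cite{eprint:ConvexWorld}) with $n' := n$, $t_s := t$, $t_a := 0$. That lemma's hypothesis $n > \max(\omega t, t)$ reduces to $t < n/\omega$, and its conclusion $\safe_{\max(k,0)}(\msgset) = \safe_k(\msgset) \neq \emptyset$ is what is needed.

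You instead reprove this special case from scratch. You apply the Helly property to the finite family $\{\hull{M}\}$ and use the pigeonhole bound $\bigl|\bigcap_{i \le \omega} M_i\bigr| \geq n - t + k - \omega k \geq n - \omega t > 0$ to produce a common point of any $\omega$ members. Your handling of multisets via positions, and of families with fewer than $\omega$ members, is sound.

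Your route is self-contained and shows exactly where $t < n/\omega$ enters. The same counting also covers \Cref{lemma:bot-friendly-safe-area}: omitting $\min(k,t)$ positions from each of $\omega$ sub-multisets of an $(n-2t+k)$-element multiset leaves more than $n - (\omega+1)t - 1 \geq 0$ common positions in both cases $k \le t$ and $k > t$, i.e.\ at least one. The paper's route is shorter and reuses one external lemma, via its extra parameter $t_a$, for both safe-area statements.
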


We present the lemma describing $\supernodeCA^L$ below. We defer the proof, together with the implementation of $\supernodeCA^L$, to \Cref{appendix:ca-within-supernode}.

\begin{restatable}{lemma}{CAWithinSupernodes}\label{lemma:ca-within-supernode}
    Consider a supernode $S$ containing $n_S$ parties. Then, all honest parties in $S$ complete the execution of $\supernodeCA^L$ within $\roundcost_{n_S}(\supernodeCA^L) = \bigO(n_S)$ rounds, and with $\bitcost_{L, n_S}(\supernodeCA^L) = \bigO((L + \kappa) \cdot n_S^3)$ bits of communication.
    
    If an honest party obtains an output in  $\supernodeCA^L$, then this is a $(\delta \cdot L)$-bit value. In addition, if $S$ is a non-byzantine supernode, i.e., contains up to $t_S < n_S / (\omega' + \varepsilon / 2)$ byzantine parties, then all honest parties in $S$ agree on a $(\delta \cdot L)$-bit value $v$ in their inputs' convex hull.
\end{restatable}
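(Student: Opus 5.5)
The plan is to implement $\supernodeCA^L$ as ``agree on a common view, then apply the safe-area rule,'' building the common view from $n_S$ parallel instances of the assumed protocol $\Pi_{\ba}$ (\Cref{thm:magic-ba-no-pki} or \Cref{thm:magic-ba-pki}).

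Concretely, each party $\party_j \in S$ first sends its input $v_j$ (or $\bot$ if its encoding exceeds $L$ bits) to all parties of $S$. For each $j$, the parties of $S$ then run an instance of $\Pi_{\ba}$ on the value received from $\party_j$, truncated to $L$ bits. Since $S$ has $n_S$ parties and $\Pi_{\ba}$ costs $\bigO((L+\kappa) n_S^2)$ bits and $\bigO(n_S)$ rounds, the $n_S$ instances run in parallel cost $\bigO((L+\kappa) n_S^3)$ bits and $\bigO(n_S)$ rounds. By Extra-Corruptions Safety (and Multi-Participation Safety, as $S$ may be simulated by virtual parties), these bounds hold even when $S$ is byzantine. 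This gives the complexity claims regardless of whether $S$ is byzantine.

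Each party then forms the multiset $\msgset$ of non-$\bot$ agreed values and computes $\safe_k(\msgset)$ with $k := |\msgset| - (n_S - t_S)$, where $t_S := \lceil n_S/(\omega'+\varepsilon/2)\rceil - 1$ is the publicly known bound. If this set is nonempty, the party picks, by a fixed deterministic rule, a value of $X^\star$ of length at most $\delta L$. Otherwise, or if $k<0$, it outputs nothing, or a default. Any output is $(\delta L)$-bit by construction, which handles the first output claim.

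For a non-byzantine $S$, the proportion of byzantine parties is below $1/(\omega'+\varepsilon/2) \le 1/(\eta+\varepsilon/2)$, so $S$ is $\ba$-friendly. Agreement of every instance then gives all honest parties the same $\msgset$, and hence the same output. Validity of each instance with an honest sender places every honest input in $\msgset$, so $|\msgset| \ge n_S - t_S$ and $0 \le k \le t_S$.

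By \Cref{lemma:safe-area}, $\safe_k(\msgset) \subseteq \hull{\mathcal{H}}$ for every size-$(n_S-t_S)$ sub-multiset $\mathcal{H}$. Taking $\mathcal{H}$ to consist of honest inputs, the safe area lies in the honest inputs' convex hull. Nonemptiness requires $t_S < n_S/\omega$, which holds because $\omega' \ge \omega$.

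The main obstacle is the length guarantee: we need a point of length at most $\delta L$ inside the safe area, not just any point. For this I would invoke \Cref{def:dilation-factor} with the multiset $\msgset$. Removing fewer than $|\msgset|/\omega$ elements must cover every restriction, i.e.\ we need $k < |\msgset|/\omega$. This follows from $t_S < n_S/\omega$, since
\[
k \cdot \omega \le t_S \,\omega < n_S - t_S + k = |\msgset| .
\]
Hence $\safe_k(\msgset)$ contains the intersection from the dilation definition, and that intersection contains an element of $X^\star$ with $\enc^\star$-length at most $\delta L$. One must also check that $L \ge \log \omega$ (or pad inputs) and that the values in $\msgset$ have $\enc^\star$-length within the bound, possibly adjusting $L$ by one bit.
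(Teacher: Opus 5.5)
Your proposal is correct and follows the paper's proof. You build a common view from $n_S$ parallel $\Pi_{\ba}$ instances, get the cost and round bounds from Extra-Corruptions Safety, get agreement and validity from $\ba$-friendliness of a non-byzantine $S$, use \Cref{lemma:safe-area} for nonemptiness and containment in the honest hull, and use the dilation factor for the $(\delta\cdot L)$ length bound. You justify that last step more carefully than the paper does, by explicitly checking $k<|\msgset|/\omega$.

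One small fix is needed. The middle inequality $t_S\,\omega < n_S - t_S + k$ in your display is false for small $k$, since it would require $t_S(\omega+1)<n_S+k$. The needed conclusion still holds, because $k(\omega-1)\le t_S(\omega-1) < n_S - t_S$, i.e.\ $k\omega < n_S - t_S + k = |\msgset|$.
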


We may now present the code of $\initSupernodes^L$.

\begin{dianabox}{$\initSupernodes^L$}
	\begin{algorithmic}[1]
    \State Assign the $n$ parties into $N = n$ supernodes $S_1, \ldots, S_N$ using \Cref{thm:initial-extractor}.
    \State For each supernode $S_i$, in parallel: 
    \State \hspace{0.5cm} Parties in $S_i$ join $\supernodeCA^L$ to obtain the value $v_i$ assigned to supernode $S_i$.
	\end{algorithmic}
\end{dianabox}

\begin{lemma}\label{lemma:init-supernodes}
    After running $\initSupernodes^L$, the $n$ parties are assigned into $n$ supernodes of size $D(\omega, \varepsilon)$ each such that at most $\threshbad \cdot n$ supernodes are bad, and no honest party holds values longer than $(\delta \cdot L)$ bits for any supernode it is part of. In addition,
    $\initSupernodes^L$ achieves round complexity $\roundcost_n(\initSupernodes^L) = \bigO(n)$ and communication complexity $\bitcost_{L, n}(\initSupernodes^L) = \bigO((L + \kappa) \cdot n)$.
\end{lemma}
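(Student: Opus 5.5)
The lemma follows by combining \Cref{thm:initial-extractor} with \Cref{lemma:ca-within-supernode}, applied independently to each of the $n$ supernodes. The argument has three parts: structure and the bad-supernode bound, value length, and complexity.

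\textbf{Structure and bad supernodes.} By \Cref{thm:initial-extractor}, which applies since $t < n/(\omega'+\varepsilon)$, the deterministic assignment yields $n$ supernodes. Each has exactly $D(\omega,\varepsilon)$ parties, each party lies in at most $D(\omega,\varepsilon)$ supernodes, and at most $\threshbad \cdot n$ supernodes are byzantine.

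To get the bound on \emph{bad} supernodes, it suffices to show that no supernode is confused after $\supernodeCA^L$. Take a non-byzantine $S_i$. It contains fewer than $n_S/(\omega'+\varepsilon/2)$ byzantine parties, so \Cref{lemma:ca-within-supernode} applies: all honest parties of $S_i$ agree on a value $v_i$ in the convex hull of the honest inputs of $S_i$. That hull is contained in the hull of all honest inputs, by monotonicity of $\hull{\cdot}$. So $v_i$ is valid, $S_i$ is good, and the bad supernodes are exactly the byzantine ones, at most $\threshbad\cdot n$.

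\textbf{Adaptivity.} The main subtlety is that the adversary is adaptive and may corrupt parties during $\initSupernodes^L$. Since the total number of corruptions is always at most $t$, and \Cref{thm:initial-extractor} is stated for any corruption set of that size, the guarantee holds for the final corruption set. A supernode that is non-byzantine with respect to the final set is also non-byzantine at every earlier point. Hence \Cref{lemma:ca-within-supernode}, whose guarantees apply to parties honest throughout, covers it.

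\textbf{Value length.} \Cref{lemma:ca-within-supernode} states that any output an honest party obtains in $\supernodeCA^L$ is a $(\delta L)$-bit value. This holds even for byzantine supernodes, so no honest party holds a value longer than $\delta L$ bits for any of its supernodes.

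\textbf{Complexity.} The $n$ instances run in parallel, each on $n_S = D(\omega,\varepsilon) = \bigO(1)$ parties. By \Cref{lemma:ca-within-supernode}, each instance takes $\bigO(n_S)=\bigO(1)$ rounds, so the round complexity is $\bigO(1)\subseteq\bigO(n)$. Each instance costs $\bigO((L+\kappa)n_S^3)=\bigO(L+\kappa)$ bits, including instances in byzantine supernodes, because \Cref{lemma:ca-within-supernode} bounds the cost unconditionally (via Extra-Corruptions Safety of the underlying $\ba$). Summing over the $n$ instances gives $\bigO((L+\kappa)\cdot n)$ bits. The extractor computation is local and adds no communication.
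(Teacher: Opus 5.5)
Your proposal is correct and follows the paper's proof: \Cref{thm:initial-extractor} gives the structure and the bound on byzantine supernodes, and \Cref{lemma:ca-within-supernode} shows that every non-byzantine supernode ends up good, that honest parties only ever hold $(\delta\cdot L)$-bit values, and that each of the $n$ parallel $\bigO(1)$-size instances costs $\bigO(1)$ rounds and $\bigO(L+\kappa)$ bits. Your paragraph on adaptive corruptions goes beyond the paper, which leaves this implicit; it does go through, because the safe area lies in the hull of any $n_S - t_S$ of the received values, in particular those of parties that are still honest at the end.
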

\begin{proof}
    \Cref{thm:initial-extractor} implies that the $N$ supernodes have size $\bigO(1)$ each, and that at most $\threshbad \cdot n$ of the supernodes are byzantine. By \Cref{lemma:ca-within-supernode}, $\supernodeCA^L$ provides each non-byzantine supernode $S_i$ with a $(\delta \cdot L)$-bit value $v_i$ that is in the convex hull of the inputs of the honest parties in $S_i$, and therefore that is valid. Hence, every non-byzantine supernode is a good supernode, so the proportion of bad supernodes $\threshbad$ is also respected. In addition, \Cref{lemma:ca-within-supernode} ensures that, even if an honest party is part of a byzantine supernode, it holds either no value or a $(\delta \cdot L)$-bit value for that supernode.

     $\initSupernodes^L$ runs a $\supernodeCA^L$ invocation for each of the $N = n$ supernodes in parallel, and each supernode has size $\bigO(1)$. Then, by \Cref{lemma:ca-within-supernode},  $\initSupernodes^L$ achieves round complexity $\roundcost_{\bigO(1)}(\supernodeCA^L) = \bigO(1)$ and  communication complexity $n \cdot \bitcost_{L, \bigO(1)}(\supernodeCA^L) = \bigO((L + \kappa) \cdot n)$, as claimed.
\end{proof}

\subsection{$\ca$ among Supernodes}\label{subsection:supernodes-ca}

Subprotocol $\supernodesCA^L$ assumes $n_C$ supernodes (in a committee), and aims to enable the honest parties within these $n_C$ supernodes to agree on a value $v$ within the convex hull of the good supernodes' inputs. We achieve this whenever $t_C < n_C / \omega$ of the $n_C$ supernodes are bad and the virtual parties within the $n_C$ supernodes form a $\ba$-friendly group.

Subprotocol $\supernodesCA^L$ proceeds similarly to $\supernodeCA^L$, with the exception that we obtain a common view over the supernodes' inputs using the supersending primitive described in  \Cref{lemma:supersending}. 
We proceed as follows: each supernode $S$ supersends its input value to the group of all $n_C$ supernodes. If we are in a $\ba$-friendly setting, the parties obtain a common view containing all of the good supernodes' values.
Then, if $\msgset$ is the multiset containing the values in this common view, each (party in each) supernode computes the safe area $S := \safe_{k}(\msgset)$, where $k := \abs{\msgset} - (n_C - t_C)$, and applies a deterministic decision to output a value from $S$. This already implies Agreement and Termination. Validity holds if the number of bad supernodes indeed satisfies $t_C < n_C / \omega$, as described by \Cref{lemma:safe-area}. 

We present the code of $\supernodesCA^L$ below, and \Cref{lemma:supernode-ca} describes its guarantees. 
\begin{dianabox}{$\supernodesCA^L$}
	\algoHead{Code for supernode $S$ with input $v_{\inputt}$}
	\begin{algorithmic}[1]
    \State $S$ sends $v_{\inputt}$ to  $B :=$ the group of all $n_C$ supernodes via $\supersend^L$.
    \State Let $\msgset$ be the multiset of $L$-bit values received, $t_C = \lceil n_C / \omega \rceil$ - 1, $k := \abs{\msgset} - (n_C - t_C)$. 
    \State If $k < 0$ or  $\safe_k(\msgset) = \emptyset$, output $\bot$.
    \State Otherwise, let $v$ be the value with the smallest binary encoding in $\safe_k(\msgset)$, breaking ties deterministically. If $v$ has bit-length up to $\delta \cdot L$, output $v$; otherwise output $\bot$.
	\end{algorithmic}
\end{dianabox}

\begin{lemma}\label{lemma:supernode-ca}
    Assume that each of the $n_C$ supernodes consists of up to $n_S$ parties.
    Then, all honest parties among the $n_C$ supernodes complete the execution of $\supernodesCA^L$ with either a $(\delta \cdot L)$-bit output or $\bot$, within $\roundcost_{n_C, n_S}(\supernodesCA^L) = n_C \cdot n_S$ rounds, with communication complexity $\bitcost_{L, n_C, n_S}(\supernodesCA^L) =
    \bigO(L \cdot n_C^2 \cdot n_S + \kappa \cdot n_C^3 \cdot  n_S^2 \cdot \log (n_C \cdot n_S))$.

    Moreover, if the group of virtual parties among the $n_C$ supernodes is $\ba$-friendly, the honest virtual parties in the $n_C$ supernodes output the same value. If, in addition, up to $t_C < n_C / \omega$ of the supernodes are bad and the good supernodes hold $L$-bit inputs, the output is a $(\delta \cdot L)$-bit value in the convex hull of the good supernodes' inputs.
\end{lemma}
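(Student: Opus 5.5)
The plan is to derive everything from \Cref{lemma:supersending} applied $n_C$ times in parallel, once per sender supernode, followed by purely local computation. The group $B$ consists of all virtual parties of the $n_C$ supernodes, so $n_B \le n_C \cdot n_S$. Each sender group $A$ is a single supernode, so $n_A \le n_S$. Hence each invocation of $\supersend^L$ runs in $\bigO(n_B) = \bigO(n_C n_S)$ rounds. Since the invocations run in parallel and the safe-area step is local, this gives the round bound; I will read the stated $n_C \cdot n_S$ up to constants, or fix the constant in $\supersend^L$.

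For communication, each invocation costs
\[
\bigO\bigl(L(n_S + n_C n_S) + \kappa\, n_C n_S \log(n_C n_S)\,(n_S + n_C n_S)\bigr)
= \bigO\bigl(L n_C n_S + \kappa\, n_C^2 n_S^2 \log(n_C n_S)\bigr).
\]
Multiplying by the $n_C$ parallel invocations yields the claimed bound. By \Cref{lemma:supersending}, every honest virtual party obtains an $L$-bit value or $\bot$ from each invocation. So $\msgset$ contains only $L$-bit values, and the explicit bit-length check in the last line of the protocol guarantees the output is either a $(\delta L)$-bit value or $\bot$. Termination is immediate.

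For Agreement, assume $B$ is $\ba$-friendly. \Cref{lemma:supersending} then gives that all honest virtual parties in $B$ receive the same output from each of the $n_C$ invocations. Hence they hold identical multisets $\msgset$, where I treat $\bot$ outputs as excluded identically by all parties. The quantities $t_C$, $k$ and $\safe_k(\msgset)$, as well as the deterministic choice of $v$ and the length check, are all functions of $\msgset$, so all honest parties output the same value.

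For Validity, additionally assume at most $t_C' < n_C/\omega$ supernodes are bad. Since $t_C = \lceil n_C/\omega\rceil - 1$ is the largest integer below $n_C/\omega$, we have $t_C' \le t_C$. Each good supernode is non-byzantine, so it has more than half honest virtual parties; this holds because $1/(\omega'+\varepsilon/2) < 1/2$. Good supernodes also hold valid $L$-bit inputs. By the second part of \Cref{lemma:supersending}, each good supernode's value therefore appears in $\msgset$ unchanged. So $\msgset$ contains at least $n_C - t_C$ good values and at most $n_C$ values, which gives $0 \le k \le t_C$.

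Now apply \Cref{lemma:safe-area} with $n := n_C$, $t := t_C < n_C/\omega$, and $\mathcal{H}$ any $n_C - t_C$ good values. This yields $\safe_k(\msgset) \neq \emptyset$ and $\safe_k(\msgset) \subseteq \hull{\text{good inputs}}$.

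The main obstacle is showing that the chosen $v$ passes the $\delta L$ length check, so the output is not $\bot$. For this I intend to invoke the dilation factor (\Cref{def:dilation-factor}) on $\msgset$. The intersection there is over all $A$ with $|\msgset\setminus A| < |\msgset|/\omega$, and I need this to contain $\safe_k(\msgset)$'s defining family. That requires $k < |\msgset|/\omega$, which follows since $k \le t_C < n_C/\omega$; I will need $k < |\msgset|/\omega$ and will check it via $|\msgset| \ge n_C - t_C$ and possibly $\omega$-scaling (otherwise adjust by noting subsets removing fewer elements give smaller hulls). Given this containment, the dilation factor provides an element of bit-length at most $\delta L$ in $\safe_k(\msgset)$. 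Since $v$ has the smallest encoding in $\safe_k(\msgset)$, $v$ passes the check.
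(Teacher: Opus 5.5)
Your proposal is correct and follows the paper's own proof. You apply \Cref{lemma:supersending} to $n_C$ parallel $\supersend^L$ invocations with $n_A\le n_S$ and $n_B\le n_C n_S$, derive agreement from identical multisets $\msgset$, and derive validity from \Cref{lemma:safe-area} together with the dilation factor; you are in fact more explicit than the paper about the honest-majority precondition of supersending and about why the chosen $v$ passes the $\delta L$ length check.

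To close that last check, use the exact size $|\msgset| = n_C - t_C + k$ rather than the lower bound $|\msgset|\ge n_C-t_C$, which would not suffice when $k=t_C$. Since $\omega t_C < n_C$, you get $\omega k \le (\omega-1)t_C + k < n_C - t_C + k = |\msgset|$. Hence every set in $\restrict_k(\msgset)$ is among those intersected in \Cref{def:dilation-factor}, and your fallback remark is unnecessary; it also has hull monotonicity backwards, since removing fewer elements gives larger hulls, not smaller ones.
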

\begin{proof}
    The communication complexity of the protocol is $n_C \cdot \bitcost_{L, n_S, n_C \cdot n_S}(\supersend^L)$, and the round complexity is $\roundcost_{n_S, n_C \cdot n_S}(\supersend^L)$. Then, the complexities written in the lemma statements follow from \Cref{lemma:supersending}. 

    Next, assume that the virtual parties in the $n_C$ supernodes form a $\ba$-friendly group. In this case, \Cref{lemma:supersending} ensures that these honest virtual parties obtain the same multiset $\msgset$, hence the same safe area $\safe_k(\msgset)$, and therefore the same output.
    If the good supernodes hold $L$-bit valid values, \Cref{lemma:supersending}  additionally ensures that the good supernodes' values are received. Moreover, if there are up to $t_C < n_C / \omega$ bad supernodes, 
    \Cref{lemma:safe-area} implies that $\safe_k(\msgset) \neq \emptyset$ and that $\safe_k(\msgset)$ is included in the convex hull of the values of the at least $n_C - t_C$ good supernodes. Then, the honest virtual parties in the $n_C$ supernodes obtain the same value $v$ in the convex hull of the good supernodes' inputs. As the good supernodes hold $L$-bit inputs from a convexity space with dilation factor $\delta$, $v$ has bit-length $\delta \cdot L$.
\end{proof}

\subsection{Reducing the Number of Supernodes} \label{subsection:reduce-supernodes}
We now present the subprotocol $\reduceSupernodes^L$: this is responsible for reducing the number of supernodes from $N$ to  $N' = \lfloor N/ \supernodefactor \rfloor$, while maintaining the fraction of good supernodes.
We first assign the $N$ supernodes to $N'$ committees, where each committee consists of $\bigO(\supernodefactor)$ supernodes. Each committee $C_i$, with $i = 1 \ldots N'$, is responsible for computing a valid input of the new supernode $S_i$ via $\supernodesCA^L$.

We say that a committee $C_i$ is \textbf{bad} if it does not meet the preconditions of $\supernodesCA^L$ described in \Cref{lemma:supernode-ca}: it has at least a $\frac{1}{\omega}$-fraction bad supernodes, or the fraction of byzantine (virtual) parties being part of it is at least $\frac{1}{\eta + \varepsilon/4}$. Otherwise, committee $C_i$ is \textbf{good}.
To maintain the fraction of good supernodes, we need to ensure that only few committees are bad. We achieve this by assigning the $N$ supernodes to the $N'$ committees with the theorem below (proven in \Cref{section:extractor-magic}), which relies on an extractor. The function $D = D(\omega, \varepsilon)$ will be defined in \Cref{section:extractor-magic}: as $\omega$ and $\varepsilon$ are constants, this is a constant.

\begin{restatable}{theorem}{NodesToComs}
    \label{cor:nodes-to-coms}
    Assume $N$ supernodes out of which at most $\threshbad \cdot N$ are bad. Then, there exists a deterministic $\poly(N, 1/\varepsilon, \omega)$-time algorithm that assigns the $N$ supernodes to $\lfloor N/\supernodefactor \rfloor$ committees such that:
    at most $\threshbad / 2 \cdot \lfloor N/\supernodefactor \rfloor$ committees are bad; 
     every supernode belongs to at most $D(\omega, \varepsilon)$ committees;
     the committees have equal size, consisting of between $D(\omega, \varepsilon) \cdot \lfloor \supernodefactor \rfloor$ and $2 D(\omega, \varepsilon) \cdot \supernodefactor$  supernodes.
\end{restatable}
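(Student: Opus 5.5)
We build the assignment from a deterministic seeded extractor, viewed as a bipartite graph with left-regular degree $D$. Left vertices are the $N$ supernodes and right vertices are the $M := \lfloor N/\supernodefactor\rfloor$ committees. The key sampler property is the following. For every ``bad'' set $B$ of left vertices with $|B| \le \threshbad N$, all but a small set of right vertices see at most a $(\threshbad + \beta)$-fraction of their neighbours inside $B$. The fraction of such exceptional right vertices is at most $\threshbad/4$.

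This is the standard extractor-to-sampler equivalence. Suppose $T$ is the set of right vertices whose neighbourhood over-samples $B$. Then the uniform distribution on $T$ yields, on the left side, a distribution far from uniform on a set $B$ of density $\threshbad$. An extractor with error $\beta$ and min-entropy threshold $\log(\threshbad M/4)$ forbids this. The graph must be oriented so that committees are the ``seeds/outputs'' side. I would construct it as an explicit disperser/extractor with left-degree and right-degree depending only on $\omega$ and $\varepsilon$. Right-regularity is obtained by balancing: take the extractor on $M$ right vertices and, if needed, duplicate edges so every committee has between $D\supernodefactor$ and $2D\supernodefactor$ members. Each supernode still lies in at most $D$ committees.

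With this tool, the proof handles the two conditions for a committee being bad separately and takes a union bound.

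\emph{First condition.} Let $B_1$ be the set of bad supernodes, with $|B_1| \le \threshbad N$ and $\threshbad \le 1/(\omega'+\varepsilon)$. Choose $\beta$ small enough that $\threshbad + \beta < 1/\omega$. Then all but $\threshbad M/4$ committees have fewer than a $1/\omega$ fraction of bad supernodes.

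\emph{Second condition.} This concerns the fraction of byzantine virtual parties. Each virtual party of a committee lies in some supernode. A non-byzantine supernode contributes fewer than a $1/(\omega'+\varepsilon/2)$ fraction of byzantine virtual parties, while a byzantine supernode contributes at most all of its parties. Since all supernodes have equal size, a committee with at most a $(\threshbad+\beta)$-fraction of byzantine supernodes has a byzantine virtual-party fraction of at most
\[
(\threshbad+\beta) + \frac{1}{\omega'+\varepsilon/2}.
\]
By the choice $\threshbad \le \frac{1}{\eta+\varepsilon/3} - \frac{1}{\omega'+\varepsilon/2}$, this is below $\frac{1}{\eta+\varepsilon/4}$ once $\beta$ is a small constant. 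So applying the sampler property with $B_2$ equal to the set of byzantine supernodes (also of size at most $\threshbad N$) excludes at most another $\threshbad M/4$ committees.

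The two exceptional sets together have size at most $\threshbad M/2$, as claimed. All extractor parameters ($\beta$, the entropy deficiency, and hence $D$) depend only on $\omega$ and $\varepsilon$, which defines $D(\omega,\varepsilon)$. The construction is deterministic and polynomial in $N$, $1/\varepsilon$ and $\omega$.

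The main obstacle is obtaining an explicit extractor with the exact shape we need. We need constant degree $D$ on the supernode side, right side of size exactly $\lfloor N/\supernodefactor\rfloor$ (arbitrary $N$, not a power of two), equal committee sizes, and a ``bad'' set of \emph{constant} density $\threshbad$. The last point means we need extraction from very high min-entropy (deficiency $O(1)$) with constant error. Here I would first try known explicit constant-degree high-min-entropy extractors, or equivalently Ramanujan-type expanders via the expander mixing lemma. I would then handle arbitrary $N$ by padding to the nearest convenient size and merging or duplicating vertices, checking that this costs only a constant factor in the degree and the $[D\supernodefactor, 2D\supernodefactor]$ slack.
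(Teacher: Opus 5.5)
Your proposal is essentially the paper's proof. The paper invokes its extractor-based assignment theorem (\Cref{thm:big-one}), which resolves your ``main obstacle'' along the lines you suggest: it takes the bipartite double cover of an explicit Ramanujan graph on $N$ vertices and merges the committee side into $\lfloor N/\supernodefactor\rfloor$ blocks of $\lfloor N/\lfloor N/\supernodefactor\rfloor\rfloor$ vertices each, so no edge duplication is needed, and it derives the sampler bound with the bad set as the source and the exceptional blocks as the test set. It then does the same accounting of the two badness conditions that you do. The only real difference is that instead of your union bound over bad and byzantine supernodes, the paper notes that byzantine supernodes are bad, treats every bad supernode as fully byzantine, and makes a single application with $\alpha=\threshbad$, $\beta=\min\bigl(\frac{1}{\omega},\frac{1}{\eta+\varepsilon/4}-\frac{1}{\omega'+\varepsilon/2}\bigr)$ and $\mu=\threshbad/2$.
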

For the new supernodes' assignment, we use a more general form of \Cref{thm:initial-extractor}, described below. We prove this theorem in \Cref{section:extractor-magic}, also using an extractor.  
\begin{restatable}{theorem}{NodesToSuperNodes}
    \label{cor:nodes-to-supernodes}
    Assume $n$ parties, out of which up to $\frac{n}{\omega' + \varepsilon}$ can be byzantine. Then, there is a deterministic $\poly(n, 1/\varepsilon, \omega)$-time algorithm  that assigns the $n$ parties to $N$ supernodes such that: at most $ \threshbad / 2 \cdot N$ supernodes are byzantine;
    every party belongs to at most $D(\omega, \varepsilon)$ many supernodes;
    each supernode contains $D(\omega, \varepsilon) \cdot \lfloor n / N \rfloor$ parties.
\end{restatable}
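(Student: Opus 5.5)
We will build the assignment from a seeded extractor viewed as a bipartite graph, following the framework that \Cref{section:extractor-magic} sets up. Let $E : [n] \times [D] \to [N]$ be a $(k, \epsilon_0)$-extractor, where $n$ is the left size, $N$ the right size, $D = D(\omega,\varepsilon)$ the left degree, and we want right degree exactly $D \cdot \lfloor n/N \rfloor$. Party $i$ joins supernode $E(i, j)$ for every seed $j \in [D]$. Then every party belongs to at most $D$ supernodes by construction. For the equal-size requirement we will take $n' = N \lfloor n/N \rfloor$ of the parties and use a right-regular extractor. One option is an explicit construction that is biregular. A second option is to take the extractor on $[N\lfloor n/N\rfloor]$ and balance degrees by splitting the right vertices of a left-regular graph, then padding. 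If $n'<n$, the leftover parties go unassigned; this is harmless because the byzantine fraction among the chosen parties rises only by $O(N/n)$, which we absorb into $\varepsilon$.

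The key step is counting byzantine supernodes. Suppose the set $T$ of byzantine right vertices has $|T| > \threshbad/2 \cdot N$, where a supernode is byzantine if more than a $1/(\omega'+\varepsilon/2)$ fraction of its members are byzantine. Let $B$ be the byzantine parties, with $|B| \le n/(\omega'+\varepsilon)$. We use the extractor in its dual ("sampler") form. For the uniform distribution on a left set $X$ of size at least $2^k$, the distribution $E(U_X, U_D)$ is $\epsilon_0$-close to uniform on $[N]$. Equivalently, for every test set $T$ of right vertices, the fraction of edges leaving $X$ that land in $T$ is at most $|T|/N + \epsilon_0$.

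We apply this to the honest parties $X = [n']\setminus B$, which make up at least a $1 - 1/(\omega'+\varepsilon)$ fraction. Count edges into $T$. Each supernode in $T$ has fewer than a $1-1/(\omega'+\varepsilon/2)$ fraction of honest slots, so the edges from $X$ into $T$ number fewer than $|T|\cdot D\lfloor n/N\rfloor (1 - 1/(\omega'+\varepsilon/2))$. On the other hand, the sampler bound gives at least $|X| D (|T|/N - \epsilon_0)$ such edges. Combining the two, with $|X| \ge n'(1-1/(\omega'+\varepsilon))$, yields
\[
  \frac{|T|}{N}\Big(1-\tfrac{1}{\omega'+\varepsilon}\Big) - \epsilon_0 < \frac{|T|}{N}\Big(1-\tfrac{1}{\omega'+\varepsilon/2}\Big),
\]
so $\frac{|T|}{N}\big(\tfrac{1}{\omega'+\varepsilon/2}-\tfrac{1}{\omega'+\varepsilon}\big) < \epsilon_0$. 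Choosing $\epsilon_0 < \frac{\threshbad}{2}\cdot\big(\tfrac{1}{\omega'+\varepsilon/2}-\tfrac{1}{\omega'+\varepsilon}\big)$, which is a constant depending only on $\omega,\varepsilon$, contradicts $|T| > \threshbad N/2$.

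The main obstacle is the choice of extractor parameters. We need min-entropy $k \le \log|X|$, that is, $k \approx \log n - O(1)$ because $X$ has constant density, constant error $\epsilon_0$, output length $\log N$ with $N \le n$, and all of this with constant seed length so that $D$ is constant. Constant degree is only possible in the high-entropy, constant-error regime, where the entropy deficiency $\log n - k = O(1)$. Here the standard approach is to derive the extractor from a constant-degree expander, using the expander mixing lemma or a random walk of constant length. This gives $D = \poly(1/\epsilon_0, 2^{\Delta})$, where the deficiency is $\Delta = \log(\omega'+\varepsilon)$-ish. Output length $\log N < \log n$ is obtained by merging right vertices in blocks of size $\lfloor n/N\rfloor$, which preserves closeness to uniform and gives exactly the right degree $D\lfloor n/N\rfloor$. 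The construction is deterministic and $\poly(n,1/\varepsilon,\omega)$-time. So the concrete plan is: take a $D$-regular expander on $n'$ vertices, set $E(i,j)=\lceil \text{nbr}_j(i)/\lfloor n/N\rfloor\rceil$, and verify the sampler inequality via the mixing lemma with $\lambda/D \le \epsilon_0$.
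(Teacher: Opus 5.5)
Your mechanism is essentially the paper's (\Cref{thm:big-one}): a constant-degree extractor obtained from an expander, right vertices merged into blocks of size $s=\lfloor n/N\rfloor$, and an edge count against the set $T$ of byzantine supernodes. Counting edges out of the honest set, where the paper counts out of the padded byzantine set, is an equivalent variant.

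The gap is in how you obtain equal supernode sizes. You build the graph on only $n'=N\lfloor n/N\rfloor$ parties and leave the other $r=n \bmod N$ unassigned. You then claim the resulting increase in the byzantine fraction is absorbed into $\varepsilon$. It is not. The adversary knows the deterministic assignment and can place all of its $\alpha n$ corruptions, with $\alpha:=1/(\omega'+\varepsilon)$, among assigned parties. Their byzantine fraction then becomes $\alpha':=\alpha n/n'=\alpha\,(1+r/(sN))$, which can be a constant-factor increase when $N=\Theta(n)$. For example, for $n/2<N<n$ you get $n'=N$ and $\alpha'=n\alpha/N$. This exceeds $\beta:=1/(\omega'+\varepsilon/2)$ as soon as $n/N>(\omega'+\varepsilon)/(\omega'+\varepsilon/2)$. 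Your graph is biregular on the assigned parties, so the average byzantine fraction over supernodes is exactly $\alpha'>\beta$. Hence some supernode is byzantine. In fact, your own mixing estimate applied to the honest set shows that all but an $\epsilon_0/(\alpha'-\beta)$-fraction are. So the inequality $|X|\ge n'(1-1/(\omega'+\varepsilon))$ in your key step is false. The construction itself, not only its analysis, fails for such $N$, while the statement is claimed for every $1\le N\le n$.

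The repair, which is what the paper does, is to truncate on the supernode side instead:
\begin{itemize}
\item Build the $D$-regular extractor on all $n$ vertices.
\item Group the first $sN$ right vertices into $N$ blocks of $s$.
\item Discard the $r$ leftover right vertices together with their edges.
\end{itemize}
Every party then has degree at most $D$ (the statement only asks for ``at most''), and every supernode has degree exactly $Ds$. Your honest-side count now goes through with $|X|\ge(1-\alpha)n$. Let $T'$ be the union of the blocks of $T$ inside the original $n$-vertex right side, and write $e(X,T')$ for the number of edges between $X$ and $T'$. The bounds $e(X,T')\le(1-\beta)D|T'|$ and $e(X,T')>D|X|\,(|T'|/n-\epsilon_0)$ give $(\beta-\alpha)|T'|/n<\epsilon_0$. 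Since $|T'|=s|T|\ge n|T|/(2N)$, this yields $|T|<\threshbad N/2$ once $\epsilon_0\le(\beta-\alpha)\threshbad/4$. Note also that this step needs the lower tail $|T'|/n-\epsilon_0$ of the extractor inequality, not the upper tail you wrote; the two-sided definition supplies it.
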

Finally, each committee $C_i$ supersends its value obtained in $\supernodesCA^L$ to the new supernode $S_i$. We present the code below, and \Cref{lemma:reduce-supernodes} describes its guarantees.

\begin{dianabox}{$\reduceSupernodes^L$}
	\begin{algorithmic}[1]
    \Statex \textbf{Assigning supernodes to committees:}
    \State Assign the $N$ supernodes into $N' = \lfloor N / \supernodefactor \rfloor$ committees $C_1, \ldots, C_{N'}$ using \Cref{cor:nodes-to-coms}.
    \Statex
    \Statex \textbf{Assigning a value to each committee:}
    \State In each committee $C_i$, in parallel: the supernodes in $C_i$ run $\supernodesCA^L$ and obtain $v_i$.
    \Statex
    \Statex \textbf{Assigning parties and values to the new supernodes:}
    \State Assign the $n$ parties into $N'$ supernodes $S_1, S_2, \ldots S_{N'}$ using \Cref{cor:nodes-to-supernodes}.
    \State For each committee $C_i$, in parallel: 
    \State \hspace{0.5cm} Parties in $C_i$ send $v_i$ to the new supernode $S_i$ via $\supersend^{\delta \cdot L}$.
	\end{algorithmic}
\end{dianabox}

\begin{restatable}{lemma}{ReduceSupernodes} \label{lemma:reduce-supernodes}
Assume that, at the start of $\reduceSupernodes^L$,  the parties are arranged into $N$ supernodes 
of equal size of up to $\bigO(1) \cdot n / N$ each such that: each party belongs to up to $\bigO(1)$ supernodes, at most $\threshbad \cdot N$ of the supernodes are bad, and each non-byzantine supernode holds an $L$-bit input.

Then, $\reduceSupernodes^L$ rearranges the parties into $N' = \lfloor N/\supernodefactor\rfloor$ supernodes of equal size of up to $\bigO(1) \cdot n / N'$ such that: each party belongs to at most $D = \bigO(1)$ supernodes, at most  $\threshbad \cdot N'$ are bad, and each non-byzantine supernode holds an $(\delta \cdot L)$-bit input.
 $\reduceSupernodes^L$ has round complexity $\roundcost_n(\reduceSupernodes^L) = \bigO(n \supernodefactor / N)$  and communication complexity $\bitcost_{L,n}(\reduceSupernodes^L) =  \bigO(L \cdot n \cdot \supernodefactor + \kappa\cdot n^2 / N \cdot \log n \cdot\supernodefactor^2)$.
\end{restatable}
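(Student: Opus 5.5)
The proof splits into a correctness part and a complexity part, each following the three steps of $\reduceSupernodes^L$.

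\emph{Structure of the new supernodes.} By \Cref{cor:nodes-to-supernodes}, the new supernodes have equal size $D(\omega,\varepsilon)\cdot\lfloor n/N'\rfloor = \bigO(1)\cdot n/N'$. Each party lies in at most $D=\bigO(1)$ of them, and at most $\threshbad/2\cdot N'$ of them are byzantine. This assignment depends only on $n$, $N'$ and public parameters, so it is fixed in advance. The adaptive adversary's at most $t<n/(\omega'+\varepsilon)$ corruptions therefore fall under the theorem's hypothesis regardless of when they are made.

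\emph{Confused supernodes.} Every new supernode $S_i$ that is non-byzantine but confused must come from a bad committee $C_i$. Suppose $C_i$ is good. Then the virtual parties of $C_i$ form a $\ba$-friendly group, fewer than an $1/\omega$-fraction of its supernodes are bad, and the good supernodes hold $L$-bit valid inputs. By \Cref{lemma:supernode-ca}, all honest parties of $C_i$ output the same $(\delta L)$-bit value $v_i$, and $v_i$ lies in the hull of the good supernodes' inputs. These inputs are valid, so $v_i$ lies in the honest inputs' convex hull, since that hull is convex and contains them.

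Next $v_i$ is supersent from $C_i$ to $S_i$. $S_i$ is non-byzantine, so its byzantine fraction is below $1/(\omega'+\varepsilon/2)\le 1/(\eta+\varepsilon/4)$, and it is therefore $\ba$-friendly. The honest virtual parties of $C_i$ form a majority because $C_i$ is good. \Cref{lemma:supersending} then gives every honest party of $S_i$ the same value $v_i$, so $S_i$ is good.

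Counting, the number of bad new supernodes is at most (byzantine new supernodes) $+$ (bad committees) $\le \threshbad/2\cdot N' + \threshbad/2\cdot N' = \threshbad N'$, using \Cref{cor:nodes-to-coms}. The output of \Cref{lemma:supersending} has bit-length at most $\delta L$. So every non-byzantine supernode, including a confused one, holds a $(\delta L)$-bit input or $\bot$. If needed, $\bot$ can be replaced by a default value and the supernode counted as confused.

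\emph{Complexity.} Each committee has $n_C=\bigO(\supernodefactor)$ supernodes by \Cref{cor:nodes-to-coms}. Each supernode has $n_S=\bigO(n/N)$ parties, and there are $N'=N/\supernodefactor$ committees running in parallel. \Cref{lemma:supernode-ca} gives $\bigO(n_C n_S)=\bigO(\supernodefactor n/N)$ rounds. It gives total communication
\[
N'\cdot\bigO\bigl(L\supernodefactor^2 n/N+\kappa\supernodefactor^3 (n/N)^2\log n\bigr)=\bigO\bigl(L n\supernodefactor+\kappa\, n^2/N\cdot\supernodefactor^2\log n\bigr),
\]
using $N'(n/N)=n/\supernodefactor$.

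The final supersend step runs from a group of size $n_A=\bigO(\supernodefactor n/N)$ to one of size $n_B=\bigO(n/N')=\bigO(\supernodefactor n/N)$, with message length $\delta L$. Summed over the $N'$ pairs, \Cref{lemma:supersending} gives
\[
\bigO\bigl(N'\cdot(\delta L\supernodefactor n/N+\kappa\log n\,(\supernodefactor n/N)^2)\bigr),
\]
which is within the same bound since $\delta=\bigO(1)$. It also gives $\bigO(\supernodefactor n/N)$ rounds.

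The main obstacle is the bookkeeping that makes the invariants compose across the two extractor assignments. One concern is that bad committees are defined through the virtual-party fraction, and \Cref{cor:nodes-to-coms} must really control that fraction; this is an assumption we take from that theorem. The other is the claim that a $\bot$ or garbage output in a bad committee can only corrupt the single supernode $S_i$ it feeds, which is what the one-to-one indexing of committees to new supernodes guarantees.
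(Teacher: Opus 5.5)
Your proposal is correct and follows essentially the same route as the paper. You bound the number of bad committees with \Cref{cor:nodes-to-coms}, apply \Cref{lemma:supernode-ca} in good committees and \Cref{lemma:supersending} into the $\ba$-friendly non-byzantine new supernodes, count bad supernodes as $\threshbad/2 \cdot N' + \threshbad/2 \cdot N'$, and obtain the complexity by plugging $n_C=\bigO(\gamma)$, $n_S=\bigO(n/N)$ and $n_B=\bigO(\gamma n/N)$ into the two lemmas; your extra remarks on the $\bot$ outcome and on adaptive corruptions make explicit points the paper leaves implicit, without changing the argument.
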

\begin{proof}
We first establish the sizes of the initial supernodes, of the committees, and of the new supernodes.
We denote by $n_A = \bigO(n/N)$ the size of the largest initial supernode.
Denote by $n_C$ the maximum number of supernodes within a committee: these are obtained via \Cref{cor:nodes-to-coms} and, as $\omega$ and $\varepsilon$ are constants, we obtain that $n_C$ is between $D(\omega, \varepsilon) \cdot \lfloor \supernodefactor \rfloor$ and $2 \cdot D(\omega, \varepsilon) \cdot \supernodefactor$, hence $n_C \in \bigO(\supernodefactor)$. Finally, we denote by $n_B$ the largest size of a new supernode (obtained in line 3). As the new supernodes are obtained via \cref{cor:nodes-to-supernodes}, $n_B = \bigO(1)\cdot n\supernodefactor/N$. 

For the communication complexity of $\reduceSupernodes^L$, we need to recall that $\delta \in \bigO(1)$.
By \Cref{lemma:supernode-ca}, running $\supernodesCA^L$ within a single committee has communication complexity $\bitcost_{L, n_C, n_A}(\supernodesCA^L)$. 
Hence, step $2$ of $\reduceSupernodes^L$ has communication complexity $\bigO(L \cdot \supernodefactor \cdot n + \kappa \supernodefactor^2 \cdot (n^2/N) \cdot  \log (\supernodefactor n/N))$. 
In addition, step $2$ provides each committee with an $\bigO(\delta \cdot L)$-bit value.
This is followed in step 3 by $N'$ parallel executions of $\supersend^{\delta \cdot L}$ from groups of $n_C \cdot n_A$ virtual parties to groups of $n_B$ virtual parties. By \Cref{lemma:supersending}, this adds a total of $N' \cdot \bitcost_{\delta \cdot L, n_C \cdot n_A, n_B}(\supersend^{\delta \cdot L}) = \bigO(L \cdot n + \kappa \cdot \supernodefactor \cdot n^2 / N \cdot \log(\supernodefactor \cdot n / N))
$ bits.  Summing these up, we obtain a total of $\bigO(L \cdot n \cdot \supernodefactor + \kappa\cdot n^2 / N \cdot \log n \cdot\supernodefactor^2)$ bits of communication, as claimed in the lemma statement.

For the round complexity, $\reduceSupernodes^L$ runs $\supernodesCA^L$ within committees of up to $n_C$ supernodes of size up to $n_A$ each,
and afterwards $\supersend^{\delta \cdot L}$ from groups of up to $n_A \cdot n_C$ parties to groups of up to $n_B$ parties. Applying the round complexities stated in \Cref{lemma:supernode-ca} for $\supernodesCA^L$ and \Cref{lemma:supersending} for $\supersend^{\delta \cdot L}$ leads to a total of $\bigO(n \supernodefactor / N)$ rounds, as claimed in the lemma's statement. 

It remains to discuss the properties of the new supernodes. \Cref{cor:nodes-to-coms} ensures that at most $ \threshbad / 2 \cdot N'$ of the $N'$ committees are bad, while the remaining committees are good. Note that the requirements of \Cref{lemma:supernode-ca} are indeed satisfied by a good committee as at most $\frac{1}{\eta + \varepsilon/4}$ fraction of (virtual) parties in $C_i$ are byzantine, which ensures that $C_i$ is $\ba$-friendly, and less than a $\frac{1}{\omega}$ fraction of the supernodes in $C_i$ are bad. Hence, \Cref{lemma:supernode-ca}  ensures that honest parties within a good committee $C_i$ agree on a value $v_i$ that is within the honest convex hull of the good supernodes' inputs.

Next, \Cref{cor:nodes-to-supernodes} ensures that at most $\threshbad / 2 \cdot N'$ of the new supernodes are byzantine.
\Cref{lemma:supersending} ensures that, regardless of the nature of the committee $C_i$, if new supernode $S_i$ is non-byzantine, then all honest parties in $S_i$ agree on the same (possibly non-valid) $(\delta \cdot L)$-bit input $v_i$ for $S_i$. If, in addition, committee $C_i$ is good, $v_i$ is valid. Consequently, a new supernode can only be bad if it is byzantine, which there are at most $\threshbad / 2 \cdot N'$, or if it is confused because its value comes from a bad committee, which there are at most $\threshbad / 2 \cdot N'$. Hence there are at most $\threshbad \cdot N'$ bad supernodes, and the remaining supernodes are good and hold $(\delta \cdot L)$-bit valid inputs. In addition, the properties on the new supernodes' size and on the number of  new supernodes a party belongs to follow from \Cref{cor:nodes-to-supernodes}.
\end{proof}

\subsection{Putting It All Together}\label{subsection:ca-L-analysis}

We are now ready to discuss the guarantees of $\Pi_{\ca}^L$. The theorem below states these for a general convexity space with dilation factor $\delta$.
\begin{theorem} \label{theorem:ca-L-final}
    Consider an arbitrary constant $\varepsilon > 0$ and a convexity space $\mathcal{C}$ with constant Helly number $\omega \geq 2$ and constant dilation factor $\delta$. 
    $\Pi_{\ca}^L$ achieves $\ca$ on $\mathcal{C}$ whenever honest parties hold $L$-bit inputs from $\mathcal{C}$ and up to $t < n / (\omega'  + \varepsilon)$ of the $n$ parties involved are byzantine. $\Pi_{\ca}^L$ has round complexity  $\roundcost_{n}(\Pi_{\ca}^L) = \bigO(n)$ and communication complexity:
    $
        \bitcost_{L, n}(\Pi_{\ca}^L) =  \bigO(Ln \supernodefactor \cdot  \log_{\supernodefactor}(n) \cdot \delta^{\log_{\supernodefactor}(n)} + \kappa \cdot n^2 \log n \cdot \supernodefactor^2 ) .
    $
\end{theorem}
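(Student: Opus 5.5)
The plan is an induction over the iterations of the while-loop of $\Pi_{\ca}^L$, followed by an analysis of the final step and then by summing the costs.

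\textbf{Correctness.} By \Cref{lemma:init-supernodes}, after $\initSupernodes^L$ there are $N=n$ supernodes of size $D(\omega,\varepsilon)=\bigO(1)=\bigO(1)\cdot n/N$. Each party lies in $\bigO(1)$ of them, at most $\threshbad\cdot N$ are bad, and every non-byzantine supernode holds a $(\delta L)$-bit value, matching the update $L:=\delta L$ in line 2. This is exactly the precondition of \Cref{lemma:reduce-supernodes}, whose conclusion reproduces the same precondition with $N'=\lfloor N/\supernodefactor\rfloor$ and bit-length multiplied by $\delta$. So by induction, after $r$ iterations every non-byzantine supernode holds a $(\delta^{r+1}L)$-bit input, and at most a $\threshbad$-fraction are bad. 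The loop stops after $R\le\log_{\supernodefactor} n$ iterations with $N<\supernodefactor$ supernodes.

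For the final step, I would view the remaining $N$ supernodes as one committee and check the hypotheses of \Cref{lemma:supernode-ca}. The bad fraction is at most $\threshbad\le 1/(\omega'+\varepsilon)<1/\omega$. For $\ba$-friendliness, the byzantine virtual parties come from byzantine supernodes, contributing at most $\threshbad$ of the total, and from non-byzantine supernodes, each with less than a $1/(\omega'+\varepsilon/2)$ fraction. Since all supernodes have equal size, the overall byzantine fraction is less than $\threshbad+1/(\omega'+\varepsilon/2)\le 1/(\eta+\varepsilon/3)<1/(\eta+\varepsilon/4)$, by the second term in the definition of $\threshbad$. The main obstacle is that this counting must remain valid when $N$ is tiny, e.g.\ when $\threshbad N<1$ forces zero bad supernodes, and the integer threshold $t_C=\lceil N/\omega\rceil-1$ in $\supernodesCA^L$ must still dominate the number of bad supernodes. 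I would handle this by observing that the invariant gives a bad count of at most $\lfloor\threshbad N\rfloor\le\lceil N/\omega\rceil-1$. Then \Cref{lemma:supernode-ca} yields a common valid value $v_{\outputt}$. In $\supersend^{\delta L}$ more than half of the sending virtual parties are honest and agree on $v_{\outputt}$. The receivers, namely all $n$ parties, contain fewer than $n/(\omega'+\varepsilon)\le n/(\eta+\varepsilon/4)$ byzantine parties, so the receivers are $\ba$-friendly. Then \Cref{lemma:supersending} gives Agreement and Convex Validity. Termination follows from the bounded round complexities.

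\textbf{Complexity.} In iteration $r$ the current bit-length is $L_r=\delta^{r+1}L$ and there are $N_r\approx n/\supernodefactor^r$ supernodes. By \Cref{lemma:reduce-supernodes}, iteration $r$ costs $\bigO(L_r n\supernodefactor+\kappa n^2/N_r\cdot\log n\cdot\supernodefactor^2)$ bits and $\bigO(n\supernodefactor/N_r)$ rounds. Summing over at most $\log_{\supernodefactor} n$ iterations bounds the first terms by $\bigO(Ln\supernodefactor\log_{\supernodefactor}(n)\,\delta^{\log_{\supernodefactor} n})$. The second terms form a geometric series in $1/N_r$, dominated by the last iteration where $N_r=\Theta(\supernodefactor)$; this gives $\bigO(\kappa n^2\log n\cdot\supernodefactor^2)$ up to the $\supernodefactor$ factors, and the exact power of $\supernodefactor$ needs to be checked against the series. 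The rounds likewise sum geometrically to $\bigO(n)$. The final $\supernodesCA$ has $n_C<\supernodefactor$ supernodes of size $n_S=\bigO(n/\supernodefactor)$. The final $\supersend$ goes from $\bigO(n)$ virtual parties to $n$ receivers. Both fit within the claimed bounds by \Cref{lemma:supernode-ca} and \Cref{lemma:supersending}, and they cost $\bigO(n)$ rounds. \Cref{lemma:init-supernodes} adds only $\bigO((L+\kappa)n)$ bits and $\bigO(n)$ rounds.
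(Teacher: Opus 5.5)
Your proposal is correct and follows the paper's proof essentially step for step. It uses the same induction via \Cref{lemma:init-supernodes} and \Cref{lemma:reduce-supernodes}, the same $\ba$-friendliness bound $\threshbad + 1/(\omega'+\varepsilon/2)\le 1/(\eta+\varepsilon/3)$ for the final committee followed by $\supersend$ to all $n$ parties, and the same geometric summation of costs.

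One small slip does not affect the result. The final supernodes have size $\bigO(n/N_\star)$, not $\bigO(n/\supernodefactor)$, since one may end with $N_\star=1$. The final $\supernodesCA$ still costs $\bigO(L_R\, n\supernodefactor+\kappa n^2\supernodefactor\log n)$ because $n_C n_S=\bigO(n)$ and $n_C<\supernodefactor$. Finally, consecutive $1/N_r$ grow by a factor at least $\supernodefactor\ge 2$, so the loop's $\kappa$-terms sum to at most twice the last one; this confirms the $\supernodefactor^2$ power you left open.
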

\begin{proof}
We first record the invariant maintained by the protocol after each reduction step. For $i \geq 1$, let $N_i$ denote the number of supernodes after the $i$-th execution of $\reduceSupernodes$: $N_i = \Theta(n/\supernodefactor^i)$, as the floors affect the bounds only by constant factors. One can show that the properties below hold at the end of every iteration $i \geq 1$ using induction: the base case follows from \Cref{lemma:init-supernodes} for the initial assignment followed by \Cref{lemma:reduce-supernodes} for the first iteration, and the induction step follows from \Cref{lemma:reduce-supernodes}.

\begin{enumerate}[label=(\Alph*)]
\item The parties are arranged into $N_i$ supernodes.
\item The fraction of bad supernodes is at most $\threshbad$.
\item Every non-byzantine supernode is assigned an $L_i$-bit input, where $L_i = \delta^{i+1} \cdot L$.
\item Each party belongs to at most $D = \bigO(1)$ supernodes.
\item Each supernode has size $\bigO(1) \cdot n / N_i$.
\item Iteration $i$ is completed within $\bigO(n\supernodefactor / N_i)$ rounds, with communication complexity $\bigO(L_i \cdot n \cdot \supernodefactor + \kappa \cdot n^2/N_i \cdot \log n \cdot \supernodefactor^2)$.
\end{enumerate}

We now bound the round complexity and communication complexity of $\Pi_{\ca}^L$. 
Before the loop, $\Pi_{\ca}^L$ runs $\initSupernodes^L$, which adds $\bigO(n)$ rounds and $\bigO((L+\kappa)n)$ bits of communication by \Cref{lemma:init-supernodes}. Then, let $i_{\max}$ be the number of executions of $\reduceSupernodes^L$: since the protocol exits the loop once the number of supernodes is below $\supernodefactor$, we have $i_{\max} \leq \lfloor \log_{\supernodefactor} n \rfloor$ by Property (A). By Property~(F), the number of rounds spent inside the loop is at most:
\begin{align*}
\sum_{i=1}^{i_{\max}} \bigO(n\supernodefactor/N_i)
\leq \bigO \left( \sum_{i =0}^{\lfloor \log_{\supernodefactor}(n) \rfloor - 1} \supernodefactor^{i+1} \right) 
\leq \bigO\left(  \sum_{i = 0}^{\lfloor \log_{\supernodefactor}(n) \rfloor} \supernodefactor^i \right)
= \bigO\left(  \frac{\supernodefactor^{\lfloor \log_{\supernodefactor}(n) \rfloor + 1} - 1}{\supernodefactor - 1}\right).
\end{align*}
Since $\supernodefactor > 1$, this is $\bigO(\supernodefactor^{\log_{\supernodefactor}(n)}) = \bigO(n)$.

Again using Property~(F), the communication complexity of the loop is bounded by
\begin{align*}
&\sum_{i=1}^{i_{\max}} 
\bigO\left(L_i \cdot n \cdot \supernodefactor
+ \kappa \cdot n^2/N_i \cdot \log n \cdot \supernodefactor^2\right) \\
&\qquad\leq
\bigO(Ln\supernodefactor)
\sum_{i=0}^{\lfloor \log_{\supernodefactor} n \rfloor}\delta^{i+1}
+
\bigO(\kappa n \log n \cdot \supernodefactor^2)
\sum_{i=0}^{\lfloor \log_{\supernodefactor} n \rfloor}\supernodefactor^i \\
&\qquad\leq
\bigO\left(Ln\supernodefactor \cdot \log_{\supernodefactor} n
\cdot \delta^{\log_{\supernodefactor} n}\right)
+
\bigO\left(\kappa n^2 \log n \cdot \supernodefactor^2\right).
\end{align*}

We still need to account for the final step, invoking $\supernodesCA^{L'}$ and $\supersend^{L'},$ where $L' := L_{i_{\max}} \leq  \delta^{\lfloor \log_{\supernodefactor} n \rfloor+1}\cdot L$. Let $N_\star$ be the number of supernodes when the loop terminates: by the loop exit condition, $N_\star \leq \supernodefactor$. The protocol runs $\supernodesCA^{L'}$ on these $N_\star$ supernodes. At this stage, by Property~(C), every non-byzantine supernode holds an $L'$-bit input, and by Property~(E), each remaining supernode has size $n_S = \bigO(n/N_\star)$.
Applying \Cref{lemma:supernode-ca} with $n_C = N_\star$ and $n_S = \bigO(n/N_\star)$, the final invocation of $\supernodesCA^{L'}$ adds $\bigO(n)$ rounds and  $\bigO\left( L  n\supernodefactor \cdot \delta^{\lfloor \log_{\supernodefactor} n \rfloor+1} + \kappa n^2\supernodefactor\log n \right)$ bits. Next, $\supersend^{\delta \cdot L'}$ is invoked with a group of up to $ \bigO(n)$ parties as senders with $(\delta \cdot L')$-bit inputs and a group of up to $n$ parties as recipients. By \Cref{lemma:supersending}, this adds $\bigO(n)$ rounds and $\bigO(L n \cdot \delta^{\lfloor \log_{\supernodefactor} n \rfloor+2} + \kappa n^2 \log n)$ bits.

Recalling that $\delta = \bigO(1)$ and summing up the costs of the initialization step, the loop, and the final, we obtain $\roundcost_n(\Pi_{\ca}^L)=\bigO(n)$ and $\bitcost_{L,n}(\Pi_{\ca}^L)=\bigO(Ln\supernodefactor \cdot \log_{\supernodefactor} n \cdot \delta^{\log_{\supernodefactor} n} + \kappa n^2\log n \cdot \supernodefactor^2)$, as claimed.

It remains to prove correctness. At the beginning of $\supernodesCA^{L'}$, fewer than a $\threshbad \leq 1/(\omega' + \varepsilon)$ fraction of the final $N_{\star}$ supernodes are bad by Property (B), and every good supernode holds an $L'$-bit input by Property (C). Since $\omega' \geq \omega$, the fraction of bad supernodes is strictly less than $1/\omega$, so the validity condition required by \Cref{lemma:supernode-ca} is satisfied.
We also need to verify that $\supernodesCA^{L'}$ runs in a $\ba$-friendly setting. The fraction of byzantine supernodes is at most $\threshbad$, and every non-byzantine supernode contains less than a $1/(\omega' + \varepsilon/2)$ fraction of byzantine parties. Hence the fraction of byzantine virtual parties in the final $\supernodesCA^{L'}$ invocation is at most $\threshbad + (1-\threshbad)\cdot 1/(\omega' + \varepsilon/2) \leq \threshbad + 1/(\omega' + \varepsilon/2) \leq 1/(\eta+\varepsilon/3) < 1/(\eta+\varepsilon/4)$, where the penultimate inequality follows from the definition of $\threshbad$. Therefore, the final invocation of $\supernodesCA^{L'}$ is $\ba$-friendly.
Hence, by \Cref{lemma:supernode-ca}, all honest parties appearing in the final supernodes obtain the same $(\delta \cdot L')$-bit value $v_{\outputt}$ in $\supernodesCA^{L'}$, and this value lies in the convex hull of the good supernodes' inputs. Since every good supernode holds a value in the convex hull of the honest parties' original inputs, $v_{\outputt}$ also lies in the convex hull of the original honest parties' inputs. 

Afterwards, the final $\supersend^{\delta \cdot L'}$ invocation runs in a setting where the group of receivers is the group of $n$ parties, and therefore this is $\ba$-friendly. Moreover, the group of senders is the group of (virtual) parties among the $N_{\star}$ final supernodes: we have already established that this is a $\ba$-friendly setting, and therefore this group has an honest (virtual) majority. In addition, all honest virtual parties in the group of senders join  $\supersend^{\delta \cdot L'}$  with the $(\delta \cdot L')$-bit value $v_{\outputt}$ as input. Then \Cref{lemma:supersending} ensures that all parties output $v_{\outputt}$ and, 
consequently, $\Pi_{\ca}^L$ achieves $\ca$.
\end{proof}

We may return to the theorems stated at the beginning of the section, namely \Cref{theo:abstract-for-1-fixed-L} and \Cref{theo:abstract-for-finite-fixed-L}: these follow from \Cref{theorem:ca-L-final} by instantiating $\Pi_{\ba}$ (depending on whether we are working in an authenticated setting) and the factor $\supernodefactor$. For  \Cref{theo:abstract-for-1-fixed-L}, we set $\supernodefactor := 2$, leading to  communication complexity $\bigO(Ln \cdot 2 \cdot  \log_{2}(n) 1^{\log_{2}(n)} + \kappa \cdot n^2 \log n \cdot 2^2) = \bigO(Ln \cdot  \log(n) + \kappa \cdot n^2 \log n)$. For \Cref{theo:abstract-for-finite-fixed-L}, setting $\supernodefactor := \log \log n$ leads to the communication complexity claimed in the theorem's statement:
\begin{align*}
    \bigO(Ln \cdot \log \log n \cdot  \log_{\log \log n}(n) \cdot  \delta^{\log_{\log \log n}(n)} + \kappa \cdot n^2 \log n \cdot \log^2 \log n  ) \\
     = \bigO(Ln^{1+o(1)} \cdot n^{\frac{\log \delta}{\log \log \log n}} + \kappa \cdot n^2 \log^{1+o(1)} n ) \\
     =  \bigO(Ln^{1+o(1)}  + \kappa \cdot n^2 \log^{1+o(1)} n ).
\end{align*}
\section{Protocol for Unknown Inputs' Bit-Lengths} \label{section:unknown-L}

In \Cref{section:general-protocol}, we have described our protocol $\Pi_{\ca}^{L}$ that achieves $\ca$ with near-optimal resilience  $t < n/ (\omega + \varepsilon)$, given a publicly known upper bound $L$ on the honest inputs' bit-lengths. We now remove this assumption for the lower resilience threshold $t < n/(\omega + 1 + \varepsilon)$.

\paragraph{On lower resilience.}
We first highlight the reasons behind the reduced resilience threshold: when $t \in [n/(\omega + 1), n / \omega)$, the $t$ byzantine parties may follow the protocol correctly with inputs of their own choice, impacting safe area computations. 
To be more specific, the safe area may be reduced to a single point whose length depends on the byzantine inputs' length. For example, in $\mathbb{R}^d$ (with $d \geq 2$ and straight-line convexity), one can find a set of $d+2$ values from
$\mathbb{N}^d$ where $d + 1$ of them have bit-length $\bigO(d)$, and the remaining value $v_b$ has larger bit-length $L_b$. The $\bigO(d)$-bit values are assigned as inputs to the honest parties, while the byzantine parties follow the protocol correctly, but with input $v_b$. In this case, the resulting safe area may consist of a single point of size $\Theta(L_b)$.
As a consequence, protocols where honest parties send intermediate values computed via safe area have unbounded bit complexity if $t \geq n/(\omega + 1)$ and $L$ is unknown.

\paragraph{Starting point.}
We present a mechanism that provides a reasonable estimation $\widetilde{L}$ on $L$. This generalizes the mechanism of \cite{PODC:GhiLiuWat25} from $\mathbb{N}$ to arbitrary input spaces. This mechanism of \cite{PODC:GhiLiuWat25} first verifies whether the inputs are short, i.e., up to $n^2$ bits, using a bit $\ba$ protocol, and uses different approaches depending on this check.
If the inputs are short, the parties obtain a potentially tighter estimation $\widetilde{L}$ by comparing their inputs' bit-lengths with powers of two, and otherwise they rely on a high-communication $\ca$ protocol. A key insight for $\naturalnumbers$ is that, if a party $\party$ holds an $L$-bit input $v_{\inputt} \in \naturalnumbers$ and the estimation agreed upon is $\widetilde{L} < L$, then $\party$ may replace its input with $2^{\widetilde{L}} - 1$: this is valid as there is an honest input $v_{\inputt}' \leq 2^{\widetilde{L}} - 1 < v_{\inputt}$. In an arbitrary convexity space, this property is lost.

\paragraph{Finding a \emph{large} estimation.}
To work around this, we first strengthen the mechanism to provide the honest parties with an upper bound $\widetilde{L}$ towards the \emph{highest} honest input bit-length. As discussed in \cite{IEEE:MelWat18}, the up to $t$ byzantine parties prevent us from identifying the concrete highest honest bit-length. However, we can agree on a value $\widetilde{L}$ between the $(n - 2t)$-th honest input bit-length and the highest honest input bit-length.  This only leaves up to $t$ honest parties with inputs of more than $\widetilde{L}$ bits, hence unable to join $\Pi_{\ca}^{\widetilde{L}}$ with their inputs.
Hence, we proceed as follows: the parties first exchange their input lengths, and each party denotes the $(n-t)$-th lowest value received by $L'$. Then, similarly to \cite{PODC:GhiLiuWat25}, we use different approaches depending on whether $L' \leq n^2$ or not: this is decided with a $\ba$ protocol $\Pi_{\ba}$.
We remark that, as $\omega \geq 2$, our lower resilience threshold $t < n/(\omega + 1+\varepsilon) < n/3$ enables us to rely on the unauthenticated protocol $\Pi_{\ba}$ described by \Cref{thm:magic-ba-no-pki}.
Then, the parties obtain the upper bound $\widetilde{L}$ by running a subprotocol $\exponentialSearch$ if they agreed on $L' \leq n^2$, and by running a high-communication $\ca$ protocol on $\mathbb{N}$, denoted by $\largebitcostca$, otherwise. We describe these in \Cref{subsection:naive-approximation} and \Cref{subsection:high-communication-CA} respectively.
Either way, the parties obtain a value $\widetilde{L}$ between the $(n - 2t)$-th lowest honest input bit-length and at most $2 \cdot L$, where $L$ is the highest honest input bit-length. The up to $t$ honest parties holding inputs of more than $\widetilde{L}$ bits replace their inputs with (some default $\widetilde{L}$-bit value) $\bot$. 

\paragraph{Providing more honest parties with inputs.}
At this point, we have up to $t < n / (1 + \omega + \varepsilon)$ byzantine parties, plus up to $t$ honest parties with input $\bot$ -- this exceeds the number of corruptions that $\Pi_{\ca}^{\widetilde{L}}$ can tolerate. We decrease the number of honest parties holding $\bot$ by relying on committees responsible for providing each party with a valid $(\delta \cdot \widetilde{L})$-bit input, where $\delta$ is the dilation factor of the input space. For the committee assignment, we use the theorem below, which we prove in \Cref{section:extractor-magic}. The fraction $\mu$ is chosen so that, even if a $\mu$-fraction of the honest parties join $\Pi_{\ca}^{\delta \cdot \widetilde{L}}$  with non-valid inputs, $\Pi_{\ca}^{\delta \cdot \widetilde{L}}$ still succeeds. We add that $D := D(\omega, \varepsilon) \in \bigO(1)$, and this will be defined in \Cref{section:extractor-magic}.

\begin{restatable}{theorem}{UnknownLCommitteeAssignment}\label{lemma:unkown-committee-assigment}
    Assume $n \geq 1$, $\varepsilon > 0$, $t < n/ (\omega + 1 + \varepsilon)$.
    Let $\mu := \frac{1}{\omega+1+\varepsilon/2}-\frac{1}{\omega+1+\varepsilon}$.
    Then, there is a deterministic $\poly(n, 1/\varepsilon, \omega)$-time algorithm that assigns $n$ parties into $n$ committees such that each committee has size $D(\omega, \varepsilon)$. Moreover, if at most $t$ parties are byzantine and at most $t$ are honest with input $\bot$, then at most $\mu \cdot n$ committees have more than a $1/(\omega + 1)$-fraction of byzantine parties or more than a $1/(\omega+1)$-fraction of honest parties with input $\bot$.
\end{restatable}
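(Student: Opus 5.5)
We will build the assignment from a bipartite extractor graph $G=(V_L,V_R,E)$ with $|V_L|=|V_R|=n$, where left vertices are parties, right vertices are committees, every right vertex has degree exactly $D$ and every left vertex has degree $O(D)$. We use the standard deterministic, polynomial-time construction developed in \Cref{section:extractor-magic}, also used for \Cref{thm:initial-extractor}. The property we need is a sampler-type statement. For every set $T\subseteq V_L$ with $|T|\le n/(\omega+1+\varepsilon)$, at most $\mu n/2$ right vertices $R$ satisfy
\[
|N(R)\cap T|>D/(\omega+1).
\]
Equivalently, $T$ has relative density at most $1/(\omega+1+\varepsilon)$, and few committees see it with density above $1/(\omega+1)$. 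The gap between these two densities is a constant depending only on $\omega,\varepsilon$. So this is exactly the guarantee of a $(k,\epsilon')$-extractor viewed as an averaging sampler (Zuckerman's equivalence), with accuracy $\epsilon'<\frac{1}{\omega+1}-\frac{1}{\omega+1+\varepsilon}$ and failure fraction $\mu/2$. Here $\mu$ is also a positive constant, so the degree $D(\omega,\varepsilon)$ is a constant.

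The proof proceeds in three steps. First, fix the assignment from the graph and note that it does not depend on who is byzantine or who holds $\bot$; this independence is what gives robustness to an adaptive adversary. Second, let $T_1$ be the set of byzantine parties and $T_2$ the set of honest parties with input $\bot$, each of size at most $t<n/(\omega+1+\varepsilon)$. Applying the sampler property to $T_1$ and to $T_2$ separately, each yields at most $\mu n/2$ bad committees. Third, a union bound gives at most $\mu n$ committees that exceed a $1/(\omega+1)$-fraction of either type, as the theorem requires. If the graph's right side is larger than $n$, or its degrees are unbalanced, we would truncate or merge committees, for instance by taking the right side of size $n$ through a suitable choice of the seed length. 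This may require padding $n$ up to a power of two and collapsing vertices, which loses only constant factors.

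The main obstacle is turning the extractor's statistical-distance guarantee into the one-sided density statement with explicit constants. Concretely, let $B$ be the set of committees whose neighborhood has density above $1/(\omega+1)$ in $T$. Suppose $|B|>\mu n/2$. Take $X$ uniform over $B$, which is a flat source of min-entropy $\log(\mu n/2)$ if we use the extractor in the reverse direction. Alternatively, and more directly, take $X$ uniform on parties and use the condenser/sampler dual form. In either case the test set $T$ is hit with probability at least $1/(\omega+1)$, whereas the uniform distribution hits it with probability at most $1/(\omega+1+\varepsilon)$. This contradicts $\epsilon'$-closeness once $\epsilon'$ is below the gap. Getting the sides oriented correctly, and ensuring that constant-degree explicit extractors exist for constant entropy deficiency $\log(2/\mu)$ and constant error, is where the work lies. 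Here we rely on the construction and the definition of $D(\omega,\varepsilon)$ in \Cref{section:extractor-magic}. The running time is polynomial because the construction is explicit.
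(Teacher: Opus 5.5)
Your proposal is correct and matches the paper's proof. The property you isolate is exactly \Cref{thm:big-one} with $\alpha = 1/(\omega+1+\varepsilon)$, $\beta = 1/(\omega+1)$, $m = n$ and $\mu' = \mu/2$, giving committees of size exactly $D\lfloor n/m\rfloor = D$, and the paper likewise applies it separately to the byzantine parties and to the honest parties with input $\bot$ before the union bound. The one difference is the orientation of the extractor argument: the paper's proof of \Cref{thm:big-one} uses the bad parties as the flat source and the bad committees as the test set, while your sampler orientation uses the bad committees as the source. Yours needs the extractor property from the committee side, which holds because the graph of \Cref{thm:finding-the-extractor} is the symmetric bipartite double cover of a Ramanujan graph on arbitrary $n$. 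So your padding/merging contingency is unnecessary, and it would in fact threaten the exact committee size $D$.
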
 
\Cref{lemma:unkown-committee-assigment} provides us with $n$ committees $C_1, \ldots, C_n$, where committee $C_i$ is responsible for obtaining a valid value for party $\party_i$. To obtain these values, the parties inside each committee $C_i$ run a subprotocol $\botCommitteeCA$. This will be a variant of $\supernodesCA^L$ that allows honest parties to join with input $\bot$, which we describe in \Cref{subsection:ca-bot}.

After this step, we are working in a setting where up to $t' < n / (\max(\omega, 3) + \varepsilon / 2)$ parties are either byzantine, or honest but hold a non-valid input. This allows us to run (the unauthenticated protocol) $\Pi_{\ca}^{\delta \cdot \widetilde{L}}$ for corruption threshold $t'$. Then, each party joins $\Pi_{\ca}^{\delta \cdot \widetilde{L}}$ either with the input obtained from its committee (if any) or $\bot$, and $\Pi_{\ca}^{\delta \cdot \widetilde{L}}$  provides the parties with their final outputs.

\begin{dianabox}{$\Pi_{\ca}$}
	\algoHead{Code for party $\party$ with input $v_{\inputt}$}
	\begin{algorithmic}[1]
    \Statex \textbf{Obtaining an upper bound $\widetilde{L}$ on the inputs' length:}
    \State Send  $|v_{\inputt}|_b$ to all parties, and let $L'$ be the $(n-t)$-th lowest value among the ones received.
    \State Set $b_{\inputt} := 1$ if $L' > n^2$ and $b_{\inputt} := 0$ otherwise.  Join $\Pi_{\ba}$ with input $b_{\inputt}$, obtain output $b_{\outputt}$.
    \State If $b_{\outputt} = 1$,  set $\widetilde{L} := \largebitcostca(\lceil L' /n^2\rceil) \cdot n^2$.
    \State Otherwise, if $b_{\outputt} = 0$, set $\widetilde{L} := \exponentialSearch(\min(L', n^2))$.
    \Statex
    \Statex \textbf{Providing each party with an $(\delta \cdot \widetilde{L})$-bit valid input:}
    \State If $|v_{\inputt}|_b > \widetilde{L}$, set $v_{\inputt} := \bot$.
    \State Find $n$ committees $C_1, C_2, \ldots, C_n$ of size $\bigO(1)$ using \Cref{lemma:unkown-committee-assigment}.
    \State For every committee $C_i$, in parallel:
    \State \hspace{0.5cm} Parties in committee $C_i$ run $\botCommitteeCA$ and obtain output $v_i$.
    \State \hspace{0.5cm} Every party in committee $C_i$ sends $v_i$ to party $\party_i$.
    \State Every party $\party_i$ sets $v_{\inputt}' := $ the value $v_i$ received most often, or $\bot$ if no value was received.
    \Statex 
    \Statex \textbf{Achieving $\ca$:}
    \State Every party joins $\Pi_{\ca}^{\delta \cdot \widetilde{L}}$ (for corruption threshold $t' < n / (\max(\omega, 3) + \varepsilon / 2)$) with input $v_{\inputt}'$. When obtaining output $v_{\outputt}$, it outputs $v_{\outputt}$.
	\end{algorithmic}
\end{dianabox}

This leads us to the theorems stated below. We defer the formal guarantees of $\Pi_{\ca}$, together with the proofs of \Cref{theo:abstract-for-1-unknown-L} and \Cref{theo:abstract-for-finite-unknown-L}, to \Cref{subsection:unknown-L-all-done}.

\begin{theorem}\label{theo:abstract-for-1-unknown-L}
Consider an arbitrary constant $\varepsilon > 0$ and a convexity space $\mathcal{C}$ with constant Helly number $\omega \geq 2$ and dilation factor $\delta = 1$. 
Then, there is an unauthenticated protocol $\Pi_{\ca}$ achieving $\ca$ on $\mathcal{C}$ whenever up to $t < n / (\omega  + \varepsilon + 1)$ of the $n$ parties involved are byzantine. $\Pi_{\ca}$ has communication complexity $\bitcost_{L, n}(\Pi_{\ca}) =  \bigO(Ln \log n + \kappa \cdot n^2 \cdot \log n)$ and round complexity $\roundcost_{n}(\Pi_\ca) = \bigO(n)$.
\end{theorem}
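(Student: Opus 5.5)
We prove \Cref{theo:abstract-for-1-unknown-L} by following the three phases of $\Pi_{\ca}$ and checking correctness and cost for each. The guarantees of $\exponentialSearch$, $\largebitcostca$ and $\botCommitteeCA$ (\Cref{subsection:naive-approximation,subsection:high-communication-CA,subsection:ca-bot}) are taken as black boxes, as the paper describes them. The final step is then an application of \Cref{theo:abstract-for-1-fixed-L} with corruption threshold $t'$ and length bound $\delta\widetilde{L} = \widetilde{L}$.

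\textbf{Phase 1: the estimate $\widetilde{L}$.} Since $t < n/(\omega+1+\varepsilon) < n/3$, the unauthenticated $\Pi_{\ba}$ of \Cref{thm:magic-ba-no-pki} applies, so all honest parties agree on $b_{\outputt}$.
\begin{itemize}
\item Each honest party's $L'$, the $(n-t)$-th lowest received length, lies between the $(n-2t)$-th lowest honest length and the highest honest length $L$.
\item Hence, whichever branch is taken, $\ba$ validity and the convex validity of $\largebitcostca$ on $\naturalnumbers$ (with the $\lceil\cdot/n^2\rceil$ rounding), or the guarantee of $\exponentialSearch$, give an agreed $\widetilde{L}$ with two properties: at least $n-2t$ honest inputs have length at most $\widetilde{L}$, and $\widetilde{L} \le 2L$ (up to the rounding slack of $n^2$ in the large branch, where $L > n^2$ anyway).
\item Consequently at most $t$ honest parties switch their input to $\bot$.
\end{itemize}
Cost: exchanging lengths takes $O(n^2\log L)$ bits; $\Pi_{\ba}$ on bits takes $O(n^2)$ bits. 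The subprotocols cost $\poly(n)$ bits when $L'\le n^2$, and $O(n^2 \cdot L/n^2 \cdot \poly)$-type bounds otherwise. I expect these to be absorbed into $O(Ln\log n + \kappa n^2\log n)$ under the stated bounds of those subprotocols, because $\largebitcostca$ is run on $\lceil L'/n^2\rceil$, which has only $O(\log L)$ bits. All of this fits in $O(n)$ rounds.

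\textbf{Phase 2: repairing $\bot$-inputs.} Apply \Cref{lemma:unkown-committee-assigment}: all but $\mu n$ committees have at most a $1/(\omega+1)$ fraction of byzantine parties and at most a $1/(\omega+1)$ fraction of honest $\bot$ parties. In such a committee, treating $\bot$ holders as additional faulty inputs, the number of non-valid contributors is below a $2/(\omega+1)$ fraction. Two things must then hold:
\begin{itemize}
\item $\botCommitteeCA$ must tolerate this, i.e. byzantine parties below a $1/3$ fraction for $\ba$, and the valid inputs must still exceed a $(1-1/\omega)$ fraction for the safe-area argument of \Cref{lemma:safe-area} with $\omega$. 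I would verify this by counting: honest non-$\bot$ parties are at least $(\omega-1)/(\omega+1)$, and the safe-area computation only excludes the byzantine fraction, since honest $\bot$ parties announce $\bot$ publicly and are discarded.
\item Its output must be a valid $\widetilde{L}$-bit value (recall $\delta = 1$).
\end{itemize}
Party $\party_i$ takes the majority of the values received from $C_i$, which is correct whenever $C_i$ is good and has an honest majority.

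\textbf{Phase 3 and the main obstacle.} After repair, the parties that are byzantine or hold non-valid inputs are the $t$ byzantine parties, plus owners of bad committees ($\le \mu n$), plus honest parties whose committees are good but whose majority vote fails. The last group is contained in the bad committees by construction. This gives $t' \le t + \mu n < n(1/(\omega+1+\varepsilon) + \mu) = n/(\omega+1+\varepsilon/2) < n/(\max(\omega,3)+\varepsilon/2)$, using $\omega\ge 2$.

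Then $\Pi_{\ca}^{\widetilde{L}}$ is run, with non-valid honest inputs treated as byzantine but with well-formed length $\le \widetilde{L}$. By \Cref{theo:abstract-for-1-fixed-L}, its output lies in the convex hull of the valid inputs, which is contained in the hull of the original honest inputs. Its cost is $O(\widetilde{L} n\log n + \kappa n^2\log n) = O(Ln\log n+\kappa n^2\log n)$ and $O(n)$ rounds.

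Phase 2 costs $n$ committees of size $O(1)$, each $O((\widetilde{L}+\kappa))$ by the analogue of \Cref{lemma:supernode-ca}, for $O(Ln+\kappa n)$ total. The main obstacle is the threshold bookkeeping: confirming that $\botCommitteeCA$ succeeds with both byzantine and $\bot$ fractions up to $1/(\omega+1)$, and that non-valid honest inputs in $\Pi_{\ca}^{\widetilde{L}}$ behave like corruptions that \Cref{theo:abstract-for-1-fixed-L} tolerates. Convex validity only needs to hold with respect to the honest-and-valid inputs, which I would argue by rerunning the invariant of \Cref{theorem:ca-L-final} with "valid" in place of "honest".
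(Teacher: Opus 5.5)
Your proposal is correct and follows the paper's route (\Cref{theo:ca-unknown-resiliency} instantiated with \Cref{theo:abstract-for-1-fixed-L}) essentially step for step: \Cref{lemma:unk-l-proof} for $\widetilde{L}$, the committees of \Cref{lemma:unkown-committee-assigment} together with \Cref{theo:convex-with-bot}, the count $t' \le t+\mu n < n/(\omega+1+\varepsilon/2) \le n/(\max(\omega,3)+\varepsilon/2)$, and then the fixed-length protocol with $\delta\widetilde{L}=\widetilde{L}$. Two small repairs are needed. First, ``at least $n-2t$ honest lengths are at most $\widetilde{L}$'' gives ``at most $t$ honest parties switch to $\bot$'' only when exactly $t$ parties are byzantine. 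Argue instead that with $f\le t$ byzantine parties, at least $n-t-f$ of the $n-t$ lowest received lengths are honest, which leaves at most $(n-f)-(n-t-f)=t$ honest lengths above $L'_{\min}\le\widetilde{L}$. Second, \Cref{theo:convex-with-bot} needs \emph{strictly} fewer than $n_C/(\omega+1)$ byzantine and $\bot$ parties per committee; at exactly a $1/(\omega+1)$ fraction, the protocol's parameter $t_C=\lceil n_C/(\omega+1)\rceil-1$ is exceeded. The strict version is what the proof of \Cref{lemma:unkown-committee-assigment} actually delivers, so use it rather than your ``at most a $1/(\omega+1)$ fraction''.
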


\begin{theorem}\label{theo:abstract-for-finite-unknown-L}
Consider an arbitrary constant $\varepsilon > 0$ and a convexity space $\mathcal{C}$ with constant Helly number $\omega \geq 2$ and finite dilation factor $\delta$. Then, 
there is an unauthenticated protocol $\Pi_{\ca}$ achieving $\ca$ on $\mathcal{C}$ whenever up to $t < n / (\omega  + \varepsilon + 1)$ of the $n$ parties involved are byzantine. $\Pi_{\ca}$ has communication complexity $\bitcost_{L, n}(\Pi_{\ca}) =  \bigO(Ln^{1 + o(1)} + \kappa \cdot n^2 \cdot \log^{1 + o(1)} n)$ and round complexity $\roundcost_{n}(\Pi_\ca) = \bigO(n)$.
\end{theorem}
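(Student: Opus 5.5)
The plan is to prove \Cref{theo:abstract-for-finite-unknown-L} by analyzing $\Pi_{\ca}$ phase by phase and then invoking \Cref{theorem:ca-L-final} with $\supernodefactor := \log\log n$ on the final call $\Pi_{\ca}^{\delta\cdot\widetilde{L}}$. Throughout, $\Pi_{\ba}$ is the unauthenticated protocol of \Cref{thm:magic-ba-no-pki}. This is legitimate because $t < n/(\omega+1+\varepsilon) \le n/(3+\varepsilon)$ whenever $\omega \ge 2$.

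\textbf{Phase 1: the estimate $\widetilde{L}$.} Each honest party's $L'$ is the $(n-t)$-th lowest of $n$ reported lengths, at most $t$ of which are byzantine. Hence $L'$ lies between the $(n-2t)$-th lowest honest length and the highest honest length $L$. The $\ba$ on $b_{\inputt}$ fixes a common branch.

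In the branch $b_{\outputt}=1$, some honest party has $L' > n^2$, so $L > n^2$. I will rely on the guarantees of $\largebitcostca$ on $\mathbb{N}$ from \Cref{subsection:high-communication-CA}, applied to inputs $\lceil L'/n^2\rceil$ of $\bigO(\log L)$ bits. Convex validity then puts $\widetilde{L}$ between the smallest and largest honest $L'$ (up to rounding to a multiple of $n^2$), so $\widetilde{L} \le 2L$. Its cost is $\poly(n)\cdot\log L$, which is $\bigO(Ln)$ because $L > n^2$ (assuming $\largebitcostca$ is, e.g., $\bigO(n^3\log L)$).

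In the branch $b_{\outputt}=0$, I will use the guarantees of $\exponentialSearch$ from \Cref{subsection:naive-approximation}. These give agreement on $\widetilde{L}$ with the same sandwich bounds, using $\bigO(n^2\log n)$ bits and $\bigO(n)$ rounds.

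In both branches, at most $t$ honest parties have length exceeding $\widetilde{L}$, and so at most $t$ honest parties are reset to $\bot$.

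\textbf{Phase 2: committees.} Apply \Cref{lemma:unkown-committee-assigment}. All but $\mu n$ committees have fewer than a $1/(\omega+1)$-fraction of byzantine parties and fewer than a $1/(\omega+1)$-fraction of $\bot$-holders. I will then need the guarantee of $\botCommitteeCA$ from \Cref{subsection:ca-bot}, which I expect to be proved like \Cref{lemma:ca-within-supernode}. Since the committees have constant size, it runs in $\bigO(1)$ rounds with $\bigO(\widetilde{L}+\kappa)$ bits. For a good committee it yields an agreed $\delta\widetilde{L}$-bit valid value. The key calculation is that discarding the $\bot$'s leaves a multiset in which fewer than a $1/\omega$-fraction of the entries are byzantine, so \Cref{lemma:safe-area} applies. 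The committee then sends its value to $\party_i$, who takes the majority and recovers $v_i$ since the committee has an honest majority. This phase costs $\bigO(n(\widetilde{L}+\kappa))$ bits.

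After Phase 2, the number of parties that are byzantine or hold a non-valid input is at most $t + \mu n$. I will check the arithmetic
\[
t+\mu n < \frac{n}{\omega+1+\varepsilon/2} \le \frac{n}{\max(\omega,3)+\varepsilon/2},
\]
where the second inequality uses $\omega \ge 2$.

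\textbf{Phase 3: applying \Cref{theorem:ca-L-final}.} I will treat honest parties with non-valid inputs as byzantine parties that behave honestly, which is permitted by Multi-Participation Safety. The validity of the output is then with respect to the hull of the valid honest inputs, which is contained in the original honest hull. The final call costs
\[
\bigO(\widetilde{L} n^{1+o(1)} + \kappa n^2\log^{1+o(1)} n) = \bigO(Ln^{1+o(1)} + \kappa n^2\log^{1+o(1)} n),
\]
using $\widetilde{L} \le 2L$. Summing the rounds of all phases gives $\bigO(n)$.

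\textbf{Main obstacle.} The step I expect to be hardest is the guarantee on the length and range of $\widetilde{L}$ in the $b_{\outputt}=1$ branch. I must show that $\widetilde{L} \le 2L$ and that the communication of $\largebitcostca$ is absorbed into $\bigO(Ln)$. Both rely on its validity bounding $\widetilde{L}$ by the honest $L'$ values, together with $L > n^2$.
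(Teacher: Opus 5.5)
Your proposal is correct and follows essentially the same route as the paper, which obtains this theorem from \Cref{theo:ca-unknown-resiliency} instantiated with \Cref{theo:abstract-for-finite-fixed-L}. That route agrees on $\widetilde{L}$ via \Cref{lemma:unk-l-proof}, repairs $\bot$-inputs with extractor committees running $\botCommitteeCA$, checks $t+\mu n < n/(\max(\omega,3)+\varepsilon/2)$, and then runs $\Pi_{\ca}^{\delta\cdot\widetilde{L}}$.

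Two small corrections. First, the input $\lceil L'/n^2\rceil$ has only $\bigO(1+\log(L/n^2))$ bits, and that is what makes $\largebitcostca$ cost $\bigO(Ln)$; your assumed $\bigO(n^3\log L)$ is only $\bigO(Ln\log n)$ when $L>n^2$, although that is still absorbed into $\bigO(Ln^{1+o(1)})$ here. Second, counting honest parties with non-valid inputs as corrupted in the last phase is justified because byzantine parties may follow the protocol with arbitrary inputs, not by Multi-Participation Safety.
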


These results enable us to discuss \Cref{theo:main-R-d}, stated in our paper's introduction. $\mathbb{R}^d$ has Helly number $d+1$, and, by \Cref{theo:dilation-factor-euclidian}, it has dilation factor $\delta := 4d^2$. Then, \Cref{theo:main-R-d} follows from \Cref{theo:abstract-for-finite-fixed-L} in the case where an upper bound $L$ on the honest inputs' bit-length is known, and from \Cref{theo:abstract-for-finite-unknown-L} when no such upper bound $L$ is known.
\section{Extractors} \label{section:extractor-magic}
It remains to present the proofs behind the extractor-based theorems we have used in  \Cref{section:general-protocol} and \Cref{section:unknown-L} to assign parties to supernodes, supernodes to committees, and parties to committees.
At core, our assignment theorems address the following task: starting from a set of, say, $n$ parties, assign them to committees so that only an $\varepsilon$-fraction of the committees are bad, where a committee is bad if more than a $1/\omega'$-fraction of its parties are byzantine. Intuitively, since the original fraction of byzantine parties is at most $1/(\omega' + \varepsilon)$, choosing sufficiently large committees uniformly at random would achieve this guarantee with high probability. However, such an approach would require some form of public randomness, while our protocols are deterministic. Moreover, relying on randomness in this way is not immediately secure against an adaptive adversary. We therefore focus on a deterministic substitute for random sampling instead. We obtain this by assigning parties to committees pseudo-uniformly using an object called an \emph{extractor} \cite[Chapter~6]{pseudorandomness}.

\paragraph{Extractors.} Historically, extractors were used to extract random bits from a ``weak'' random source, and their main application was in error-correcting codes \cite{ShaExtractor}. We give a definition of an extractor in terms of a bipartite graph below.

\begin{definition}\label{definition:extractor-weak}
    A bipartite multigraph $G = (L \cup R, E)$ with $\abs{L} = n$, $\abs{R} = m$ is left-regular of degree $d$ on the left. This graph is a $(k, \varepsilon)$-extractor if, for every subset $S \subseteq L$ of size at least $k$ and for every subset $T \subseteq R$, it holds that: $\left| \frac{\abs{E(S, T)}}{d \cdot \abs{S}} - \frac{\abs{T}}{\abs{R}} \right| < \varepsilon$. That is, the probability that an edge leaving $S$ goes to $T$ is $\varepsilon$-close to the probability that a random vertex on the right is in $T$.
\end{definition}

We note that this definition is slightly different compared to the regular one. First, the regular definition uses the concept of min-entropy while, for simplicity, we only consider subsets of size at least $k$, which leads to a weaker notion. Moreover, when building such a graph from an extractor, the values $n$, $m$ and $d$ are usually required to be powers of $2$, which is not the case here: the extractor construction we use works for arbitrary $n$. 

Random bipartite graphs are good extractors with high probability.  However, our focus is on deterministic protocols. In addition, generating a random bipartite graph and checking that it is an extractor would take exponential time. We want to avoid exponential-time approaches: for practical considerations, and as our $\ca$ protocols make use of a hash function (which requires the adversary to be computationally-bounded). We thus turn our focus towards \emph{explicit} extractors. A graph $G$ is called \textbf{explicit} if there exists a deterministic algorithm that builds it within time polynomial in $G$'s size.

The last point of interest is that the existing literature for extractors mainly focuses on the case where $k \ll n$. For our use case, we want $k \geq t$ and hence our interest lies in what are called \textit{high min-entropy} extractors, which have not been explored as much.

We therefore give an explicit extractor with near-optimal properties for our use case in \Cref{thm:finding-the-extractor}. 
We use the proof of \cite[Theorem 6.22]{pseudorandomness} as a starting point, which gives a similar result but, in contrast to \Cref{thm:finding-the-extractor}, does not require $R$ to be regular. We note that making $R$ regular is, in fact, not strictly required by our approach, but this property makes our proofs easier.
\begin{theorem} \label{thm:finding-the-extractor}
    Let $\varepsilon > 0$ and $0 < \alpha < 1$. Then, there is $D(\varepsilon, \alpha) = \poly(1/\varepsilon, 1/\alpha) > 0$ such that, for any $n$ and $k \geq \max(\lfloor \alpha \cdot n \rfloor, 1)$, there is an explicit $(k, \varepsilon)$-extractor, which has left and right sides of size $n$ each and is biregular with degree $D(\varepsilon, \alpha)$.
\end{theorem}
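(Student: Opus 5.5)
The plan is to build the extractor from a spectral expander, following the proof of \cite[Theorem 6.22]{pseudorandomness}, and to rely on the expander mixing lemma.

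\textbf{Construction.} Fix an explicit family of $d$-regular graphs on exactly $n$ vertices whose second largest absolute eigenvalue is $\lambda \le \lambda_0$, where $\lambda_0$ is a target constant chosen below. I would first use a constant-degree explicit expander on $n$ vertices, for every $n$: for example Margulis--Gabber--Galil on $\lceil \sqrt n\rceil^2$ vertices, adjusted to exactly $n$ vertices by merging or adding a few vertices, and then re-regularized by adding self-loops. Next I would raise it to a power $\ell$ so that $\lambda^\ell \le \lambda_0$. The power graph $G^\ell$ is still $d^\ell$-regular and still explicit. Then I take the bipartite double cover: the left side $L$ and the right side $R$ are both copies of $V$, and every edge of $G^\ell$, counted with multiplicity, gives an edge between its endpoints. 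This graph is automatically biregular of degree $D = d^\ell$ with both sides of size $n$, which is exactly the extra regularity property that Vadhan's statement lacks.

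\textbf{Extractor property.} Let $S \subseteq L$ with $|S| \ge k \ge \alpha n$ (or $|S| \ge 1$ when $\lfloor \alpha n\rfloor = 0$; in that case $n < 1/\alpha$, and the complete bipartite multigraph with degree a multiple of $n$ works trivially). Take any $T \subseteq R$. The expander mixing lemma gives
\[
\left|\, |E(S,T)| - \frac{D|S||T|}{n} \,\right| \le \lambda^\ell D \sqrt{|S||T|}.
\]
Dividing by $D|S|$ yields the deviation bound $\lambda^\ell \sqrt{|T|/|S|} \le \lambda^\ell/\sqrt{\alpha}$. So it suffices to choose $\lambda_0 < \varepsilon\sqrt{\alpha}$. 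This requires $\ell = O(\log(1/(\varepsilon\alpha)))$, and hence $D = d^{O(\log(1/(\varepsilon\alpha)))} = \poly(1/\varepsilon,1/\alpha)$, since $d$ and $\lambda$ are absolute constants. The strict inequality in \Cref{definition:extractor-weak} is handled by choosing $\lambda_0$ slightly smaller.

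\textbf{Main obstacle.} The delicate step is getting an explicit expander with a uniform spectral gap for \emph{every} $n$, not just special sizes. I would fix this by starting from an explicit expander on $m \in [n, 2n]$ vertices, such as Margulis on a square grid. I would then fold its extra $m - n \le n$ vertices onto distinct vertices of the first $n$: contract pairs into one vertex, keeping multi-edges. Finally I would restore regularity with self-loops so that all degrees equal $2d$. Contracting a matching and adding loops changes the normalized adjacency operator by a bounded amount and keeps the conductance bounded below. By Cheeger's inequality the spectral gap therefore stays an absolute constant, though possibly a worse one. Negative eigenvalues near $-1$ are removed by the loops, since they make the walk lazy. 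After that, the powering argument above applies unchanged, and explicitness is preserved because each step is a polynomial-time transformation.
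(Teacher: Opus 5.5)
Your argument is correct in substance and has the same skeleton as the paper. You take the bipartite double cover of a $D$-regular spectral expander on exactly $n$ vertices and read off the extractor property from the expander mixing lemma, which requires $\lambda \lesssim \varepsilon\sqrt{\alpha}$. The difference is where the expander comes from.

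The paper invokes Cohen's polynomial-time construction of Ramanujan multigraphs of every size and degree. That gives $\lambda \le 2/\sqrt{D}$, so one can take $D = \lceil 8/(\varepsilon^2\alpha)\rceil$ directly: no size surgery, no powering, and degree $O(1/(\varepsilon^2\alpha))$, which is essentially optimal. (Strictly, that construction outputs \emph{bipartite} Ramanujan graphs. In the paper's route it is that bipartite graph itself, via its second singular value, that serves as the extractor.)

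Your route uses only classical Margulis--Gabber--Galil graphs plus folding, laziness and powering. It is elementary and self-contained, and it genuinely produces a non-bipartite expander on $n$ vertices. The price is the degree: with $c$ the degree of your lazy folded graph, $D = c^{\ell}$ is a polynomial in $1/(\varepsilon\sqrt{\alpha})$ with exponent $\log c/\log(1/\lambda)$. That exponent is large once folding has degraded the gap. This is harmless for the paper, where $D$ only enters as a constant.

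For the folding step you can avoid Cheeger's squaring of the gap. Lifting $f$ on the quotient to $f\circ\pi$ preserves the edge energy and does not decrease the variance, since fibers have size at least $1$ and $\sum_x (f(x)-c)^2$ is minimized at $c=\bar f$. So the normalized gap drops by at most the factor $2$ from the doubled degree.

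Three slips need fixing.
\begin{itemize}
\item $k \ge \lfloor \alpha n\rfloor$ does not give $k \ge \alpha n$. Use $\max(\lfloor\alpha n\rfloor,1) \ge \alpha n/2$ and aim for $\lambda_0 < \varepsilon\sqrt{\alpha/2}$, as the paper does.
\item That bound already covers the case $\lfloor\alpha n\rfloor = 0$, since there $|S| \ge 1 > \alpha n$. So drop your complete-bipartite fallback, which as stated does not work. The degree must be the single value $D(\varepsilon,\alpha)$ for every $n$, and a complete bipartite multigraph with uniform multiplicities needs $n \mid D$. Being a common multiple of all $n < 1/\alpha$ forces $D = e^{\Omega(1/\alpha)}$, which is not polynomial and also clashes with $D = c^{\ell}$.
\item After contraction, the merged vertices may receive no self-loops, so the walk is not lazy there. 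Cheeger only bounds $\lambda_2$, not the most negative eigenvalue. Add loops at every vertex (replace $W$ by $(I+W)/2$) before powering, so that the spectrum lies in $[0,1]$ and $\lambda(G^{\ell}) = \lambda(G)^{\ell}$.
\end{itemize}
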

\begin{proof}
    We use a connection between \emph{spectral expansion} and extractors, which is shown in the first part of the proof of \cite[Theorem~6.22]{pseudorandomness}. Let $G=(V,E)$ be a $D$-regular graph on $n$ vertices, and let $\lambda$ be its \emph{spectral expansion}: this is the largest absolute value of any eigenvalue of the normalized adjacency matrix (where the entries are divided by $D$) other than the trivial eigenvalue $1$.
    Let $H=(L \cup R,E_H)$ be the incidence graph of $G$: $L = V \times \{0\}$, $R = V \times \{1\}$ and for every edge $(u,v) \in E$ we add edges $((u,0), (v,1))$ and $((v,0),(u,1))$ to $E_H$. Then, $H$ is a $(k,\varepsilon)$-extractor whenever $\lambda \leq \varepsilon \cdot \sqrt{k/n}$.
    In addition, as $H$ is the incidence graph of a $D$-regular graph with $n$ vertices, both sides of $H$ have size $n$ and are $D$-regular.
    As $k \geq \max(\lfloor \alpha \cdot n \rfloor, 1) \geq \alpha \cdot n / 2$, it is enough to ensure $\lambda \leq \varepsilon \cdot \sqrt{\alpha/2}$.   
    
    Then, it remains to build the $D$-regular graph $G$ on $n$ vertices, with $D = D(\varepsilon, \alpha)$ and spectral expansion  $\lambda \leq \varepsilon \cdot \sqrt{\alpha/2}$, within polynomial time. A $D$-regular Ramanujan graph will suffice:
    a $D$-regular graph is called \emph{Ramanujan} if all of its nontrivial eigenvalues are as small as possible up to constants; i.e., its normalized second eigenvalue is at most $2\sqrt{D-1}/D$. Hence, such graphs have spectral expansion $\lambda \leq 2 \sqrt{D - 1}/D \leq 2/ \sqrt{D}$.
    By using \cite{FOCS:Cohen16b}, one can build a $D$-regular Ramanujan (multi)graph for arbitrary $n$ and $D$ in polynomial time.
    Therefore, by taking $D = \lceil 8/(\varepsilon^2 \alpha)\rceil = \poly(1/\varepsilon,1/\alpha)$
    we get a low enough spectral expansion $\lambda \leq \varepsilon \cdot \sqrt{\alpha/2}$, and thus an appropriate extractor $H$.
\end{proof}

\paragraph{Core theorem.}
Using \cref{thm:finding-the-extractor}, we may present a generic committee assignment scheme, which will be the core of the theorems used in our $\ca$ protocols. This is described by the theorem below: this assignment scheme is represented as a bipartite graph in which the left vertices correspond to parties, and the right vertices to committees. The expander structure of the graph yields several useful properties. In particular, each party is assigned to only constantly many committees, yet no matter which parties the adversary corrupts after seeing the assignment, it can compromise only a small fraction of committees. 

\begin{theorem} \label{thm:big-one}
    Consider $0 < \alpha < \beta < 1, \mu > 0$. Then, there exists $D = poly(1/\alpha, 1/(\beta - \alpha), 1/\mu) > 0$ such that, for any $1 \leq m \leq n$, the following holds: there is an explicit bipartite multigraph $G = (L \cup R, E)$ such that $|L| = n$, $|R| = m$, vertices in $L$ have degree at most $D$, vertices in $R$ are $(D\lfloor n/m \rfloor)$-regular. Moreover, for any $S \subset L$ such that $|S| \leq \alpha |L|$, the set $Q = \{ v \in R, \text{at least a $\beta$-fraction of $v$'s neighbors comes from $S$} \}$ has size $|Q| < \mu |R|$.
\end{theorem}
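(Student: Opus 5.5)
Our plan is to build $G$ from the biregular $(k,\varepsilon')$-extractor $H$ of \Cref{thm:finding-the-extractor}, which has $n$ vertices on each side and degree $D_0 = D(\varepsilon',\alpha')$. We then merge right-hand vertices into $m$ groups. Write $n = q m + r$ with $q = \lfloor n/m \rfloor$ and $0 \le r < m$. We form $m$ committees, each the union of $q$ distinct right vertices of $H$, and discard the remaining $r$ right vertices, together with their edges.

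Each committee then has degree exactly $D_0 q$, so we set $D := D_0$ and the right side is $(D\lfloor n/m\rfloor)$-regular. A left vertex loses edges but never gains any, so its degree is at most $D$. The construction is explicit because $H$ is. The parameters $\varepsilon'$ and $\alpha'$ will be chosen polynomially in $\beta-\alpha$, $\mu$ and $\alpha$, so that $D$ has the required form.

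For the counting argument, fix $S$ with $|S| \le \alpha n$ and suppose, for contradiction, that $|Q| \ge \mu m$. Since committees partition a subset of the right vertices, $Q$ corresponds to a set $T$ of right vertices of $H$ with $|T| = q|Q| \ge \mu q m \ge \mu n/2$. The last inequality uses $qm \ge n/2$, which holds because $q \ge 1$ and $r < m$.

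By the definition of $Q$, at least $\beta D q |Q| = \beta D |T|$ edges enter $T$ from $S$. The edges into $T$ total $D|T|$, so at most $(1-\beta)D|T|$ of them come from $L\setminus S$.

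We apply the extractor property to $S' := L \setminus S$ rather than to $S$, because $S'$ is large: $|S'| \ge (1-\alpha)n$. Taking $\alpha' := 1-\alpha$ guarantees $|S'| \ge k$. We first pad or shrink $S'$ if needed; the simplest choice is to use $S'$ itself, since the extractor guarantee holds for every set of size at least $k$.

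The extractor property gives
\[
\frac{|E(S',T)|}{D|S'|} > \frac{|T|}{n} - \varepsilon',
\]
so $|E(S',T)| > D|T|\,|S'|/n - \varepsilon' D |S'| \ge D|T|(1-\alpha) - \varepsilon' D n$. Comparing with the upper bound $(1-\beta)D|T|$ yields
\[
(\beta-\alpha)|T| < \varepsilon' n .
\]
Using $|T| \ge \mu n/2$, this becomes $\varepsilon' > (\beta-\alpha)\mu/2$. Choosing $\varepsilon' := (\beta-\alpha)\mu/2$ therefore gives the contradiction.

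By \Cref{thm:finding-the-extractor}, this yields $D = \poly(1/\varepsilon', 1/(1-\alpha))$. This is polynomial in $1/(\beta-\alpha)$ and $1/\mu$, and $1-\alpha > \beta-\alpha$ absorbs the dependence on $1-\alpha$. So in fact the dependence on $1/\alpha$ is not even needed, which is harmless.

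The main obstacle we expect is the bookkeeping around the discarded right vertices and the floor $\lfloor n/m\rfloor$. We must make sure that $|T|$ stays comparable to $\mu n$, and that removing right vertices does not break the inequalities; it does not, because the extractor statement concerns arbitrary $T$ inside the original right side. The exact bound in \Cref{thm:finding-the-extractor} comes with the requirement $k \ge \max(\lfloor \alpha' n\rfloor,1)$, which is satisfied since $|S'| \ge \lceil (1-\alpha)n\rceil$. The degenerate case $S' = \emptyset$ cannot occur because $\alpha < 1$ and $n \ge 1$.
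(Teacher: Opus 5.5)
Your proof is correct. It follows the paper's construction exactly: the biregular extractor of \Cref{thm:finding-the-extractor}, right vertices merged into $m$ disjoint blocks of size $\lfloor n/m\rfloor$ with the remainder discarded. It also uses the same bookkeeping, $|T| = \lfloor n/m\rfloor\,|Q| \geq |Q|\,n/(2m)$.

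The genuine difference is where you apply the extractor inequality.
\begin{itemize}
\item The paper applies it to $S$ itself. It pads $S$ to size exactly $k=\lfloor \alpha n\rfloor$ (using that $Q$ only grows with $S$) and handles $\lfloor \alpha n\rfloor = 0$ separately. It then uses the upper deviation bound, obtaining $(\beta-\alpha)|T| < \varepsilon \alpha n$, which forces $\varepsilon = \mu(\beta/\alpha-1)/2$.
\item You apply the lower deviation bound to the large complement $L\setminus S$, with entropy parameter $1-\alpha$. This needs neither the padding nor the degenerate case, since the complement always has size at least $\max(\lfloor (1-\alpha)n\rfloor,1)$. It yields $\varepsilon' = \mu(\beta-\alpha)/2$, independent of $\alpha$.
\end{itemize}

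The price is only a constant factor. With the concrete degree $\lceil 8/(\varepsilon^2\alpha)\rceil$ from the spectral construction, the paper's route gives $D=\lceil 32\alpha/(\mu^2(\beta-\alpha)^2)\rceil$. Yours gives $\lceil 32/(\mu^2(\beta-\alpha)^2(1-\alpha))\rceil$, which is larger by roughly a factor $1/(\alpha(1-\alpha))$. This is noticeable when $\alpha$ is small, as in the applications where $\alpha\approx 1/(\omega'+\varepsilon)$.

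Your remark that no genuine $1/\alpha$ dependence is needed is correct. It also holds for the paper's route once $D$ is made explicit, since $\alpha$ then appears only in the numerator.
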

\begin{proof}
    Let $\varepsilon = \varepsilon(\alpha,\beta,\mu)$, which we will define later.
    Let $1 \leq m \leq n$, and $s=\lfloor n/m\rfloor \geq 1$.

    We first note that, if $\lfloor \alpha n\rfloor=0$, then all appropriate sets $S$ are empty and the property is immediate.
    The remainder of the proof is therefore concerned with the case $\lfloor \alpha n\rfloor\geq 1$, and we set $k:=\lfloor \alpha n\rfloor$.

    By \cref{thm:finding-the-extractor}, there is a $D = \poly(1/\varepsilon, 1/\alpha)$ such that, for this value of $n$ and $k = \lfloor \alpha n\rfloor$, we can find a $(k, \varepsilon)$-extractor $G'=(L' \cup R', E')$.
    
    We construct the multigraph $G = (L \cup R, E)$ from the $(k, \varepsilon)$-extractor $G' = (L' \cup R', E')$: we keep the left part intact, i.e. $L = L'$, and $R$ groups the vertices of $R'$ in disjoint groups of size $s = \lfloor n/m \rfloor$ each so that $R$ has size exactly $m$. The last $[n \mod m]$ vertices of $R'$ are discarded. 
    For every edge $(u,v')\in E'$ with $u\in L'$ and $v'\in R'$ lying in one of the $m$ groups, we add an edge from $u$ to the corresponding group vertex in $R$. Edges incident to the unused vertices of $R'$ are discarded.
    This way, we obtain $G = (L \cup R, E)$ with left side vertices of degree at most $D$, and right side of degree $D \cdot s = D \cdot \lfloor n/m \rfloor$.

    We still need to define $D$, or more specifically, an $\varepsilon$, so that: for any  $S \subset L$ such that $|S| \leq \alpha |L|$, the set $Q = \{ v \in R, \text{at least a $\beta$-fraction of $v$'s neighbors comes from $S$} \}$ has size $|Q| < \mu |R|$. 
    We choose $\varepsilon$ small enough to ensure that $Q$ cannot contain a $\mu$-fraction of the vertices in $R$. In the following, we show that  $\varepsilon = \mu \cdot (\beta/\alpha - 1)/2$ suffices by providing a lower bound of $\varepsilon$ in terms of $|Q|/|R|$.

    We start from the properties of the $(k, \varepsilon)$-extractor graph $G'$.
    First, we remark that adding elements to $S$ can only increase the size of $Q$. As such, without loss of generality, we can assume that $|S| = \lfloor \alpha \cdot |L| \rfloor$.
    We denote by $T$ the vertices in $G'$ contained by the group vertices forming $Q$ in $G$. As $G'$ is a $(k, \varepsilon)$-extractor graph, and $|S| = \lfloor \alpha \cdot |L| \rfloor \leq \lfloor \alpha \cdot n \rfloor = k$, we obtain that 
    $\left| \frac{|E'(S, T)|}{D \cdot |S|} - \frac{|T|}{|R'|} \right| < \varepsilon$, hence
    $\varepsilon >  \frac{|E'(S, T)|}{D \cdot |S|} - \frac{|T|}{|R'|}$.

    The definition of $T$ implies that $|E'(S,T)| = |E(S,Q)|$. Then, as $Q$ is defined as a set of vertices $v$ in $R$ such that at least a $\beta$-fraction of  $v$'s neighbors come from $S$, $|E(S,Q)| \geq \beta \cdot |E(L, Q)|$. Afterwards, using the definition of $T$ (and as $G$ is a multigraph), we get that $|E(L, Q)| = |E'(L, T)|$. Finally, we recall that vertices in $T$ have degree $D$ in $G'$, therefore $|E'(L, T)| = D \cdot \abs{T}$. Consequently, we have obtained that $|E'(S,T)| \geq \beta \cdot D \cdot |T|$. Using this in our previous inequality on $\varepsilon$, we obtain that $\varepsilon > \frac{\beta \cdot |T|}{|S|}- \frac{|T|}{|R'|}$.
    Next, as $|S| \leq \alpha \cdot |L|$ and $|L| = |R'| = n$, we obtain the following.  We note that $\beta / \alpha - 1 > 0$ since $\beta > \alpha$. 
     \begin{align*}
     \varepsilon > \frac{\beta \cdot |T|}{|S|}- \frac{|T|}{|R'|} \geq \frac{\beta \cdot |T|}{\alpha \cdot |L|}- \frac{|T|}{|R'|} \geq \frac{\beta \cdot |T|}{\alpha \cdot  n}-\frac{|T|}{n} = (\beta / \alpha - 1) \cdot \frac{|T|}{n}.
     \end{align*}

    We next show that $|T|/n \geq |Q|/(2|R|)$. Since $T$ is the union of the $s$ vertices in $R'$ corresponding to each vertex of $Q$, we have $|T|=s|Q|$. Moreover, since $s=\lfloor n/m\rfloor$ and $m\leq n$, we have $s\geq n/(2m)$. As $m=|R|$, it follows that $s/n\geq 1/(2|R|)$, and therefore $|T| / n =  \abs{Q} \cdot s/n \geq \abs{Q} / \left(2 \cdot \abs{R} \right)$. Consequently, we have obtained that:

    \begin{align*}
    \varepsilon > (\beta / \alpha - 1) \cdot \frac{|T|}{n}
         \geq \frac{\beta / \alpha - 1}{2} \cdot \frac{|Q|}{\abs{R}}.
    \end{align*}

    As such, we set $\varepsilon = \mu \cdot (\beta/\alpha - 1)/2$. Since $\beta>\alpha$, the lower bound above implies $|Q|/|R| < \mu$, and hence $|Q|<\mu |R|$. 
    Finally, our choice of $\varepsilon$ also implies that the degree $D$ is  $D=\poly(1/\alpha,1/(\beta-\alpha),1/\mu)$, as claimed.
    Thus $G$ satisfies the required properties.
\end{proof}

\paragraph{Specific assignments.} \Cref{thm:big-one} enables us to return to the theorems used in \Cref{section:general-protocol} and \Cref{section:unknown-L} and present their proofs.
We start with \Cref{cor:nodes-to-coms}, which we have used in $\reduceSupernodes^L$ to assign supernodes to committees responsible for computing inputs. We recall that a committee is \emph{bad} if it has more than $\frac{1}{\omega}$-fraction bad supernodes, or it is not $\ba$-friendly, i.e., the fraction of byzantine parties being part of it is more than $\frac{1}{\eta + \varepsilon/4}$.

\NodesToComs*
\begin{proof}
    We apply  \Cref{thm:big-one} using the following parameters: $n = N$, $m = \lfloor N/\supernodefactor \rfloor$, $\alpha = \threshbad$, $\beta = \min \left( \frac{1}{\omega}, \frac{1}{\eta + \varepsilon/4 } - \frac{1}{\omega' + \varepsilon/2} \right)$, $\mu = \threshbad / 2$.
    As $\alpha = \threshbad =  \min \left( \frac{1}{\omega' + \varepsilon}, \frac{1}{\eta + \varepsilon/3 } - \frac{1}{\omega' + \varepsilon/2} \right)$ we note that $\alpha < \beta$. Within $\poly(N, 1/\varepsilon, \omega)$ time, this gives us a graph $G$ assigning the $N$ supernodes into $\lfloor N / \supernodefactor \rfloor$ committees, such that each supernode is assigned to up to $D$ committees, and each committee consists of $D \cdot \lfloor n / m \rfloor$ supernodes. Note that $\lfloor n / m \rfloor = \left \lfloor n  / \lfloor n / \supernodefactor \rfloor \right\rfloor$, which gives us the bounds $\lfloor \supernodefactor \rfloor \leq \lfloor n / m \rfloor \leq 2 \cdot \supernodefactor$, leading to the committee size claimed in the lemma statement.
    Moreover, the number of committees containing at least a $\beta$-proportion of bad supernodes is bounded by $\mu \cdot \lfloor N / \supernodefactor \rfloor = \threshbad / 2 \cdot \lfloor N / \supernodefactor \rfloor$, as required by the theorem statement.  

    It remains to consider a committee $C$ which has less than a $\beta$-proportion of bad supernodes and to prove that it is a good committee. We know that its good supernodes have less than a $\frac{1}{\omega' + \varepsilon/2}$-fraction of byzantine parties. In the worst case, bad supernodes can consist entirely of byzantine parties, but our choice of $\beta$ ensures that $C$ contains at most a $\left(\frac{1}{\eta + \varepsilon/4 } - \frac{1}{\omega' + \varepsilon/2} \right)$-fraction of bad supernodes. Then, as each supernode has the same number of parties, the total proportion of byzantine parties in $C$ is at most:
\begin{align*}
    \left(\frac{1}{\eta + \varepsilon/4 } - \frac{1}{\omega' + \varepsilon/2} \right) \cdot 1 &+ \left( 1 - \left(\frac{1}{\eta + \varepsilon/4 } - \frac{1}{\omega' + \varepsilon/2} \right) \right) \cdot \frac{1}{\omega' + \varepsilon/2} \\
    & < \frac{1}{\eta + \varepsilon/4 } - \frac{1}{\omega' + \varepsilon/2}  + \frac{1}{\omega' + \varepsilon/2}  \\
    & = \frac{1}{\eta + \varepsilon/4}.
\end{align*}
Thus $C$ is $\ba$-friendly. Moreover, the proportion of bad supernodes being at most $\beta \leq 1/\omega$ ensures that $C$ is indeed good.
\end{proof}

Next, we prove \Cref{cor:nodes-to-supernodes}, which we have used in $\reduceSupernodes^L$ to assign parties to supernodes.
\NodesToSuperNodes*
\begin{proof}
We apply \Cref{thm:big-one} with parameters $n$, $m = N$, $\alpha = \frac{1}{\omega' + \varepsilon}$, $\beta = \frac{1}{\omega' + \varepsilon / 2}$ and $\mu = \threshbad / 2 > 0$.
Since $\varepsilon > 0$, we have $\alpha<\beta$.

Then, \Cref{thm:big-one} gives a deterministic $\poly(n, 1/\varepsilon, \omega)$-time algorithm constructing an assignment graph $G=(L \cup R, E)$ where $|L| = n$ and $|R| = N$: the left vertices represent parties, and the right vertices represent supernodes.
By the properties of $G$, every party belongs to at most $D(\omega,\varepsilon)$ supernodes, and every supernode contains $D(\omega,\varepsilon) \cdot \left\lfloor \frac{n}{N}\right\rfloor$ parties, as required by the theorem statement.

It remains to show that the assignment respects the fraction $\threshbad / 2$ of byzantine supernodes.
Let $S\subseteq L$ be the set of byzantine parties: $|S| \leq n/(\omega'+\varepsilon) = \alpha  \cdot |L|$. Let $Q \subseteq R$ be the set of supernodes containing at least a $\beta = 1/(\omega'+\varepsilon/2)$ fraction of byzantine parties. By definition, these are exactly the byzantine supernodes. Applying \Cref{thm:big-one} to $S$, we get $|Q| < \mu \cdot |R|=\threshbad / 2 \cdot N$. Therefore, at most $\threshbad / 2 \cdot N$ supernodes are byzantine, as required.
\end{proof}

Applying \Cref{cor:nodes-to-supernodes} with $N = n$ results in a stronger version of \Cref{thm:initial-extractor}, which we have used in $\initSupernodes^L$ to assign parties to supernodes.
\InitialExtract*

Finally, we recall \Cref{lemma:unkown-committee-assigment}, which we have used in \Cref{section:unknown-L} to build committees responsible for computing valid values, so that we reduce the number of honest parties holding value $\bot$.

\UnknownLCommitteeAssignment*
\begin{proof}
We use \Cref{thm:big-one} with parameters $\alpha = \frac{1}{\omega+1+\varepsilon}$,
$\beta = \frac{1}{\omega+1}$, $m = n$, and $\mu' = \mu / 2$. 
\Cref{thm:big-one} gives a deterministic  $\poly(n, 1/\varepsilon, \omega)$-time algorithm constructing an assignment graph $G=(L \cup R, E)$ where $|L| = |R| = n$:
the left vertices represent parties, and the right vertices represent committees.
Each party is assigned to up to $D = D(\omega, \varepsilon)$ committees, and  each committee consists of 
$D \cdot \left\lfloor \frac{n}{m}\right\rfloor = D$ parties, as claimed in the theorem's statement.

\Cref{thm:big-one} ensures that, for any set of parties $S$ of size at most $\alpha \cdot \abs{L}$, fewer than $\mu' \cdot n$ of the $n$ committees have at least a $\beta$ fraction of parties from $S$. Then, as $t < n/(\omega + 1 +\varepsilon) = \alpha \cdot n$ and $\beta=1/(\omega+1)$, we get that fewer than $\mu' \cdot n$ committees contain at least a $1/(\omega+1)$ fraction of byzantine parties, and fewer than $\mu' \cdot n$ committees contain at least a $1/(\omega+1)$ fraction of honest parties with input $\bot$. Consequently, fewer than $2 \cdot \mu' \cdot n = \mu \cdot n$ of the committees are in either of the two cases.
\end{proof}
\section{Conclusions}

In this paper, we have explored whether $\ca$ on arbitrary convexity spaces can be achieved deterministically using only $o(L \cdot n^2)$ bits of communication, where $L$ is an upper bound on the honest inputs' bit-length. For spaces with dilation factor $1$, we have obtained communication complexity $\bigO(L \cdot n \log n + n^2 \kappa \log n)$, where $\kappa$ denotes a security parameter and $L$ denotes an upper bound on the honest inputs' bit-lengths. For spaces with finite dilation factor, we have obtained communication complexity $\bigO(L \cdot n^{1 + o(1)} + n^2 \kappa \log^{1 + o(1)} n)$.

As pointed out in \cite{PODC:GhiLiuWat25}, the standard communication-saving techniques for $\ba$ do not directly extend to $\ca$:
they crucially rely on \emph{compressing} parties' inputs into short encodings, whereas in $\ca$ we must preserve the \emph{geometric/convex structure} of the input space.
Our protocols therefore compress the \emph{system} rather than the \emph{values}. To do so, we have introduced \emph{extractor-based deterministic committees}. This tool has enabled us to obtain deterministic $\ca$ protocols with near-optimal communication complexity for sufficiently large $L$, asymptotically optimal round complexity, and near-optimal resilience.
Our results also have implications beyond $\ca$: in particular, our $\ca$ protocols immediately provide communication-efficient corollaries for parallel instances of $\ba$.

Our work leaves a couple of interesting open problems for further research:
\begin{itemize}
    \item \textbf{Reducing the gap further.} Given the lower bound that $\Omega(Ln)$ bits of communication are necessary for $\ba$, can one obtain deterministic $\ca$ for arbitrary convexity spaces with communication $\bigO(Ln+\poly(n,\kappa))$?

    \item \textbf{Sharper resilience.}
    Is it possible to achieve $\ca$ on arbitrary convexity spaces with communication complexity $o(Ln^2)$ with near-optimal resilience (with or without $\varepsilon$-slack) when no upper bound on the honest inputs' bit-length is known in advance?

    \item \textbf{Beyond $\ca$: general validity in $\ba$.}
    Can our extractor-based committee approach be extended to synchronous $\ba$ variants with more general validity conditions?
\end{itemize}

\section{Acknowledgements}

D.\ Ghinea was supported in part by the SNSF grant number 2000-1-243069.

\bibliography{bib/abbrev1,bib/crypto,bib/project}

@string{virtual =               "Virtual Event"}

@string{ieee =                  {IEEE}}

@string{springer =              "Springer"}

@string{dagstuhl =              "Schloss Dagstuhl"}

@string{jacm =                  "Journal of the {ACM}"}

@string{acm =                   "Association for Computing Machinery"}

@article{LaShPe82,
  title={The byzantine generals problem},
  author={Lamport, Leslie and Shostak, Robert and Pease, Marshall},
  journal={ACM Transactions on Programming Languages and Systems},
  volume={4},
  number={3},
  pages={382--401},
  year={1982}
}

@article{PSL80,
  title={Reaching agreement in the presence of faults},
  author={Pease, Marshall and Shostak, Robert and Lamport, Leslie},
  journal={Journal of the ACM (JACM)},
  volume={27},
  number={2},
  pages={228--234},
  year={1980},
  publisher={ACM}
}

@article{DolStr83,
  title={Authenticated algorithms for Byzantine agreement},
  author={Dolev, Danny and Strong, H. Raymond},
  journal={SIAM Journal on Computing},
  volume={12},
  number={4},
  pages={656--666},
  year={1983},
  publisher={SIAM}
}

@InProceedings{BenDoHo10,
author="Ben-Or, Michael
and Dolev, Danny
and Hoch, Ezra N.",
editor="Lynch, Nancy A.
and Shvartsman, Alexander A.",
title="Brief Announcement: Simple Gradecast Based Algorithms",
booktitle="Distributed Computing",
year="2010",
publisher="Springer Berlin Heidelberg",
address="Berlin, Heidelberg",
pages="194--197",
isbn="978-3-642-15763-9"
}

@article{JACM:DLPSW86,
	author = {Dolev, Danny and Lynch, Nancy A. and Pinter, Shlomit S. and Stark, Eugene W. and Weihl, William E.},
	title = {Reaching Approximate Agreement in the Presence of Faults},
	year = {1986},
	issue_date = {July 1986},
	publisher = {Association for Computing Machinery},
	address = {New York, NY, USA},
	volume = {33},
	number = {3},
	issn = {0004-5411},
	url = {https://doi.org/10.1145/5925.5931},
	doi = {10.1145/5925.5931},
	journal = {J. ACM},
	month = may,
	pages = {499–516},
	numpages = {18}
}

@InProceedings{OPODIS:AAD04,
	author="Abraham, Ittai
	and Amit, Yonatan
	and Dolev, Danny",
	editor="Higashino, Teruo",
	title="Optimal Resilience Asynchronous Approximate Agreement",
	booktitle="Principles of Distributed Systems",
	year="2005",
	publisher="Springer Berlin Heidelberg",
	address="Berlin, Heidelberg",
	pages="229--239",
	isbn="978-3-540-31584-1"
}

@InProceedings{DISC:NoRy19,
	author =	{Thomas Nowak and Joel Rybicki},
	title =	{{Byzantine Approximate Agreement on Graphs}},
	booktitle =	{33rd International Symposium on Distributed Computing (DISC 2019)},
	pages =	{29:1--29:17},
	series =	{Leibniz International Proceedings in Informatics (LIPIcs)},
	ISBN =	{978-3-95977-126-9},
	ISSN =	{1868-8969},
	year =	{2019},
	volume =	{146},
	editor =	{Jukka Suomela},
	publisher =	{Schloss Dagstuhl--Leibniz-Zentrum fuer Informatik},
	address =	{Dagstuhl, Germany},
	URL =		{http://drops.dagstuhl.de/opus/volltexte/2019/11336},
	URN =		{urn:nbn:de:0030-drops-113363},
	doi =		{10.4230/LIPIcs.DISC.2019.29}
}

@inproceedings{helly,
  title={Helly's theorem and its relatives},
  author={Danzer, Ludwig},
  booktitle={Proc. Symp. Pure Math.},
  volume={7},
  pages={101--180},
  year={1963},
  organization={Amer. Math. Soc.}
}

@article{DIST:MHVG15,
  title={Multidimensional agreement in Byzantine systems},
  author={Mendes, Hammurabi and Herlihy, Maurice and Vaidya, Nitin and Garg, Vijay K},
  journal={Distributed Computing},
  volume={28},
  number={6},
  pages={423--441},
  year={2015},
  publisher={Springer}
}

@article{Fekete90,
  title={Asymptotically optimal algorithms for approximate agreement},
  author={Fekete, Alan David},
  journal={Distributed Computing},
  volume={4},
  number={1},
  pages={9--29},
  year={1990},
  publisher={Springer}
}

@inproceedings{SIROCCO:Alistarh21,
author="Alistarh, Dan
and Ellen, Faith
and Rybicki, Joel",
editor="Jurdzi{\'{n}}ski, Tomasz
and Schmid, Stefan",
title="Wait-Free Approximate Agreement on Graphs",
booktitle="Structural Information and Communication Complexity",
year="2021",
publisher="Springer International Publishing",
address="Cham",
pages="87--105",
isbn="978-3-030-79527-6",
url="https://doi.org/10.1007/978-3-030-79527-6_6",
doi="10.1007/978-3-030-79527-6_6"
}

@inproceedings{SuVai16,
  title={Fault-tolerant multi-agent optimization: optimal iterative distributed algorithms},
  author={Su, Lili and Vaidya, Nitin H},
  booktitle={Proceedings of the 2016 ACM symposium on principles of distributed computing},
  pages={425--434},
  year={2016}
}

@InProceedings{DISC:NRSVX20,
  author =	{Kartik Nayak and Ling Ren and Elaine Shi and Nitin H. Vaidya and Zhuolun Xiang},
  title =	{{Improved Extension Protocols for Byzantine Broadcast and Agreement}},
  booktitle =	{34th International Symposium on Distributed Computing (DISC 2020)},
  pages =	{28:1--28:17},
  series =	{Leibniz International Proceedings in Informatics (LIPIcs)},
  ISBN =	{978-3-95977-168-9},
  ISSN =	{1868-8969},
  year =	{2020},
  volume =	{179},
  editor =	{Hagit Attiya},
  publisher =	{Schloss Dagstuhl--Leibniz-Zentrum f{\"u}r Informatik},
  address =	{Dagstuhl, Germany},
  URL =		{https://drops.dagstuhl.de/opus/volltexte/2020/13106},
  URN =		{urn:nbn:de:0030-drops-131064},
  doi =		{10.4230/LIPIcs.DISC.2020.28}
}

@inproceedings{DISC:MoRe21,
  title={Optimal Communication Complexity of Authenticated Byzantine Agreement},
  author={Momose, Atsuki and Ren, Ling},
  booktitle={35th International Symposium on Distributed Computing (DISC 2021)},
  year={2021},
  organization={Schloss Dagstuhl-Leibniz-Zentrum f{\"u}r Informatik}
}

@article{ReedSolomon,
  title={Polynomial codes over certain finite fields},
  author={Reed, Irving S and Solomon, Gustave},
  journal={Journal of the society for industrial and applied mathematics},
  volume={8},
  number={2},
  pages={300--304},
  year={1960},
  publisher={SIAM}
}

@inproceedings{MerkleTrees,
  title={A digital signature based on a conventional encryption function},
  author={Merkle, Ralph C},
  booktitle={Conference on the theory and application of cryptographic techniques},
  pages={369--378},
  year={1987},
  organization={Springer}
}

@inproceedings{CTRSA:Nguyen05,
  title={Accumulators from bilinear pairings and applications},
  author={Nguyen, Lan},
  booktitle={Topics in Cryptology--CT-RSA 2005: The Cryptographers’ Track at the RSA Conference 2005, San Francisco, CA, USA, February 14-18, 2005. Proceedings},
  pages={275--292},
  year={2005},
  organization={Springer}
}

@InProceedings{SPAA:GhLiWa23,
author = {Ghinea, Diana and Liu-Zhang, Chen-Da and Wattenhofer, Roger},
title = {Multidimensional Approximate Agreement with Asynchronous Fallback},
year = {2023},
isbn = {9781450395458},
publisher = {Association for Computing Machinery},
address = {New York, NY, USA},
url = {https://doi.org/10.1145/3558481.3591105},
doi = {10.1145/3558481.3591105},
booktitle = {Proceedings of the 35th ACM Symposium on Parallelism in Algorithms and Architectures},
pages = {141–151},
numpages = {11},
location = {Orlando, FL, USA},
series = {SPAA '23}
}

@inproceedings{eprint:ConvexWorld,
  author =	{Constantinescu, Andrei and Ghinea, Diana and Wattenhofer, Roger and Westermann, Floris},
  title =	{{Convex Consensus with Asynchronous Fallback}},
  booktitle =	{38th International Symposium on Distributed Computing (DISC 2024)},
  pages =	{15:1--15:23},
  series =	{Leibniz International Proceedings in Informatics (LIPIcs)},
  ISBN =	{978-3-95977-352-2},
  ISSN =	{1868-8969},
  year =	{2024},
  volume =	{319},
  publisher =	{Schloss Dagstuhl -- Leibniz-Zentrum f{\"u}r Informatik},
  address =	{Dagstuhl, Germany},
  URL =		{https://drops.dagstuhl.de/entities/document/10.4230/LIPIcs.DISC.2024.15},
  URN =		{urn:nbn:de:0030-drops-212411},
  doi =		{10.4230/LIPIcs.DISC.2024.15}
}

@inproceedings{IEEE:MelWat18,
  author={Melnyk, Darya and Wattenhofer, Roger},
  booktitle={2018 IEEE 37th Symposium on Reliable Distributed Systems (SRDS)}, 
  title={Byzantine Agreement with Interval Validity}, 
  year={2018},
  volume={},
  number={},
  pages={251-260},
  address={Salvador, Brazil},
  url          = {https://doi.org/10.1109/SRDS.2018.00036},
  publisher    = {{IEEE} Computer Society},
  doi={10.1109/SRDS.2018.00036}
}

@inproceedings{OPODIS:StolWat15,
  author =	{David Stolz and Roger Wattenhofer},
  title =	{{Byzantine Agreement with Median Validity}},
  booktitle =	{19th International Conference on Principles of Distributed Systems (OPODIS 2015)},
  pages =	{1--14},
  series =	{Leibniz International Proceedings in Informatics (LIPIcs)},
  ISBN =	{978-3-939897-98-9},
  ISSN =	{1868-8969},
  year =	{2016},
  volume =	{46},
  editor =	{Emmanuelle Anceaume and Christian Cachin and Maria Potop-Butucaru},
  publisher =	{Schloss Dagstuhl--Leibniz-Zentrum fuer Informatik},
  address =	{Dagstuhl, Germany},
  URL =		{http://drops.dagstuhl.de/opus/volltexte/2016/6591},
  URN =		{urn:nbn:de:0030-drops-65911},
  doi =		{10.4230/LIPIcs.OPODIS.2015.22}
}

@InProceedings{OPODIS:CGHWW23,
  author =	{Constantinescu, Andrei and Ghinea, Diana and Heimbach, Lioba and Wang, Zilin and Wattenhofer, Roger},
  title =	{{A Fair and Resilient Decentralized Clock Network for Transaction Ordering}},
  booktitle =	{27th International Conference on Principles of Distributed Systems (OPODIS 2023)},
  pages =	{8:1--8:20},
  series =	{Leibniz International Proceedings in Informatics (LIPIcs)},
  ISBN =	{978-3-95977-308-9},
  ISSN =	{1868-8969},
  year =	{2024},
  volume =	{286},
  editor =	{Bessani, Alysson and D\'{e}fago, Xavier and Nakamura, Junya and Wada, Koichi and Yamauchi, Yukiko},
  publisher =	{Schloss Dagstuhl -- Leibniz-Zentrum f{\"u}r Informatik},
  address =	{Dagstuhl, Germany},
  URL =		{https://drops.dagstuhl.de/entities/document/10.4230/LIPIcs.OPODIS.2023.8},
  URN =		{urn:nbn:de:0030-drops-194989},
  doi =		{10.4230/LIPIcs.OPODIS.2023.8}
}

@article{IPL:TurCoa84,
  title={Extending binary Byzantine agreement to multivalued Byzantine agreement},
  author={Turpin, Russell and Coan, Brian A},
  journal={Information Processing Letters},
  volume={18},
  number={2},
  pages={73--76},
  year={1984},
  publisher={Elsevier}
}

@inproceedings{SRDS:CacTes05,
  title={Asynchronous verifiable information dispersal},
  author={Cachin, Christian and Tessaro, Stefano},
  booktitle={24th IEEE Symposium on Reliable Distributed Systems (SRDS'05)},
  pages={191--201},
  year={2005},
  publisher    = {{IEEE} Computer Society},
  organization={IEEE},
  address = {Orlando, FL, {USA}},
  URL = {https://doi.org/10.1109/RELDIS.2005.9},
  doi={10.1109/RELDIS.2005.9}}

@article{CoaWel92,
  title={Modular construction of a Byzantine agreement protocol with optimal message bit complexity},
  author={Coan, Brian A and Welch, Jennifer L},
  journal={Information and Computation},
  volume={97},
  number={1},
  pages={61--85},
  year={1992},
  publisher={Elsevier}
}

@inproceedings{delphi24,
  author    = {A. Bandarupalli and A. Bhat and S. Bagchi and A. Kate and C.-D. Liu-Zhang and M. K. Reiter},
  title     = {{Delphi: Efficient Asynchronous Approximate Agreement for Distributed Oracles}},
  booktitle = {Proceedings of the 2024 54th Annual IEEE/IFIP International Conference on Dependable Systems and Networks (DSN)},
  pages     = {456--469},
  year      = {2024},
  publisher = {IEEE},
  address   = {Brisbane, Australia},
  doi       = {10.1109/DSN58291.2024.00051},
  url       = {https://doi.org/10.1109/DSN58291.2024.00051}
}

@inproceedings{federated21,
author = {El-Mhamdi, El-Mahdi and Farhadkhani, Sadegh and Guerraoui, Rachid and Guirguis, Arsany and Hoang, L\^{e}-Nguy\^{e}n and Rouault, S\'{e}bastien},
title = {Collaborative learning in the jungle (decentralized, byzantine, heterogeneous, asynchronous and nonconvex learning)},
year = {2021},
isbn = {9781713845393},
publisher = {Curran Associates Inc.},
address = {Red Hook, NY, USA},
booktitle = {Proceedings of the 35th International Conference on Neural Information Processing Systems},
articleno = {1918},
numpages = {14},
series = {NIPS '21}
}

@InProceedings{king09,
author="King, Valerie
and Saia, Jared",
editor="Keidar, Idit",
title="From Almost Everywhere to Everywhere: Byzantine Agreement with $\widetilde{O}(n^{3/2})$ Bits",
booktitle="Distributed Computing",
year="2009",
publisher="Springer Berlin Heidelberg",
address="Berlin, Heidelberg",
pages="464--478",
isbn="978-3-642-04355-0"
}

@InProceedings{Rog06,
author="Rogaway, Phillip",
editor="Nguyen, Phong Q.",
title="Formalizing Human Ignorance",
booktitle="Progress in Cryptology - VIETCRYPT 2006",
year="2006",
publisher="Springer Berlin Heidelberg",
address="Berlin, Heidelberg",
pages="211--228",
isbn="978-3-540-68800-6"
}

@inproceedings{DISC:CohKeiSpi20,
  title={Not a COINcidence: Sub-Quadratic Asynchronous Byzantine Agreement WHP},
  author={Cohen, Shir and Keidar, Idit and Spiegelman, Alexander},
  booktitle={34th International Symposium on Distributed Computing},
  year={2020}
}

@Inbook{UnauthenticatedBA,
author="Berman, Piotr
and Garay, Juan A.
and Perry, Kenneth J.",
editor="Baeza-Yates, Ricardo
and Manber, Udi",
title="Bit Optimal Distributed Consensus",
bookTitle="Computer Science: Research and Applications",
year="1992",
publisher="Springer US",
address="Boston, MA",
pages="313--321",
isbn="978-1-4615-3422-8",
doi="10.1007/978-1-4615-3422-8_27",
url="https://doi.org/10.1007/978-1-4615-3422-8_27"
}

@INPROCEEDINGS{sensor24,
  author={Bandarupalli, Akhil and Bhat, Adithya and Chaterji, Somali and Reiter, Michael K. and Kate, Aniket and Bagchi, Saurabh},
  booktitle={2024 IEEE 44th International Conference on Distributed Computing Systems (ICDCS)}, 
  title={SensorBFT: Fault-Tolerant Target Localization Using Voronoi Diagrams and Approximate Agreement}, 
  year={2024},
  volume={},
  number={},
  pages={186-197},
  doi={10.1109/ICDCS60910.2024.00026}}

@inproceedings{SPAA:CMMS25,
author = {Cambus, M\'{e}lanie and Melnyk, Darya and Milentijevi\'{c}, Tijana and Schmid, Stefan},
title = {{Approximate Agreement Algorithms for Byzantine Collaborative Learning}},
year = {2025},
isbn = {9798400712586},
publisher = {Association for Computing Machinery},
address = {New York, NY, USA},
url = {https://doi.org/10.1145/3694906.3743343},
doi = {10.1145/3694906.3743343},
booktitle = {Proceedings of the 37th ACM Symposium on Parallelism in Algorithms and Architectures},
pages = {89–100},
numpages = {12},
location = {Portland, OR, USA},
series = {SPAA '25}
}

@inproceedings{federated20,
author = {El-Mhamdi, El-Mahdi and Guerraoui, Rachid and Guirguis, Arsany and Hoang, L\^{e} Nguy\^{e}n and Rouault, S\'{e}bastien},
title = {Genuinely Distributed Byzantine Machine Learning},
year = {2020},
isbn = {9781450375825},
publisher = {Association for Computing Machinery},
address = {New York, NY, USA},
url = {https://doi.org/10.1145/3382734.3405695},
doi = {10.1145/3382734.3405695},
booktitle = {Proceedings of the 39th Symposium on Principles of Distributed Computing},
pages = {355–364},
numpages = {10},
location = {Virtual Event, Italy},
series = {PODC '20}
}

@article{pseudorandomness,
    author = {Vadhan, Salil P.},
    title = {Pseudorandomness},
    journal = {Foundations and Trends in Theoretical Computer Science},
    volume = {7},
    number = {1-3},
    pages = {1-336},
    year = {2012},
    month = {12},
    issn = {1551-305X},
    doi = {10.1561/0400000010},
}

@inbook{ShaExtractor,
author = {Ronen Shaltiel},
title = {Recent Developments in Explicit Constructions of Extractors},
booktitle = {Current Trends in Theoretical Computer Science},
chapter = {},
pages = {189-228},
doi = {10.1142/9789812562494_0013},
}

@article{DISC:ACFR19,
author = {Alc{\'{a}}ntara, Manuel and Casta{\~{n}}eda, Armando and Flores-Pe{\~{n}}aloza, David and Rajsbaum, Sergio},
doi = {10.1007/s00446-018-0345-3},
issn = {1432-0452},
journal = {Distributed Computing},
number = {3},
pages = {235--255},
title = {{The topology of look-compute-move robot wait-free algorithms with hard termination}},
url = {https://doi.org/10.1007/s00446-018-0345-3},
volume = {32},
year = {2019}
}

@inproceedings{PODC:GhLiWa22,
author = {Ghinea, Diana and Liu-Zhang, Chen-Da and Wattenhofer, Roger},
title = {Optimal Synchronous Approximate Agreement with Asynchronous Fallback},
year = {2022},
isbn = {9781450392624},
publisher = {Association for Computing Machinery},
address = {New York, NY, USA},
url = {https://doi.org/10.1145/3519270.3538442},
doi = {10.1145/3519270.3538442},
booktitle = {Proceedings of the 2022 ACM Symposium on Principles of Distributed Computing},
pages = {70–80},
numpages = {11},
location = {Salerno, Italy},
series = {PODC'22}
}

@inproceedings{PODC:GhMeMi26,
author = {Ghinea, Diana and Melnyk, Darya and  Milentijević, Tijana},
title = {{Network-Agnostic Multidimensional Approximate Agreement with Optimal Resilience}},
year = {2026},
publisher = {Association for Computing Machinery},
url = {https://doi.org/10.1145/3796701.3815960},
doi = {10.1145/3796701.3815960},
booktitle = {Proceedings of the 2026 ACM Symposium on Principles of Distributed Computing},
location = {Egham, United Kingdom},
series = {PODC'26}
}

@inproceedings{ARXIV:FuGhPa25,
      title={{Round and Resilience-Optimal Approximate Agreement on Trees and Block Graphs}}, 
      author={Marc Fuchs and Diana Ghinea and Zahra Parsaeian and Joel Rybicki},
    year = {2026},
    publisher = {Association for Computing Machinery},
    url = {https://doi.org/10.1145/3796701.3815967},
    doi = {10.1145/3796701.3815967},
    booktitle = {Proceedings of the 2026 ACM Symposium on Principles of Distributed Computing},
    location = {Egham, United Kingdom},
    series = {PODC'26}
}

@misc{SokDiana,
      author = {Diana Ghinea and Chen-Da Liu-Zhang},
      title = {{SoK}: Approximate Agreement},
      howpublished = {Cryptology {ePrint} Archive, Paper 2025/2339},
      year = {2025},
      url = {https://eprint.iacr.org/2025/2339}
}

@InProceedings{MoseArxivNew,
  author =	{Mizrahi Erbes, Mose and Wattenhofer, Roger},
  title =	{{Asynchronous Approximate Agreement with Quadratic Communication}},
  booktitle =	{29th International Conference on Principles of Distributed Systems (OPODIS 2025)},
  pages =	{16:1--16:26},
  year =	{2026},
  volume =	{361},
  publisher =	{Schloss Dagstuhl -- Leibniz-Zentrum f{\"u}r Informatik},
  address =	{Dagstuhl, Germany},
  doi =		{10.4230/LIPIcs.OPODIS.2025.16},
}

@InProceedings{DISC25:AdaptiveBA,
  author =	{Constantinescu, Andrei and Dufay, Marc and Paramonov, Anton and Wattenhofer, Roger},
  title =	{{Brief Announcement: From Few to Many Faults: Adaptive Byzantine Agreement with Optimal Communication}},
  booktitle =	{39th International Symposium on Distributed Computing (DISC 2025)},
  pages =	{52:1--52:8},
  year =	{2025},
  volume =	{356},
  publisher =	{Schloss Dagstuhl -- Leibniz-Zentrum f{\"u}r Informatik},
  address =	{Dagstuhl, Germany},
  doi =		{10.4230/LIPIcs.DISC.2025.52},
}

@article{MultiDimBA,
author = {Flamini, Andrea and Longo, Riccardo and Meneghetti, Alessio},
year = {2022},
month = {03},
pages = {1-19},
title = {Multidimensional Byzantine agreement in a synchronous setting},
volume = {35},
journal = {Applicable Algebra in Engineering, Communication and Computing},
doi = {10.1007/s00200-022-00548-5}
}

@article{dolev1985bounds,
  title={Bounds on information exchange for byzantine agreement},
  author={Dolev, Danny and Reischuk, R{\"u}diger},
  journal={Journal of the ACM (JACM)},
  volume={32},
  number={1},
  pages={191--204},
  year={1985},
  publisher={ACM New York, NY, USA}
}

\newpage
\appendix
\section*{Appendix}

\section{Dilation Factor of $\mathbb{R}^d$} \label{appendix:dilation-factor}

We present the proof of \Cref{theo:dilation-factor-euclidian}, restated below.
\DilationFactorR*

\begin{proof}
Let $d > 0$, and consider an encoding $\enc$ of $\mathbb{R}^d$ with input set $X$. We remark that $\mathbb{R}^d$ has Helly number $\omega = d+1$. Within this proof, by polytope, we mean convex bounded polytope.

We first define $X^\star$. Informally, $X^\star$ is the smallest set that contains $X$ and all intermediate values that can be created from $X$. To define it formally, given a multiset of points $S$, we define $\safe(S) = \cap_{A \subseteq S, |S \setminus A| < |S|/\omega} \hull{A}$. We remark that for any set $A$, $\hull{A}$ is a polytope and thus, as a finite intersection of polytopes, for any multiset $S$, $\safe(S)$ is also a polytope. For a polytope $P$, we define $v(P)$ as an arbitrary vertex of $P$ (for example the one with the smallest coordinates when ordered lexicographically, i.e., the vertex whose first coordinate is smallest, breaking ties by the second coordinate, then the third, and so on).

We define $X^\star$ recursively: we start with $X_0 = X$. Then, for any $k \geq 1$, $X_k$ contains the intermediate values that can be obtained, roughly, after $k$ steps of applying $\safe$, on values from $X_{k - 1}$. That is, on any multiset containing values from $X_{k - 1}$, with any multiplicity.
Formally, $X_k$ is defined as follows:
$$ X_{k} = X_{k-1} \cup \left(
\bigcup_{\substack{
    S \text{ non-empty finite multiset} \\
  \text{with values from } X_{k - 1}
}}
v(\safe(S)) \right).$$ 

We may then set $X^\star = \lim_{k \to \infty} X_k = \cup_{k=0}^\infty X_k$.

Then, using $X^\star,$ we define the extension $\enc^\star$ recursively: 
\begin{itemize}
    \item For $x \in X_0 = X$, $\enc^\star(x) = 1 || \enc(x)$. Note that this ensures that $\enc^\star$ is a proper extension of $\enc$.
    \item For $k \geq 1$ and $x \in X_k \setminus X_{k-1}$, we can find $S \subseteq X_{k-1}$ such that $x = v(\safe(S))$: this comes from the definition of $X_k$. 
    
    For a bitstring $b$, we define $\textsc{Prefix}(b) = 1_{|b|}0b$, which is the length of $b$ in unary followed by $0$ and then the value of $b$.
    For example, $\textsc{Prefix}(101) = 1110101$.
    $\textsc{Prefix}(b)$ allows to encode $b$ in a prefix-free manner. 

    Let $A \subseteq S$, such that $|S \setminus A| < |S|/\omega$. Let $m$ be the dimension of $\hull{A}$, i.e., the dimension of the smallest affine space which contains $A$. As $\hull{A}$ is a convex hull of dimension $m$, for each of its facets, the affine space supporting it is uniquely identified by $m$ distinct elements in $S$.
    
    We remark that a point $y \in \mathbb{R}^d$ is in $\hull{A}$ if and only if it is in the half-plane defined by each facet of $\hull{A}$. For a facet $F$ of $A$, we remark that the half-plane it defines can be described by $d - m + 1$ affine constraints, which can be computed deterministically.
    Therefore, as $\safe(S)$ is an intersection of convex hulls, an element of $\mathbb{R}^d$ is in $\safe(S)$ if and only if it satisfies the affine constraints defined by each convex hull.
    
    Then, as $x$ is a vertex of $\safe(S)$, it is an extremal point of it, hence we can find $d$ independent affine constraints active at $x$ among the affine constraints provided by each facet of each polytope in the intersection defining  $\safe(S)$. Let $c_1, c_2, ..., c_d$ be such constraints. We remark that $x$ is uniquely defined by these constraints. Moreover, given a constraint $c_i$ provided by a polytope $P_i$ of dimension $m_i$, $c_i$ is itself uniquely defined by: (i) the polytope's facet $F_i$, and (ii) the index of $c_i$ among the constraints provided by $F_i$: this index is denoted by $l_i$, and is between $0$ and $d-m_i$. Meanwhile, the facet $F_i$ is itself uniquely defined by $m_i$ elements $y_{i,1},..., y_{i, m_i}$ of $S$. Let $e_{i,1},...,e_{i, m_i}$ be the $\enc^\star$-encoding of $y_{i,1},..., y_{i, m_i}$. Let $m_{i,b}$, $l_{i,b}$ be the binary encoding of respectively $m_i$ and $l_i$ using exactly $\lceil \log d\rceil$ bits. We can now define the encoding of x:

    \begin{align*}
        \enc^\star(x) = 0\ ||\ &m_{1, b}\ ||\ l_{1,b}\ ||\ 
        \textsc{Prefix}(e_{1,1})\ ||\ \ldots\ ||\ \textsc{Prefix}(e_{1, m_1})\ || \\
        &m_{2, b}\ ||\ l_{2,b}\ ||\ 
        \textsc{Prefix}(e_{2,1})\ ||\ \ldots\ ||\ \textsc{Prefix}(e_{2, m_2})\ || \\
        &\ldots  \\
        &m_{d, b}\ ||\ l_{d,b}\ ||\ 
        \textsc{Prefix}(e_{d,1})\ ||\ \ldots\ ||\ \textsc{Prefix}(e_{d, m_d}).
    \end{align*}

    The prefix-free encoding of the values ensures that $\enc^\star$ is injective. For a bitstring $y$, remark that $|\textsc{Prefix}(y)| = 2|y| + 1$. Assuming all elements of $S$ have $\enc^\star$-encoding length at most $L \geq 1$, we get:
    \begin{align*}
        |\enc^\star(x)| &\leq d\cdot (2 \lceil \log d\rceil + d\cdot (2 L)) \\
        & \leq d^2 \cdot 4 L && \text{Because $\lceil \log d\rceil \leq d$}
    \end{align*}
\end{itemize}
This proves that $\mathbb{R}^d$ has dilation factor $4 d^2$.
\end{proof}

\section{Parallel Composition Proof} \label{section:parallel-composition}

We present the proof of \Cref{theo:parallel-is-easy}, restated below. 

\ParallelIsEasy*
\begin{proof}
    We define $\mathcal{C}' = 
    \mathcal{C}_1 \times \mathcal{C}_2 \times \ldots \times \mathcal{C}_q$. $\mathcal{C}'$ is the cartesian product of all previous spaces and as a consequence a finite space with $L$-bit inputs. We define the convexity operation on $\mathcal{C}'$ as follows: for $k \geq 1$ and $(v_i = (v_{i,1}, \ldots, v_{i,q}))_{i \in \{1\ldots k\}}$ $k$ elements in $\mathcal{C}'$:
    
    \begin{align*}
        \hull{v_i : i \in \{1\ldots k\}} = \hull{ v_{i, 1} : i \in \{1\ldots k\}} \times \ldots \times \hull{v_{i, q} : i \in \{1\ldots k\}}.
    \end{align*}

    Elements in a convex set in $\mathcal{C}'$ are elements where each component is in the convex set of the matching $\mathcal{C}_i$. As a consequence, $\mathcal{C}'$ can be used to represent all convexity relations on $\mathcal{C}_1, \mathcal{C}_2, \ldots, \mathcal{C}_q$. We now want to prove that it has Helly number at most $\omega_{\max}$. The reason for this low Helly number is because every convex set is well structured. To be more specific, because we defined a convex set as a cartesian product and cartesian products are stable by intersection, we can prove that for any convex subset $S \subseteq \mathcal{C}'$, there exists $S_1, S_2, \ldots, S_q \in \mathcal{C}_1, \mathcal{C}_2, \ldots, \mathcal{C}_q$ such that $S = S_1 \times S_2 \times \ldots \times S_q$.

    Let $k \geq \omega_{\max}$ and $A_1, A_2, \ldots, A_k$ be convex subsets of $\mathcal{C}'$ such that the intersection of $\omega_{\max}$ of these subsets is not empty. We want to prove that $A = \cap_{i=1}^k A_i \ne \emptyset$. As stated previously, each $A_i$ can be represented as a cartesian product $A_i = A_{i,1} \times \ldots \times A_{i,q}$. 
    Let $j \in \{1,\ldots,q\}$. By the definition of convexity on $\mathcal{C}'$ and the way we chose $A_1, A_2, \ldots, A_k$, the intersection of any $\omega_{\max}$ subsets among $A_{1, j}, \ldots, A_{k,j}$ is not empty. Because $\omega_{max} \geq \omega_j$, the intersection of any $\omega_{j}$ of these subsets is also not empty. Because they are convex subsets of $\mathcal{C}_j$, 
    $\bigcap_{i=1}^k A_{i,j} \ne \emptyset$.

    Because $A = \left( \bigcap_{i=1}^k A_{i,1} \right) \times \cdots \times \left( \bigcap_{i=1}^k A_{i,q} \right)$, $A$ is therefore the cartesian product of non-empty spaces, so it is non-empty itself. Therefore, $\mathcal{C}'$ has Helly number at most $\omega_{\max}$.
\end{proof}

\section{Multi-Participation Safety and Extra-Corruptions Safety} \label{section:ba:extra-properties}

As described in \Cref{section:general-protocol}, our protocols assume a $\ba$ protocol $\Pi_{\ba}$ and run it in a non-standard setting: among a group of \emph{virtual} parties, where one concrete party may simulate multiple virtual parties, and where the number of byzantine virtual parties may exceed the threshold against which $\Pi_{\ba}$ achieves security. We expect $\Pi_{\ba}$ to maintain its communication complexity and round complexity regardless of the number of byzantine virtual parties (Extra-Corruptions Safety), and to retain security whenever the group of virtual parties satisfies its resilience threshold (Multi-Participation Safety). The $\ba$ protocols we use naturally achieve these properties. In this section, we first describe how virtual parties run $\Pi_{\ba}$, then explain why this gives Multi-Participation Safety, and why Extra-Corruptions Safety also holds.

\paragraph{Mapping between parties and simulated parties.} We recall that the set of virtual parties $\PS_V := \{\party_1, \party_2, \ldots, \party_m\}$ and the mapping to real parties $M : \PS \rightarrow 2^{\PS_V}$ are known to all parties. For every virtual party $\party_i \in \PS_V$, we write $M^{-1}(\party_i)$ for the unique real party that simulates $\party_i$. As we will be sending party indices over the network, we also define $p(\party_i, M(\party))$ as the position of $\party_i$ in the list $M(\party)$ sorted increasingly by the indices $i$. In our protocols, each party simulates a constant number of virtual parties at a time; hence $p(\party_i, M(\party))$ can be encoded using $\bigO(1)$ bits for every $\party \in \PS$ and every $\party_i \in M(\party)$.

\paragraph{Communication over simulated parties.} Whenever virtual party $\party_i \in M(\party)$ needs to send a message $\msg$ to virtual party $\party_j$, party $\party$ simulates this locally if $\party_j \in M(\party)$. Otherwise, it sends the message $(\msg, p(\party_i, M(\party)), p(\party_j, M(\party')))$ to $\party' = M^{-1}(\party_j)$. When $\party'$ receives this message from $\party$, it forwards $\msg$ to the virtual party $\party_j$ that it simulates, treating it as a message from $\party_i$. This incurs only an additional $\bigO(1)$ bits of overhead per message.

\paragraph{Signatures for simulated parties.} For authenticated $\ba$, we need to account for the fact that the PKI is set up for the real parties, and not for the simulated parties. For the protocol of \cite{DISC:MoRe21}, it suffices to modify the signing procedure as follows: if virtual party $\party_i$ needs to sign message $\msg$, then $\party = M^{-1}(\party_i)$ signs $(\msg, i)$ using its own secret key. When virtual party $\party_j$ needs to verify the signature of simulated party $\party_i$ on message $\msg$, $\party' = M^{-1}(\party_j)$ verifies the signature on $(\msg, i)$ using the public key of $\party = M^{-1}(\party_i)$. Including the virtual identity $i$ in the signed message prevents a signature generated on behalf of one virtual party from being reused on behalf of another.

\paragraph{Multi-Participation Safety.} The communication and signature simulations above make each honest virtual party behave as a distinct party in a standard execution of $\Pi_{\ba}$: each honest virtual party maintains its own local state, sends the messages prescribed by $\Pi_{\ba}$ for its virtual identity, receives messages labeled with the corresponding virtual sender and receiver identities, and, in authenticated settings, verifies signatures with respect to the simulated sender's virtual identity. The inclusion of the virtual identity in signed messages ensures that signatures produced on behalf of one honest simulated identity cannot be transferred to another simulated identity. Simulated identities that belong to byzantine real parties may, but these are precisely the virtual parties treated as byzantine in the simulated execution. Then, the simulated execution can be interpreted as a standard execution of $\Pi_{\ba}$ among the virtual parties $\PS_V$, where a virtual party is byzantine exactly when it is simulated by a byzantine real party. Consequently, if the set of virtual parties satisfies the resilience threshold of $\Pi_{\ba}$, then the usual security guarantees of $\Pi_{\ba}$ apply to the simulated execution. This gives Multi-Participation Safety.

\paragraph{Extra-Corruptions Safety: round complexity.} We may assume without loss of generality that any $\ba$ protocol $\Pi_{\ba}$ with a known upper bound $R$ on its round complexity maintains this bound even in a setting with more byzantine virtual parties than $\Pi_{\ba}$ can tolerate: once $R$ rounds have passed, every honest virtual party simply outputs $\bot$ if it has not already terminated.

\paragraph{Extra-Corruptions Safety: communication complexity.} Assuming that a $\ba$ protocol $\Pi_{\ba}$ maintains its communication upper bound even in a setting with more corrupted parties than it can tolerate is not without loss of generality. However, this property is satisfied by many protocols whose message pattern is fixed in advance and for which an upper bound on the number of bits sent by each honest party in each step can be fixed in advance. This is the case, for instance, in the protocol of \cite{UnauthenticatedBA} for unauthenticated settings, and in the protocol of \cite{DISC:MoRe21} for authenticated settings: in each step, each honest party sends a bounded number of messages, each containing one value of bounded size, since we run $\ba$ on inputs of known length, and, in the authenticated case, up to $\bigO(n)$ signatures of size $\bigO(\kappa)$ each. The simulation above adds only $\bigO(1)$ bits to each message, and therefore preserves the same asymptotic communication bounds.

\section{Authenticated $\ba$: Communication Complexity}
\label{sec:appendix:ba communication}
The $\Pi_{\ba}$ protocol of \cite{DISC:MoRe21} achieves communication complexity $\bitcost_{L,n}(\Pi_{\ba}) = \bigO(\kappa n^2)$ for $L \in \bigO(\kappa)$. We make $L$ explicit in their analysis, and show that  $\bitcost_{L,n}(\Pi_{\ba}) = \bigO((L + \kappa) \cdot n^2)$.

$\Pi_{\ba}$ (presented in \cite[Figure 1]{DISC:MoRe21}) assumes that the input values come from an arbitrary input space $V$. It relies on a weaker variant of $\ba$, namely Graded Byzantine Agreement ($\gba$) -- parties initially hold inputs from $V$, and, instead of agreeing on a value $v \in V$, each party obtains a value $v \in V$ and a grade of confidence $g \in \{0, 1\}$. The Validity property for $\gba$ requires that, if all honest parties hold input $v$, then every honest party outputs $(v, 1)$. Agreement is replaced with Consistency: if an honest party outputs $(v, 1)$, then no honest party outputs $(v', g')$ with $v' \neq v$. 
Assuming a $\gba$ protocol $\Pi_{\gba}$, $\Pi_{\ba}$ becomes a recursive construction:
if the set of parties is small (at most a constant $M$ satisfying $2^{r - 1} M \leq n \leq 2^r$), the parties rely on an inefficient $\ba$ protocol. Otherwise, the parties first run $\Pi_{\gba}$. This is followed by a $\Pi_{\ba}$ execution on the first half of the set of parties -- they join with the value obtained in $\Pi_{\gba}$, and report the output obtained to the entire set. Note that $\Pi_{\ba}$ is only guaranteed to succeed if the first half of the set of parties contains an honest majority. Each party $\party$ checks if a majority of the values reported are the same -- if this is the case, and $\party$ did not obtain a different value with grade $1$ in $\Pi_{\gba}$, it adopts this value. This way, if the first half of the set of parties contains an honest majority, $\ba$ is already achieved. If this was not the case, then the second half of the set of parties contains an honest majority. The protocol continues with another execution of  $\Pi_{\gba}$ on the values adopted, a $\Pi_{\ba}$ execution on the second half of the parties, and another check. If $\ba$ was already achieved in the first part, it is maintained, otherwise it is now achieved.

The work of \cite{DISC:MoRe21} proposes two constructions of $\Pi_{\gba}$. The first construction, described in \cite[Section 4.1]{DISC:MoRe21}, leads to communication complexity $\bigO(\kappa n^2)$ in $\Pi_{\ba}$ for inputs of $\bigO(\kappa)$ bits even up to $t < n/2$ corruptions, but relies on threshold signatures. The second construction, described in \cite[Section 4.2]{DISC:MoRe21}, leads to the same communication complexity for $t < (1/2 - \varepsilon) \cdot n$ for any predefined constant $\varepsilon > 0$, but without threshold signatures.  
Recall that we run $\Pi_{\ba}$ in settings with virtual parties, where multiple virtual parties may be simulated by the same party, hence we cannot rely on threshold signatures, and therefore we rely on the latter $\Pi_{\gba}$ construction.

While \cite{DISC:MoRe21} states that $\Pi_{\gba}$ has communication complexity $\bitcost_{\bigO(1), s} = \bigO(\kappa s^2)$, it is easy to make the inputs' length $L$ explicit in their analysis.
Throughout the execution of $\Pi_{\gba}$ in a setting of $s$ parties, each party sends  $\bigO(s)$ messages, and $\bigO(d)$ certificates, where $d \in \bigO(1/\varepsilon) = \bigO(1)$, as $\varepsilon$ is a constant. Each message consists of an (input) value and a signature, and each certificate consists of $\bigO(s)$ signed (input) values. As signatures consist of $\bigO(\kappa)$ bits and we run $\Pi_{\ba}$, and therefore $\Pi_{\gba}$ in settings where an upper bound $L$ on the honest inputs' length is known, each message has length $\bigO(L + \kappa)$, and each certificate has size $\bigO(s \cdot (L + \kappa))$. This leads to $\bigO(s \cdot (L + \kappa)) + \bigO(d \cdot s \cdot (L + \kappa)) = \bigO(s \cdot (L + \kappa))$ bits per party. Then, the communication complexity becomes $\bitcost_{L, s}(\Pi_{\gba}) := \bigO((L + \kappa) \cdot s^2)$. 

We may then make $L$ explicit in the analysis of $\Pi_{\ba}$ as well, by following \cite[Lemma 10]{DISC:MoRe21}: in an instance over $s \leq M$ parties, one can use an inefficient $\ba$ protocol with communication complexity $\bigO((L + \kappa) \cdot s^3)$ (as we use this protocol in settings where $L$ is known, we may obtain this via $\plainbc$ \cite{DolStr83}). As $s = \bigO(1)$ in this case, the communication complexity is $C(s) = \bigO(L + \kappa)$. In an instance over $s > M$ parties, the communication complexity is defined recursively: 
\begin{align*}
    C(s) &= \bitcost_{L, \lfloor s/2 \rfloor}(\Pi_{\gba}) + C(\lfloor s/2 \rfloor) + \bigO((L + \kappa) \cdot s^2)  + C(\lceil s/2 \rceil) +  \bitcost_{L, \lceil s/2 \rceil}(\Pi_{\gba}) \\
    &= \bigO((L + \kappa) \cdot s^2) + C(\lfloor s/2 \rfloor) + C( \lceil s/2 \rceil).
\end{align*}

As $2^{r - 1} \cdot M \leq n \leq 2^r$, it follows that:
\begin{align*}
    C(n) \leq 2^r \cdot \bigO(L + \kappa) + 
    \sum_{i = 0}^r  2^i \cdot \bigO((L + \kappa) \cdot (n/2^i)^2) = \bigO((L + \kappa) \cdot  n^2),
\end{align*}
hence $\bitcost_{L,n}(\Pi_{\ba}) = C(n) = \bigO((L + \kappa) \cdot  n^2)$, as claimed.

\section{Supersending} \label{section:supersending}

We discuss the protocol described in \Cref{lemma:supersending}, restated below, enabling efficient communication among supernodes, or, in general, groups of virtual parties. This protocol relies on techniques from prior works on extension protocols \cite{DISC:NRSVX20,AC:BLLN22, PODC:GhiLiuWat25}. We recall that $\supersend^L$ assumes a group of virtual parties $A$ needs to send a message to a group of virtual parties $B$.

\SupersendLemma*

We recall that $\ba$-friendly refers to settings where $\ba$ can be achieved. As described in \Cref{section:general-protocol}, we make use of $\ba$ protocols that still retain security in settings of virtual parties, and still maintain their communication and round complexity guarantees even in settings where the number of corruptions exceeds the security corruption threshold. The protocol behind \Cref{lemma:supersending} assumes a $\ba$ protocol $\Pi_{\ba}$, and we recall that the term \emph{$\ba$-friendly} refers to settings where the group of virtual parties respects the resilience thresholds of the assumed $\ba$ protocol. In unauthenticated settings, we instantiate this with that of \Cref{thm:magic-ba-no-pki}. In authenticated settings, we make use of the protocol of \Cref{thm:magic-ba-pki}.
We will not mention this throughout the section unless relevant, for simplicity of presentation --  we will simply refer to $\Pi_{\ba}$. In addition, when referring to $\bitcost_{\ell, n}(\Pi_{\ba})$ and $\roundcost_{n}(\Pi_{\ba})$, we take the highest cost between the two protocols. Hence, we write $\bitcost_{\ell, n}(\Pi_{\ba}) = \bigO((\ell + \kappa) \cdot n^2)$, and $\roundcost_{n}(\Pi_{\ba}) = \bigO(n)$.

Hence, we need to enable the virtual parties in $A$ to distribute $m$ efficiently to the parties in $B$ so that, (1) if $B$ is $\ba$-friendly, all parties in $B$ obtain the same message $m'$, and (2) if in addition $A$ contains an honest (virtual) majority, the parties in $B$ obtain $m = m'$. Similarly to prior works, we achieve this using Reed-Solomon ($\rs$) codes \cite{ReedSolomon} and Merkle Trees ($\mt$) \cite{MerkleTrees}. We describe these below.

\paragraph{Reed-Solomon codes.} $\rs$ codes enable us to split a message $m$ into $n_B$ shares (codewords) so that: the message can be uniquely reconstructed from any subset of $n_B - t_B$ of the $n_B$ shares. 

We assume standard $\rs$ codes with parameters $(n_B, n_B - t_B)$, with $t_B = \lfloor n_B / 2 \rfloor$. This provides us with a deterministic algorithm $\rs.\encode(m)$, which takes an $L$-bit message $m$ as input and converts it into $n_B$ shares $(\share_1, 
\ldots, \share_{n_B})$ of $\bigO(L / (n_B - t_B)) = \bigO(L / n_B)$ bits each. The shares $\share_i$ are elements of a Galois Field $\mathbb{F} = GF(2^a)$ with $n_B \leq 2^a - 1$.

To reconstruct the original message, $\rs$ codes provide a decoding algorithm, $\rs.\decode$, which takes as input $n_B - t_B$ (indexed) shares $(i, \share_i)$ and returns the original message $m$. Any $n_B - t_B$ of the $n_B$ shares uniquely determine the original message $m$.

\paragraph{Merkle trees.} 
To enable the parties to detect corrupted codewords, and also to compress messages, we rely on a cryptographic accumulator \cite{CTRSA:Nguyen05}. 

Essentially, accumulators convert a set (in our case, the $n_B$ codewords) into a $\kappa$-bit value and provide witnesses confirming the accumulated set's contents. For this task, we use Merkle Trees ($\mt$) \cite{MerkleTrees}, which do not require a trusted dealer. 

An $\mt$ is a balanced binary tree that enables us to compress a list of values into a $\kappa$-bit encoding, and to efficiently verify that a value belongs to the compressed list.
Given a list $S = \{ \share_1, \ldots, \share_{n_B} \}$, the $\mt$ is built bottom-up, using the collision-resistant hash function $H_{\kappa}$: starting with $n_B$ leaves, where the $i$-th leaf stores $H_{\kappa}(\share_i)$. Each non-leaf node stores $H_{\kappa}(h_{\leftt} \concat h_{\rightt})$, where $h_{\leftt}$ and $h_{\rightt}$ are the hashes stored by the node's left and right child, respectively.
This way, the hash stored by the root represents the encoding of $S$.

Given the root's hash $z$, one can prove that $\share_i$ belongs to the compressed list using a witness $w_i$ of $\bigO(\kappa \cdot \log n_B)$ bits. The witness $w_i$ contains the hashes needed to verify the path from the $i$-th leaf to the root.

We use $\mt.\buildd(S)$ to denote the (deterministic) algorithm that creates the $\mt$ for the given list $S$ and returns the hash stored by the root $z$ and the witnesses $w_1, w_2, \ldots w_{n_B}$.

Afterwards, $\mt.\verifyy(z, i, \share_i, w_i)$ returns $\true$ if $w_i$ proves that $H_{\kappa}(\share_i)$ is indeed stored on the $i$-th leaf of the $\mt$ with root hash $z$ and $\false$ otherwise.

Note that collision-resistance implies that it is computationally infeasible to produce distinct lists with the same root, which prevents the adversary from producing witnesses for values of its own choice. 
Equivalently, once a root $z$ is fixed, there is at most one value $\share_i$ for each index $i$ that admits a valid witness under $z$. Thus, for a fixed pair $(z,i)$, all valid tuples $(z,i,\share_i,w_i)$ contain the same share $\share_i$.

\paragraph{Protocol.} The supersending protocol will be divided into multiple steps.

In the first step, each virtual party $\party$ in $A$ splits the message $m$ into $n_B$ shares using $\rs.\encode$. Afterwards, $\party$ computes the encoding $z$ as the root of the $\mt$ tree compressing the $n_B$ shares, along with the witnesses $w_i$ for every virtual party $\party_i$. $\party$ then sends $(z, i, \share_i, w_i)$ to each party $\party_i \in B$.

Upon receiving the shares, every virtual party $\party_i$ in $B$ checks if $\mt.\verifyy(z, i, \share_i, w_i) = \true$ holds: if this is not the case, $\party_i$ ignores the tuple received. If $\mt.\verifyy(z, i, \share_i, w_i) = \true$, we say that $(z, i, \share_i, w_i)$ is a \emph{valid tuple}.

In addition, $\party_i$ verifies the length of the tuple received -- this can be done as we assume a publicly known upper bound $L$ on the bit length of the message to be sent.

This step is implemented in subprotocol $\sendShares^L$, which we present below.
\begin{dianabox}{$\party$ $\sendShares^L$ to $B$}
	\algoHead{Code for virtual party $\party$ holding message $m$}
	\begin{algorithmic}[1]
        \State $\share_1, \share_2, \ldots, \share_{n_B} := \rs.\encode(m)$; $z, w_1, w_2, \ldots, w_{n_B} := \mt.\buildd(\{\share_1, \share_2, \ldots, \share_{n_B}\})$.
        \State Send $(z, i, \share_i, w_i)$ to every virtual party  $\party_i$ in $B$.
	\end{algorithmic}

    \algoHead{Code for (virtual) party $\party_i \in B$:}
	\begin{algorithmic}[1]
     \setcounter{ALG@line}{2} % start at line 3
        \State If you have received a tuple $(z, i, \share_i, w_i)$ consisting of $O(L/ n_B + \kappa \log n_B)$ bits from $P$ such that $\mt.\verifyy(z, i, \share_i, w_i) = \true$, output $(z, i, \share_i, w_i)$. Otherwise, output $\bot$.
	\end{algorithmic}
\end{dianabox}
We describe the guarantees of $\sendShares^L$ in the lemma below.
\begin{lemma}\label{lemma:send-shares}
    $\sendShares^L$ has round complexity $\roundcost(\sendShares^L) = 1$, and communication complexity $\bitcost_{L, n_B}(\sendShares^L) := \bigO(L + \kappa n_B \log n_B)$.
    Moreover:
    \begin{itemize}
        \item Every honest virtual party $\party_i \in B$ outputs either $\bot$ or a valid tuple $(z, i, \share_i, w_i)$ consisting of $O(L/ n_B + \kappa \log n_B)$ bits.
        \item If $\party$ is honest, every honest virtual party $\party_i \in B$ outputs a valid tuple $(z, i, \share_i, w_i)$. In addition, if $S$ is a set containing more than $n_B/2$ of the honest parties' indexed shares $(i, \share_i)$, $\rs.\decode(S) = m$.
    \end{itemize}
\end{lemma}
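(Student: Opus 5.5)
The plan is to read each claim of \Cref{lemma:send-shares} directly off the code of $\sendShares^L$, using the stated properties of $\rs$ codes and Merkle trees. The lemma has three parts: complexity, correctness of what an honest receiver outputs, and correctness when the sender is honest. I treat them in that order.

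\emph{Round and communication complexity.} The sender acts in a single round: it computes everything locally and sends one tuple to each $\party_i \in B$. This gives $\roundcost(\sendShares^L) = 1$. Each tuple $(z, i, \share_i, w_i)$ has the following sizes:
\begin{itemize}
\item the root $z$ has $\kappa$ bits;
\item the index $i$ has $\bigO(\log n_B)$ bits;
\item the share $\share_i$ has $\bigO(L/n_B)$ bits, since the code has rate $n_B - t_B \geq n_B/2$ and each field element needs $a = \bigO(\log n_B)$ bits;
\item the witness $w_i$ has $\bigO(\kappa \log n_B)$ bits.
\end{itemize}
Summing over the $n_B$ recipients gives $\bigO(L + \kappa n_B \log n_B)$. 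One detail needs care: a very short $m$ still costs at least one field element per share. That is $\bigO(\log n_B)$ bits per share, which is absorbed by the $\kappa \log n_B$ term.

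\emph{First bullet.} This follows by inspection of line 3. An honest $\party_i$ outputs a tuple only if it passes both the length check and $\mt.\verifyy$, and otherwise it outputs $\bot$. So any tuple it outputs is valid and of the stated size. This holds even if the sender is byzantine.

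\emph{Second bullet.} Suppose the sender $\party$ is honest. Then it sends each $\party_i$ the tuple produced by $\mt.\buildd$. By completeness of Merkle trees this tuple verifies, and by the size analysis above it passes the length check, so every honest $\party_i$ outputs it. For the decoding claim, I will make "honest parties' indexed shares" precise as shares accepted by honest receivers under the honest root $z$. By the Merkle-tree binding property stated earlier (at most one share per index under a fixed root, assuming no hash collisions), each such share equals the $i$-th codeword symbol of $\rs.\encode(m)$. A set $S$ with more than $n_B/2$ of these shares has size at least $\lfloor n_B/2\rfloor + 1 = n_B - t_B$, and all of its entries are correct symbols. Since any $n_B - t_B$ correct shares uniquely determine the message, $\rs.\decode(S) = m$. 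If $S$ contains more than $n_B - t_B$ shares, I either decode from any $n_B - t_B$ of them or note that consistent extra symbols do not change the result.

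The only step that is more than bookkeeping is this decoding claim. It requires correctly identifying which shares are meant and invoking collision resistance, so that no adversarially altered share sits at an index in $S$.
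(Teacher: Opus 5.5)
Your proof is correct and follows the paper's argument: you read the round and communication complexity and the first bullet directly off the code, and for an honest sender you note that its tuples verify and that more than $n_B/2$ correct codeword symbols determine $m$. Two small remarks: $\lfloor n_B/2\rfloor + 1$ equals $n_B - t_B$ only for odd $n_B$ (for even $n_B$ it is $n_B - t_B + 1$), but you only need the inequality $\geq n_B - t_B$; and with an honest sender, the shares held by honest receivers are the true symbols simply because channels are authenticated, so the appeal to Merkle binding is unnecessary though harmless.
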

\begin{proof}
    The round complexity follows trivially from the protocol's description. For communication complexity, we only need to consider the case where $\party$ is honest: 
    $\party$ sends a tuple $(z,i,  \share_i, w_i)$ to every virtual party $\party_i \in B$. Each such tuple consists of 
    $\bigO(\kappa + L / n_B + \kappa \log n_B)$ bits, hence the sender $\party$ sends $\bigO(L + \kappa n_B \log n_B)$ bits.

    The parties' output format follows from the protocol description.

    If $\party$ is honest, then it has computed the shares $\share_i$, the encoding $z$ and the witnesses $w_i$ correctly, and has sent the corresponding tuple to each virtual party $\party_i$ in $B$. Hence, each honest party $\party_i$ holds a valid tuple. Next, if $S$ is a set containing more than $n_B/2$ of the honest virtual parties' indexed shares, these uniquely determine the original message $m$, hence $\rs.\decode(S) = m$.
\end{proof}

If $A$ contains byzantine virtual parties, after $\sendShares^L$ is executed for each virtual party in $A$, each virtual party in $B$ may hold multiple  distinct tuples -- for distinct encodings $z$. We first need to decide the message behind which $z$ should be reconstructed, and afterwards enable the virtual parties to attempt to reconstruct it.

If $A$ contains an honest majority proposing the same message, we want to make sure that the parties agree to decode the encoding $z$ corresponding to that message. Hence, each virtual party $\party_i$ in $B$ considers the tuple  $(z, i, \share_i, w_i)$ that it has received most often in the $\sendShares^L$ invocations. This way, if $A$ contains a majority of honest virtual parties that joined $\sendShares^L$ with the same message $m$, each honest virtual party in $B$ chooses the tuples corresponding to $m$. Afterwards, each party in $B$ joins $\Pi_{\ba}$ with the encoding $z$ considered above as input, and hence the parties agree on an encoding $z^\star$.

At this point, prior works \cite{DISC:NRSVX20,AC:BLLN22, PODC:GhiLiuWat25} run a subsequent $\Pi_{\ba}$ invocation to verify whether some honest (virtual) party has proposed $z = z^\star$ -- this would imply that $z$ can be reconstructed. We lose this implication, as the honest virtual parties in $B$ might not hold sufficiently many shares for $z$.

For now, we simply attempt reconstruction: parties holding $z = z^\star$ distribute their shares, and then the parties attempt to reconstruct $m^\star := \rs.\decode(S)$, where $S$ is the set of valid tuples received for $z$. It is possible that some honest virtual parties reconstruct $m^\star$, while others do not.

We note that it is possible that the encoding $z^\star$ agreed upon comes in fact from byzantine parties in $A$: byzantine parties can select some arbitrary shares instead of running $\rs.\encode(\cdot)$, and construct their Merkle trees using these arbitrary shares.
Then, the $\mt.\verifyy(\cdot)$ checks may pass. However, $\mt.\verifyy(\cdot)$ only ensures that the shares received come from the committed vector of shares, which does not imply that these shares come from $\rs.\encode(\cdot)$.
Therefore, the parties that decode a message $m^\star$ run one more check before deciding to output $m^\star$: they compute $\rs.\encode(m^\star)$ and afterwards they rebuild the corresponding Merkle root. If the resulting root matches $z^\star$, it is safe to output $m^\star$.

We present the code of $\reconstructFromShares^L$ below, and the guarantees are stated by \Cref{lemma:reconstruct-from-shares}. For the final reconstruction, it will be useful that the parties also output the encoding $z^\star$ agreed upon.

\begin{dianabox}{$\reconstructFromShares^L$}
	\algoHead{Code for virtual party $\party_i \in B$ holding a multiset $M$ of shares $(z, i, \share_i, w_i)$}
	\begin{algorithmic}[1]
        \State If $M$ is empty, set $z$ to a fixed default $\kappa$-bit string and set $\tau := \bot$. Otherwise, let $\tau=(z,i,\share_i,w_i)$ be the tuple that appears most often in $M$ (breaking ties arbitrarily), and let $z$ be its encoding.
        \State Join $\Pi_{\ba}$ with input $z$ and obtain output $z^\star$.
        \State If $z^\star = z$ and $\tau \neq \bot$, send $(i,\share_i,w_i)$ to every virtual party $\party \in B$.
        \State Discard any tuples $(j, \share_j, w_j)$ where $\mt.\verifyy(z^{\star}, j, \share_j, w_j) = \false$.
        \State Let $S := \{(j,\share_j)\ :\ (j,\share_j,w_j)\text{ was not discarded}\}$.
        \State If $\abs{S} \leq  n_B/2$, output $(z^\star, \bot)$ and terminate.
        \State Otherwise, let $m^{\star} := \rs.\decode(S)$
        \State If $m^\star = \bot$, output $(z^\star,\bot)$ and terminate.
        \State $\share'_1, \share'_2, \ldots, \share'_{n_B} := \rs.\encode(m^\star)$; $z', \ldots := \mt.\buildd(\{\share'_1, \share'_2, \ldots, \share'_{n_B}\})$.
        \State Output $(z^\star,m^\star)$ if $z^\star = z'$ and $m^\star$ consists of up to $L$ bits, and output $(z^\star,\bot)$ otherwise.
	\end{algorithmic}
\end{dianabox}

\begin{lemma}\label{lemma:reconstruct-from-shares}
    Assume that every honest virtual party in $B$ joins $\reconstructFromShares^L$ with a multiset of valid tuples of $\bigO(L / n_B + \kappa \log n_B)$ bits each.
    Then, every honest virtual party completes the execution of $\reconstructFromShares^L$ with an output $(z^\star, m^\star)$, where $z^\star$ is a $\kappa$-bit string and $m^\star$ is either an $L$-bit string or $\bot$, within $\roundcost_{n_B}(\reconstructFromShares^L) = \bigO(n_B)$ rounds, and with communication complexity $\bitcost_{L, n_B}(\reconstructFromShares^L) = \bigO(L n_B + \kappa n_B^2 \log n_B)$.

    Moreover, if $B$ is $\ba$-friendly, then 
    all honest parties output the same encoding $z^\star$, and either all honest parties output $m^\star = \bot$ or there exists an $L$-bit message $m$ such that
    (i) $z^\star$ is the encoding of $m$ and 
    (ii) every honest virtual party in $B$ outputs $m^\star \in \{m,\bot\}$.

    If, in addition, there is an $L$-bit message $m$ such that all honest parties in $B$ hold tuples appearing most often for the same encoding $z$, where $z$ is the Merkle root obtained from $\rs.\encode(m)$, then all honest parties in $B$ output $m^\star := m$.
\end{lemma}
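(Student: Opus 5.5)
The proof goes through the lines of $\reconstructFromShares^L$ in order: first the complexity bounds, then agreement on $z^\star$ and consistency of decoded messages, and finally the "in addition" clause.

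\textbf{Complexity.} Line 2 is one invocation of $\Pi_{\ba}$ among $n_B$ virtual parties on $\kappa$-bit inputs. By Extra-Corruptions Safety this costs $\bigO(n_B)$ rounds and $\bigO(\kappa n_B^2)$ bits whether or not $B$ is $\ba$-friendly. Line 3 is one round in which each honest virtual party sends a single tuple of $\bigO(L/n_B + \kappa\log n_B)$ bits to all $n_B$ parties, for a total of $\bigO(L n_B + \kappa n_B^2\log n_B)$ bits. The remaining lines are local. The output format is immediate: line 2 yields a $\kappa$-bit $z^\star$, and line 10 outputs $m^\star$ only if it has at most $L$ bits.

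\textbf{Agreement and consistency.} Assume $B$ is $\ba$-friendly. Then Agreement of $\Pi_{\ba}$ gives every honest party the same $z^\star$.
\begin{itemize}
\item If no honest party outputs a message, all output $\bot$ and the claim holds.
\item Otherwise, some honest party outputs $m \neq \bot$. It passed the check in line 10, so $z^\star$ is the Merkle root of $\rs.\encode(m)$; this gives (i).
\item Now take any other honest party that outputs $m' \neq \bot$. By the same check, $z^\star$ is also the root of $\rs.\encode(m')$. Collision resistance of the Merkle tree forces the two share vectors to be equal. Since $\rs.\encode$ is injective, $m' = m$; this gives (ii).
\end{itemize}
This step needs only the line-10 check and the binding property of the tree. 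It does not need anything about which shares a given party received.

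\textbf{The ``in addition'' clause.} Suppose every honest party's most frequent tuple is for the root $z$ of $\rs.\encode(m)$.
\begin{itemize}
\item All honest parties enter $\Pi_{\ba}$ with input $z$, so Validity gives $z^\star = z$.
\item Every honest party therefore passes the test in line 3 and sends its valid tuple. Binding again implies that any tuple surviving line 4 carries the genuine share $\share_j$ of $\rs.\encode(m)$. This is the main subtle point: a byzantine party cannot inject a different share that still verifies under $z$.
\item Since $B$ is $\ba$-friendly, the honest parties are more than $n_B/2$ of $B$, so each honest party collects more than $n_B/2$ correct shares and $\rs.\decode(S) = m$.
\item Re-encoding $m$ yields root $z = z^\star$, and $m$ has at most $L$ bits, so every honest party outputs $m$.
\end{itemize}

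The main obstacle is the step that rules out byzantine contamination of $S$: a valid tuple under a fixed root must carry the unique share for its index. This holds even when the root was built from arbitrary values rather than from $\rs.\encode$. I would invoke the binding property stated just before the protocol for exactly this purpose, and I would make sure decoding is applied to more than $n_B/2$ correct shares with no wrong ones.
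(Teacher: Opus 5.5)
Your proposal is correct and follows essentially the same route as the paper. The complexity bounds come from Extra-Corruptions Safety of $\Pi_{\ba}$ plus one round of tuple distribution. Consistency comes from the final re-encoding check combined with Merkle collision resistance and injectivity of Reed--Solomon encoding. The last clause comes from Validity of $\Pi_{\ba}$, uniqueness of valid openings under a fixed root, and the honest majority in $B$.
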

\begin{proof}
    We first discuss the round complexity guarantees, regardless of the number of byzantine virtual parties involved. The honest virtual parties in $B$ start the first execution of $\Pi_{\ba}$, and, as $\Pi_{\ba}$ is Extra-Corruptions-Safe, regardless of whether an output is obtained by round $\roundcost_{n_B}(\Pi_{\ba})$ or not, the execution of $\Pi_{\ba}$ is halted. This is followed by one more round of communication, hence all honest virtual parties complete the execution of $\reconstructFromShares^L$ at the end of round $1 + \roundcost_{n_B}(\Pi_{\ba}) = \bigO(n_B)$. 

    Next, we discuss the communication complexity, also regardless of the number of byzantine virtual parties involved. The honest virtual parties in $B$ run a $\Pi_{\ba}$ invocation on $O(\kappa)$-sized inputs. As $\Pi_{\ba}$ is Extra-Corruptions-Safe, this leads to  $\bitcost_{\kappa, n_B}(\Pi_{\ba})$ bits of communication. Next, as each tuple consists of $\bigO(\kappa + L / n_B + \kappa \log n_B)$ bits, the tuples' distribution takes up to $\bigO(L n_B + \kappa n_B^2 \log n_B)$ bits in total. Consequently, honest virtual parties in $B$ send up to $\bigO(L n_B + \kappa n_B^2 \log n_B) + \bitcost_{\kappa, n_B}(\Pi_{\ba}) = \bigO(L n_B + \kappa n_B^2 \log n_B)$ bits.

    Now, assume that $B$ is $\ba$-friendly. As $\Pi_{\ba}$ achieves $\ba$ in $B$, the honest virtual parties obtain the same encoding $z^\star$ in the first execution of $\Pi_{\ba}$, hence they output the same encoding $z^\star$ in $\reconstructFromShares^L$. Then, the virtual parties attempt to decode $z^\star$.

    Assume that two honest virtual parties output $m^\star_1 \neq \bot$ and $m^\star_2 \neq \bot$. By the check in the last line of the protocol, the Merkle roots of $\rs.\encode(m^\star_1)$ and $\rs.\encode(m^\star_2)$ are both equal to $z^\star$. By collision-resistance of the Merkle tree, these two encoded share vectors are identical: $\rs.\encode(m^\star_1)=\rs.\encode(m^\star_2)$.
    Since any $n_B-t_B$ coordinates of a $\rs$ codeword uniquely determine the encoded message, two messages whose $\rs$ encodings are identical must be equal: $m^\star_1=m^\star_2$.

    It remains to consider the case where $B$ is $\ba$-friendly, and all honest virtual parties in $B$ hold tuples appearing most often for the same encoding $z$, where $z$ is the Merkle root obtained from $\rs.\encode(m)$ for some $L$-bit message $m$. We show that all honest virtual parties in $B$ output $m^\star := m$.
    In this case, all honest virtual parties join $\Pi_{\ba}$ with the same input $z$. As $\Pi_{\ba}$ achieves $\ba$, the Validity condition ensures that the honest virtual parties obtain $z^\star := z$. Then, the honest virtual parties distribute their tuples. As $B$ is $\ba$-friendly, there is an honest majority in $B$, and hence each honest virtual party in $B$ receives more than $n_B/2$ tuples valid for $z^\star$. 
    All tuples accepted at this point are valid with respect to the same root $z^\star$. Since the Merkle tree admits at most one valid opening for each pair $(z^\star,i)$, every accepted tuple with index $i$ contains the same share as the corresponding tuple from $\rs.\encode(m)$. Hence the set of accepted indexed shares is a subset of the codeword $\rs.\encode(m)$, and because it has size greater than $n_B/2$, $\rs.\decode$ returns $m$.
    Moreover, the final consistency check succeeds, since re-encoding $m$ and rebuilding the Merkle tree yields the same root $z^\star$. Therefore all honest virtual parties in $B$ output $m^\star := m$.
\end{proof}

Whenever $B$ is $\ba$-friendly, $\reconstructFromShares^L$ ensures that there is a message $m$ such that all honest virtual parties in $B$ have obtained $m$ or $\bot$.
We make use of $\Pi_{\ba}$ again, enabling the honest virtual parties to decide whether they should simply output $\bot$ or, if there are honest virtual parties in $B$ holding $m$, reconstruct $m$.

If the decision is to reconstruct $m$, the virtual parties in $B$ holding $m$ run $\sendShares^L$ to distribute shares for $m$ to all virtual parties in $B$, and then the virtual parties in $B$ run $\reconstructFromShares^L$ to reconstruct $m$ using the shares obtained from the virtual parties in $B$: after this step, all honest virtual parties in $B$ output the same message $m$.

We present the code of $\supersend^L$ below.

\begin{dianabox}{$\supersend^L$}
	\algoHead{Code for virtual party $\party \in A$ holding message $m$}
	\begin{algorithmic}[1]
        \State Distribute $m$ to the parties in $B$ using $\sendShares^L$.
	\end{algorithmic}

	\algoHead{Code for virtual party $\party_i \in B$}
	\begin{algorithmic}[1]
        \State Join the $\sendShares^L$ execution for every virtual party in $A$.
        \State Run $\reconstructFromShares^L$ using the tuples obtained in $\sendShares^L$, and obtain output $(z^\star, m^\star)$.
        \State Join $\Pi_{\ba}$ with input $1$ if $m^\star \neq \bot$ and $0$ otherwise. If $\Pi_{\ba}$ did not return $1$, output $\bot$ and terminate.
        \State If you have obtained $m^\star \neq \bot$, distribute $m^\star$ to all virtual parties in $B$ via $\sendShares^L$.
        \State Let $M^\star$ be the multiset of tuples received in Line 4 whose Merkle root is $z^\star$ and whose witnesses verify with respect to $z^\star$. Run $\reconstructFromShares^L$ on $M^\star$, and obtain output $(z^\star, m^\star)$.
        \State Output $m^\star$.
	\end{algorithmic}
\end{dianabox}

We are now ready to present the proof of \Cref{lemma:supersending}.

\begin{proof}[Proof of \Cref{lemma:supersending}]
We first discuss the round complexity and the communication complexity, regardless of the number of byzantine virtual parties involved.

The honest virtual parties in $A \cup B$ first execute $\sendShares^L$. According to \Cref{lemma:send-shares}, this ensures termination within $\roundcost_{n_B}(\sendShares^L)$ rounds, and with communication complexity 
$n_A \cdot \bitcost_{L, n_B}(\sendShares^L)$.

Afterwards, the honest virtual parties in $B$ execute $\reconstructFromShares^L$. By \Cref{lemma:reconstruct-from-shares}, this ensures termination within $\roundcost_{n_B}(\reconstructFromShares^L)$ rounds, and with communication complexity $\bitcost_{L, n_B}(\reconstructFromShares^L)$.

This is followed by one execution of $\Pi_{\ba}$, which ends after $\roundcost_{n_B}(\Pi_{\ba})$ rounds, with communication complexity $\bitcost_{1, n_B}(\Pi_{\ba})$.

Finally, the virtual parties in $B$ execute $\sendShares^L$ and $\reconstructFromShares^L$ once again, which leads to $\roundcost_{n_B}(\sendShares^L) + \roundcost_{n_B}(\reconstructFromShares^L)$ additional rounds of communication, and to $n_B \cdot \bitcost_{L, n_B}(\sendShares^L) + \bitcost_{L, n_B}(\reconstructFromShares^L)$ additional bits of communication.

In total, the round complexity can be written as follows:
\begin{align*}
2 \cdot \roundcost_{n_B}(\sendShares^L) + 2 \cdot \roundcost_{n_B}(\reconstructFromShares^L) + \roundcost_{n_B}(\Pi_{\ba}).
\end{align*}

For communication complexity, we obtain: 
\begin{align*}
    (n_A + n_B) \cdot \bitcost_{L, n_B}(\sendShares^L) + 2 \cdot \bitcost_{L, n_B}(\reconstructFromShares^L) + \bitcost_{1, n_B}(\Pi_{\ba}).
\end{align*}

We may then replace the building blocks' round complexities and communication complexities in these expressions with those ensured by the results describing them: \Cref{lemma:send-shares} for $\sendShares^L$, \Cref{lemma:reconstruct-from-shares} for $\reconstructFromShares^L$, and \Cref{thm:magic-ba-no-pki} / \Cref{thm:magic-ba-pki} for $\Pi_{\ba}$. This way, the final round complexity becomes $ \bigO(n_B)$, and the communication complexity becomes $\bigO(L \cdot (n_A + n_B) + \kappa n_B \log n_B \cdot (n_A + n_B))$, as claimed in the statement.

In the remainder of the proof, we assume that $B$ is $\ba$-friendly.

We show that the honest virtual parties in $B$ obtain the same output in $\supersend^L$. 
We first note that the $\Pi_{\ba}$ invocation in line 3 succeeds: with output $0$ or $1$.

If $\Pi_{\ba}$ outputs $0$, all honest parties output $\bot$ in $\supersend^L$, hence the claim follows. Otherwise, if $\Pi_{\ba}$ returns $1$, there is an honest virtual party $\party$ that has joined $\Pi_{\ba}$ with input $1$ due to $\ba$'s Validity condition. Then, $\party$ has obtained $m^\star \neq \bot$ in $\reconstructFromShares^L$. By \Cref{lemma:reconstruct-from-shares}, it follows that every honest party either obtains $m^\star \neq \bot$ or $\bot$ in $\reconstructFromShares^L$. Moreover, \Cref{lemma:reconstruct-from-shares} ensures that the honest parties have obtained the same $z^\star$ that encodes $m^\star$.

Then, the honest parties that obtained $m^\star \neq \bot$ distribute shares for $m^\star$ via $\sendShares^L$: \Cref{lemma:send-shares} ensures that, in each of the $\sendShares^L$ invocations of line 4 having honest senders, the honest parties obtain valid shares for $z^\star$.
As $\rs.\encode$ and $\mt.\buildd$ are deterministic, all honest senders produce the same tuple for each index $i$.
That is, every honest party $\party_i$ receives at least one valid tuple for
$z^\star$ in Line 4; moreover, since $\party_i$ only considers tuples for its own index $i$, collision-resistance of the Merkle tree implies that all valid tuples for $(z^\star,i)$ are identical.

Afterwards, the honest parties join the second $\reconstructFromShares^L$ invocation using their shares that are valid for $z^\star$.
Each honest party defines $M^\star$ as the multiset of tuples whose Merkle root is $z^\star$ and whose witnesses verify with respect to $z^\star$. Since at least one honest party obtained $m^\star \neq \bot$ in the first reconstruction, at least one honest sender executes Line 4. Therefore every honest party $\party_i$ receives a valid tuple for its own index $i$ under root $z^\star$. Moreover, because all tuples in $M^\star$ have Merkle root $z^\star$, every honest party that invokes $\reconstructFromShares^L$ on $M^\star$ chooses a tuple with encoding $z^\star$ as its most frequent tuple.

Moreover, by the first invocation of \Cref{lemma:reconstruct-from-shares}, the encoding $z^\star$ is the Merkle root obtained from $\rs.\encode(m^\star)$.
Then, \Cref{lemma:reconstruct-from-shares} ensures that all honest parties obtain $m^\star$ in the second $\reconstructFromShares^L$ invocation, and therefore output $m^\star$ in $\supersend^L$. Note that $m^\star$ consists of up to $L$ bits due to \Cref{lemma:reconstruct-from-shares}.

Finally, we assume that, in addition to $B$ being $\ba$-friendly, $A$ contains more than $n_A / 2$ honest virtual parties that hold the same $L$-bit message $m$.
In this case, $\sendShares^L$ ensures that every honest virtual  party in $B$ joins the first invocation of $\reconstructFromShares^L$ with a majority of identical tuples valid with respect to the encoding of $m$. Then, \Cref{lemma:reconstruct-from-shares} ensures that all honest virtual parties obtain $m^\star = m$ in the first execution of $\reconstructFromShares^L$. Afterwards, the honest virtual parties in $B$ distribute $m$ in $\supersend^L$, hence \Cref{lemma:send-shares}
ensures that the honest virtual parties obtain tuples for $z^\star$ that enable them to reconstruct $m^\star$ once again in $\reconstructFromShares^L$, and hence output $m^\star$.
\end{proof}

\section{CA within a Supernode} \label{appendix:ca-within-supernode}

We present our protocol $\supernodeCA^L$, described by \Cref{lemma:ca-within-supernode}, restated below.
\CAWithinSupernodes*

As discussed in \Cref{subsection:init-supernodes}, we first provide the honest parties with an identical view, i.e., a multiset $\msgset$, containing the honest inputs. To achieve this, each party sends its input value $v_{\inputt}$ to all parties. Afterwards, for every party $\party$, the parties join a $\ba$ protocol $\Pi_{\ba}$ -- with the value received from $\party$ if it consists of up to $L$ bits, and with a default value otherwise. If the setting is authenticated, we instantiate $\Pi_{\ba}$ with the protocol described by \Cref{thm:magic-ba-pki}. Otherwise, we use the $\ba$ protocol described by \Cref{thm:magic-ba-no-pki}. This way, if the supernode $\supernodeCA^L$ runs in is non-byzantine, the parties have indeed obtained an identical multiset $\msgset$ containing all honest inputs.
Then, each party computes the \emph{safe area} $S := \safe_{k}(\msgset)$ with $k := \abs{\msgset} - (n_S - t_S)$, and outputs a value from $S$ (if any) by making a deterministic decision.

\begin{dianabox}{$\supernodeCA^L$}
	\algoHead{Code for party $\party \in S$  with input $v_{\inputt}$}
	\begin{algorithmic}[1]
    \State Send $v_{\inputt}$ to all parties.
    \State Join a $\Pi_{\ba}$ invocation for every party $\party'$: with input $v'$ if you have received an $L$-bit value $v'$ from $\party'$, and with a default $L$-bit input from $V_{\inputt}$ otherwise.
    \State Let $\msgset$ be the multiset of $L$-bit outputs from $\Pi_{\ba}$, $t_S = \lceil n_S / \omega \rceil$ - 1, $k := \abs{\msgset} - (n_S - t_S)$. 
    \State If $k < 0$ or $\safe_k(\msgset) = \emptyset$, output $\bot$.
    \State Otherwise, let $v$ be the value with the smallest binary encoding in $\safe_k(\msgset)$, breaking ties deterministically. If $v$ is a $(\delta \cdot L)$-bit value, output $v$; otherwise output $\bot$.
	\end{algorithmic}
\end{dianabox}

We present the proof of \Cref{lemma:ca-within-supernode} below.

\begin{proof}[Proof of \Cref{lemma:ca-within-supernode}]
    $\supernodeCA^L$ makes $n_S$ parallel invocations of $\Pi_{\ba}$ on $L$-bit inputs. Regardless of whether we instantiate the $\Pi_{\ba}$ protocol with that of \Cref{thm:magic-ba-no-pki} or that of \Cref{thm:magic-ba-pki}, we may write $\roundcost_{n_S}(\Pi_{\ba}) = \bigO(n_S)$ and $\bitcost_{L,n_S}(\Pi_{\ba}) = \bigO( (L + \kappa) \cdot n_S^2)$. 
    Then, due to the Extra-Corruptions Safety property of $\Pi_{\ba}$, regardless of whether $S$ is a byzantine supernode or not, $\supernodeCA^L$ achieves round complexity $\roundcost_{n_S}(\Pi_{\ba}) = \bigO(n_S)$  and communication complexity $n_S \cdot \bitcost_{L,n_S}(\Pi_{\ba}) = \bigO( (L + \kappa) \cdot n_S^3)$, as claimed in the lemma's statement.

    In addition, by the protocol's description, if an honest party outputs $v \neq \bot$, then $v$ is a $(\delta \cdot L)$-bit value.

    If $S$ is a non-byzantine supernode, it is $\ba$-friendly, and hence every $\Pi_{\ba}$ invocation provides the honest parties with the same $L$-bit value. Moreover, each $\Pi_{\ba}$ invocation corresponding to an honest party $\party$ provides the honest parties with the input $v_{\inputt}$ of $\party$: this is because all honest parties receive  $v_{\inputt}$ from $\party$, hence they join $\Pi_{\ba}$ with input $v_{\inputt}$, and the Validity property of  $\Pi_{\ba}$ ensures they agree on $v_{\inputt}$.

    Consequently, the honest parties have obtained the same multiset $\msgset$ of up to $n_S$ values out of which at least $\abs{\msgset} - t_S$ come from honest parties. 
    As $t_S < n_S / \omega$, \Cref{lemma:safe-area} ensures that the honest parties obtain a non-empty safe area that is included in the honest inputs' convex hull, and hence they choose a value $v$ in the honest inputs' convex hull. As the space has dilation factor $\delta$, $v$ has bit-length up to $\delta \cdot L$, hence all honest parties output $v$.
\end{proof}

\section{Unknown Inputs' Bit-Lengths: Missing Details}\label{appendix:unknown-L-missing-details}

We present in detail the subprotocols of $\Pi_{\ca}$, described in \Cref{section:unknown-L}. Afterwards, we present the guarantees of $\Pi_{\ca}$.

\subsection{Naive Approximation of $L$} \label{subsection:naive-approximation}
If the honest parties agree on $L \leq n^2$, we rely on a naive approximation approach: each party joins with input $L_{\inputt} :=$ its input's bit-length. The parties compare their inputs' bit-lengths with all relevant powers of two using $\Pi_{\ba}$ (in parallel) and return the most suitable one.

\begin{dianabox}{$\exponentialSearch$}
	\algoHead{Code for party $\party$ with input $L_{\inputt} \in \{1, \ldots, n^2\}$}
	\begin{algorithmic}[1]
    \State \textbf{In parallel} for $i=0\ldots \lceil \log n^2 \rceil$
        \State  \hspace{0.5cm} Join instance $i$ of $\Pi_{\ba}$ with input $b_{\inputt} := 1$ if $L_{\inputt} \leq 2^i$ and $0$ otherwise.
    \State Let $j$ be the lowest instance index $i$ for which $\Pi_{\ba}$ returned $1$.
    \State Output $\widetilde{L} := 2^j$
	\end{algorithmic}
\end{dianabox}

\begin{lemma}\label{theo:expo-seach-theo}
    Assume that $t < n / 3$, and that every honest party joins $\exponentialSearch$ with an input $L_{\inputt} \in \{1, \ldots, n^2\}$, and let $L_{\min}$ and $L_{\max}$ be the lowest and respectively the maximum honest inputs $L_{\inputt}$. Then, the honest parties obtain a value $\widetilde{L} \in \naturalnumbers$ satisfying $L_{\min} \leq \widetilde{L} < 2 \cdot L_{\max}$.
    
    $\exponentialSearch$ runs in $\bigO(n)$ rounds, and has communication complexity $\bigO(n^2 \cdot \log n)$.
\end{lemma}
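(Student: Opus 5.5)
The plan is to use the Agreement and Validity guarantees of the unauthenticated $\Pi_{\ba}$ of \Cref{thm:magic-ba-no-pki}, which apply since $t < n/3$, separately to each of the $\lceil \log n^2 \rceil + 1$ parallel binary instances. Agreement in each instance means all honest parties see the same output vector, so they compute the same index $j$ and hence the same $\widetilde{L}$. One point needs checking first: $j$ must exist. For $i^\star = \lceil \log n^2 \rceil$ we have $2^{i^\star} \geq n^2 \geq L_{\inputt}$ for every honest party, so all honest parties input $1$ to instance $i^\star$. By Validity that instance returns $1$, and therefore $j \leq i^\star$ is well defined.

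For the upper bound, let $i_0$ be the smallest index with $2^{i_0} \geq L_{\max}$. Every honest input satisfies $L_{\inputt} \leq L_{\max} \leq 2^{i_0}$, so all honest parties input $1$ to instance $i_0$ and it returns $1$ by Validity. Hence $j \leq i_0$. Minimality of $i_0$ gives $2^{i_0 - 1} < L_{\max}$ when $i_0 \geq 1$, so $\widetilde{L} = 2^j \leq 2^{i_0} < 2 L_{\max}$. The case $i_0 = 0$ gives $\widetilde{L} = 1 < 2 \leq 2L_{\max}$.

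For the lower bound, take any $i$ with $2^i < L_{\min}$. Then every honest party has $L_{\inputt} \geq L_{\min} > 2^i$ and inputs $0$, so instance $i$ returns $0$ by Validity. Thus instance $j$, which returned $1$, must satisfy $2^j \geq L_{\min}$.

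Complexity follows from running the instances in parallel. There are $\bigO(\log n)$ binary instances, each taking $\bigO(n)$ rounds and $\bitcost_{1,n}(\Pi_{\ba})$ bits. The paper states that bound as $\bigO(n^2 \cdot 1)$ via \Cref{thm:magic-ba-no-pki}, and this matches the claimed $\bigO(n^2 \log n)$ total.

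The main obstacle is exactly this communication bound. \Cref{thm:magic-ba-no-pki} as quoted gives $O(Ln^2)$ bits, and only that $L=1$ instantiation yields $\bigO(n^2)$ per instance. Any $\kappa$ or additional logarithmic overhead would break the claimed total, so this per-instance cost is the one step to verify carefully. Everything else is a direct application of Validity at two well-chosen indices.
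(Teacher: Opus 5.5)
Your proposal is correct and follows essentially the paper's proof. Agreement in each instance gives a common $j$ and $\widetilde{L}$, Validity of the top instance makes $j$ well defined, Validity of binary $\ba$ gives both bounds, and $\bigO(\log n)$ parallel instances, each with $\bigO(n)$ rounds and $\bigO(n^2)$ bits, give the complexity. The differences are cosmetic: you force outputs at the indices $i_0=\lceil \log_2 L_{\max}\rceil$ and any $i$ with $2^i<L_{\min}$, whereas the paper argues backward from the outputs at $j$ and $j-1$ (a binary output $b$ means some honest party input $b$); and the communication step you flag is not an obstacle, since \Cref{thm:magic-ba-no-pki} with $L=1$ is exactly the per-instance bound the paper uses.
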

\begin{proof}
    We first note that $\exponentialSearch$ runs $\lfloor \log n^2 \rfloor = \bigO(\log n)$ parallel $\Pi_{\ba}$ invocations. 
    According to \Cref{thm:magic-ba-no-pki}, $\Pi_{\ba}$ has communication complexity $\bigO(n^2)$ and round complexity $\bigO(n)$. This immediately implies the communication complexity and the round complexity of $\exponentialSearch$ claimed in the lemma's statement.

    Due to $\Pi_{\ba}$ achieving Agreement, we have the guarantee that the honest parties obtain the same output in each instance of $\Pi_{\ba}$, consequently obtain the same $j$ and the same $\widetilde{L}$. Moreover, as  $L_{\max} \leq n^2$, the highest-index instance of $\Pi_{\ba}$ (with $i = \lceil \log n^2 \rceil$) returns $1$ due to its Validity, hence the value $j$ in the protocol is properly defined. 
    Moreover, as instance $j$ of $\Pi_{\ba}$ returns $1$, then the Validity condition of $\Pi_{\ba}$ ensures that there is at least one honest party that has joined with input $1$, hence $L_{\min} \leq 2^j = \widetilde{L}$.
    
    It remains to show that $\widetilde{L} \leq 2 \cdot L_{\max}$. 
    If $j = 0$,
    $\widetilde{L} = 2^{j} = 1 < 2 L_{max}$ because $L_{max} \geq 1$.
    Otherwise, if $j > 0$, the $\Pi_{\ba}$ instance with index $j - 1$ returned $0$. The Validity condition of $\Pi_{\ba}$ implies that at least one honest party has joined instance $j - 1$ with input $0$, hence $L_{\max} > 2^{j - 1}$. Hence, $\widetilde{L} = 2^j = 2 \cdot 2^{j - 1} < 2 \cdot L_{\max}$.
 \end{proof}

\subsection{High-Communication $\ca$} \label{subsection:high-communication-CA}

For the high-communication $\ca$ protocol, we use the protocol of \cite{OPODIS:StolWat15} with minor adjustments described in \cite{PODC:GhiLiuWat25}. 
\begin{restatable}[Theorem 4 of \cite{OPODIS:StolWat15}; Theorem 3 of \cite{PODC:GhiLiuWat25}]{theorem}{LargeBitCostCa}
\label{theorem:large-bitcost-ca}
    There is a $\ca$ protocol $\largebitcostca$ for $\naturalnumbers$ resilient against $t < n / 3$ corruptions, with communication complexity $\bitcost_L(\largebitcostca) = 
      O(L \cdot n^3),
    $
    and round complexity $\roundcost_{L}(\largebitcostca) =  O(n)$.
\end{restatable}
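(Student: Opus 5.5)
This result is cited from prior work (Theorem~4 of \cite{OPODIS:StolWat15}, with the adjustments of \cite{PODC:GhiLiuWat25}). My plan is therefore to reconstruct a self-contained argument. It combines the safe-area approach with a bit-by-bit agreement, so that the byzantine parties cannot force honest parties to relay unboundedly long values.

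\textbf{Step 1: agree on the output length.} Each party sends its input's bit-length to all parties. Let $\ell$ be the $(t+1)$-th lowest value received. The parties then agree on a length $\ell^\star$ that lies between the lowest and highest honest bit-lengths, so that $\ell^\star \le L$. This can be done either by running the binary search of $\exponentialSearch$ over powers of two, or by running $\Pi_{\ba}$ on $\bigO(\log L)$-bit values; the latter costs $\bigO(n^2 \log L)$ bits by \Cref{thm:magic-ba-no-pki}. A party whose input is longer than $\ell^\star$ replaces it by $2^{\ell^\star}-1$. Because some honest input has length at most $\ell^\star$, this replacement stays inside the honest range. From now on, every honest input is an $\ell^\star$-bit integer.

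\textbf{Step 2: agree on the value bit by bit.} The parties process positions from the most significant bit downwards and maintain an agreed prefix. At each position they run $\Pi_{\ba}$ on a single bit. This bit records whether the party's current candidate, clamped to the agreed prefix, has a $1$ at that position. The invariant to preserve is that the set of integers extending the agreed prefix intersects the honest range $[\min_h v, \max_h v]$.

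\textbf{Main obstacle: maintaining the invariant.} If all honest parties enter a position with the same bit, Validity of $\Pi_{\ba}$ preserves the invariant. If honest inputs disagree on the bit, then the current prefix splits the honest range. In that case, whichever bit is chosen, the next prefix still meets the range, provided honest parties whose inputs fall outside the chosen half re-clamp to the nearest endpoint: $\dots0111$ if the chosen bit is $0$, and $\dots1000$ if it is $1$. I would verify that, once the prefix becomes extreme, honest parties stay unanimous, so Validity keeps the result inside the hull.

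\textbf{Complexity.} The bit-by-bit stage uses $\ell^\star \le L$ sequential $\Pi_{\ba}$ instances on single bits. With \Cref{thm:magic-ba-no-pki}, this costs $\bigO(L n^2)$ bits, which lies within $\bigO(L n^3)$. Rounds are the harder constraint. I would run all positions in a pipelined or parallel fashion by having parties broadcast their full inputs once via $n$ parallel $\plainbc$ instances. Each instance costs $\bigO(Ln^2)$, so the total is $\bigO(Ln^3)$ in $\bigO(n)$ rounds. The parties then take the median of the agreed view, which is valid since $t<n/3$. This simpler route suffices for the stated bounds, and it is the one I would actually adopt.
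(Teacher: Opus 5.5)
\textbf{There is a genuine gap: the cost bound depends on knowing the input length in advance.} The route you adopt is to broadcast every input with $n$ parallel $\plainbc$ instances and take the median. It yields $\bigO(Ln^3)$ bits only if honest parties know an a priori bound on the length of the values they relay. Here no such bound exists. $L$ in $\bitcost_L$ is the largest \emph{honest} input length. The paper invokes $\largebitcostca$ precisely in the setting of \Cref{section:unknown-L}, on inputs $\lceil L'/n^2\rceil$ whose size nobody knows beforehand, and charges it $\bigO(\log(L/n^2)\cdot n^3)$ bits in \Cref{lemma:unk-l-proof}. A byzantine sender can hand an arbitrarily long value to some honest parties only. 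Agreement then forces the others to learn it, since discarding it would require a commonly known length cutoff. So the communication becomes adversarially chosen, which is exactly the obstacle the paper attributes to $\plainbc$-based approaches in its introduction. Filtering by your own input length does not help: an honest sender with a longer input must still be accepted by everyone.

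\textbf{Step 1 does not supply the missing bound.} You need an $\ell^\star$ that is at least the smallest honest length, so that replacing long inputs by $2^{\ell^\star}-1$ stays in the honest range, and that is $\bigO(L)$. Neither tool you name gives this.
\begin{itemize}
\item $\Pi_{\ba}$ on the locally computed values $\ell$ guarantees nothing when honest parties hold different $\ell$'s, and the adversary arranges this by reporting different lengths to different parties. The agreed $\ell^\star$ may then lie below every honest length, so every replaced input falls below the honest range and Convex Validity fails. It may also be huge. Moreover, the $\bigO(Ln^2)$ bound of \Cref{thm:magic-ba-no-pki} itself presumes a known length.
\item $\exponentialSearch$ runs one $\Pi_{\ba}$ per power of two up to a publicly known maximum ($n^2$ in the paper). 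With unbounded lengths, the number of parallel instances is unknown. Searching sequentially instead costs $\bigO(n\log L)$ rounds rather than $\bigO(n)$.
\end{itemize}
In effect, agreeing on $\ell^\star$ is a relaxed $\ca$ instance on $\naturalnumbers$ with unknown input length, so Step 1 is circular. Recursing on bit-lengths adds $\Theta(n)$ rounds per level.

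\textbf{What is missing.} You need a protocol in which every value an honest party ever sends provably lies in the honest range, and hence has at most $L$ bits. One example is $\bigO(n)$ king-style phases of $\bigO(n^2)$ messages each, where honest parties reject any proposal outside the range left after trimming the $t$ lowest and $t$ highest values they received. The paper does not reprove the statement. It imports a protocol with this guarantee from \cite{OPODIS:StolWat15}, with the adjustments of \cite{PODC:GhiLiuWat25}.
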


\subsection{$\ca$ despite Missing Inputs} \label{subsection:ca-bot} 

We assume a committee of $n_C$ parties, where up to $t_C$ may be byzantine, and up to $t_C$ may have input $\bot$. In addition, we assume that an upper bound $\widetilde{L}$ on the honest inputs' bit-length is known a priori.
In this setting, 
we describe a protocol $\botCommitteeCA$ that achieves $\ca$ whenever $t_C < n_C / (\omega + 1)$. To ensure efficient communication, we make use of the supersending primitive described in \Cref{lemma:supersending}, relying on the $\ba$ protocol of \Cref{thm:magic-ba-no-pki}. This provides the honest parties with an identical view: this is a multiset $\msgset$ containing $n_C - 2t_C + k \leq n_C$ (non-$\bot$) values out of which at least $n_C - 2t_C$ come from honest parties, and up to $t_C$ come from byzantine parties.
Hence, up to $\min(k, t_C)$ of the values may be outside the honest inputs' convex hull. Each honest party computes the safe area $S := \safe_{\min(k, t_C)}(\msgset)$ and outputs a value in $S$.

To prove that safe areas are non-empty, we make use of a stronger variant of \cref{lemma:safe-area}, which follows from \cite[Lemma 15]{eprint:ConvexWorld} and the safe area definition. We include the proof in \Cref{appendix:safe-area-proofs}. 
\begin{restatable}{lemma}{BotSafeArea} \label{lemma:bot-friendly-safe-area}
    Let $\msgset$ be a multiset of $n - 2t \leq n - 2t + k \leq n$ values from a convexity space with Helly number $\omega$. $\safe_{\min(k, t)}(\msgset) \subseteq \hull{\mathcal{H}}$ for any multiset $\mathcal{H} \subseteq \msgset$ of size $n - 2t + k - \min(k, t)$. Moreover, if $t < n / (\omega + 1)$, $\safe_{\min(k, t)}(\msgset) \neq \emptyset$.
\end{restatable}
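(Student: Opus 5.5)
The plan is to treat the two claims separately: containment follows directly from the definition of $\safe$, and non-emptiness reduces to \Cref{lemma:safe-area} (i.e., \cite[Lemma 15]{eprint:ConvexWorld}) under a relabelling of parameters.

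Write $k' := \min(k,t)$ and $m := |\msgset| = n-2t+k$.

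\textbf{Containment.} By \Cref{definition:safe-area}, $\safe_{k'}(\msgset)$ is the intersection of $\hull{M}$ over all sub-multisets $M \subseteq \msgset$ of size $m - k'$. Any $\mathcal{H}\subseteq\msgset$ of size $n-2t+k-k' = m-k'$ belongs to $\restrict_{k'}(\msgset)$. Hence $\safe_{k'}(\msgset)\subseteq\hull{\mathcal{H}}$ immediately, with no use of the Helly property.

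\textbf{Non-emptiness.} The aim is to recast the situation in the shape of \Cref{lemma:safe-area}. That lemma says that a multiset of size $n'-t'+k''$ with $0\le k''\le t'$ has a non-empty $\safe_{k''}$ whenever $t'<n'/\omega$. Equivalently, non-emptiness holds as soon as $m > \omega k''$; this is the underlying Helly-type statement of \cite[Lemma 15]{eprint:ConvexWorld}. I would split into two cases.

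\emph{Case $k\le t$.} Then $k'=k$. Choose $t' := k$ and $n' := m$, which gives $m = n'-t'+k''$ with $k''=k$. We need $k < m/\omega$, i.e.\ $\omega k < n-2t+k$, i.e.\ $(\omega-1)k < n-2t$. Since $k\le t$, it suffices that $(\omega-1)t < n-2t$, i.e.\ $(\omega+1)t<n$, which is exactly the hypothesis.

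\emph{Case $k>t$.} Then $k'=t$. Choose $t':=t$ and $n' := m$, with $k''=t\le t'$. We need $t < m/\omega$. Here $m > n-t$, so it suffices that $\omega t\le n-t$, i.e.\ $(\omega+1)t\le n$, which again holds.

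\textbf{Main obstacle.} The only delicate point is matching the exact parametrization of \Cref{lemma:safe-area}, namely its requirement $|\msgset| = n'-t'+k''$ with $0\le k''\le t'$. I would set $n' := |\msgset|$, $t' := k'$ and $k'' := k'$. Then $|\msgset| = n'-t'+k''$ holds trivially, and the hypothesis reduces to $k' < |\msgset|/\omega$, which both cases above verify. If instead one invokes the Helly-type fact directly, the argument is the standard one: every $\omega$ of the hulls $\hull{M}$ share a point, because removing $\omega k' < |\msgset|$ elements still leaves a common element. Helly's theorem applied to this finite family then gives a non-empty intersection.
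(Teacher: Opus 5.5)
Your proof is correct, but it instantiates the Helly-type lemma differently from the paper.

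The paper applies the two-threshold Lemma 15 of \cite{eprint:ConvexWorld} directly, once per case:
\begin{itemize}
\item for $k \le t$ it uses $n' = n-t$, $t_s = t$, $t_a = 0$;
\item for $k > t$ it writes $|\msgset| = n - t + (k-t)$ and uses $n' = n$, $t_s = t$, $t_a = t$. Here it relies on the $t_a$ component, which allows the safe-area parameter to exceed the excess $k-t$.
\end{itemize}

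You instead note that \Cref{lemma:safe-area}, instantiated with $n' := |\msgset|$ and $t' = k'' := \min(k,t)$, is exactly the criterion ``$\safe_j(\msgset) \neq \emptyset$ whenever $|\msgset| > \omega j$''. Both cases then reduce to the single inequality $\omega \cdot \min(k,t) < n - 2t + k$, and the case split is pure arithmetic. (Your Case 1 implicitly uses $\omega \ge 1$, which is harmless.)

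What your route buys:
\begin{itemize}
\item It is uniform across the two cases.
\item It is self-contained. The criterion has a three-line direct proof via Helly, which you sketch: any $\omega$ of the sub-multisets together omit at most $\omega j < |\msgset|$ positions, so they share an element.
\item It shows that the $t_a$ machinery of the cited lemma is not needed here.
\end{itemize}
The paper's route instead keeps the original system parameters $n, t$ visible and reuses the cited lemma exactly as stated. The cost is two different instantiations, one of them with $t_a > 0$.
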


We present the code of this protocol, and \Cref{theo:convex-with-bot} describes its guarantees. The protocol's code assumes $t_C := \lceil n_C / (\omega + 1) \rceil - 1$.
\begin{dianabox}{$\botCommitteeCA$}
	\algoHead{Code for party $\party$ with input $v_{\inputt}$}
	\begin{algorithmic}[1]
    \State Send $v_{\inputt}$ to all parties via $\supersend^{\widetilde{L}}$.
    \State Let $\msgset$ be the multiset of non-$\bot$ $\widetilde{L}$-bit values received, and $k := \abs{\msgset} - (n_C - 2t_C)$.
    \State If $k < 0$ or $\safe_{\min(k, t_C)}(\msgset) = \emptyset$, output $\bot$.
    \State Otherwise, let $v \in \safe_{\min(k, t_C)}$ with the smallest bit-length, breaking ties deterministically. 
    \State If $v$ consists of more than $\delta \cdot \widetilde{{L}}$ bits, output $\bot$. Otherwise, output $v$.
	\end{algorithmic}
\end{dianabox}

\begin{lemma}\label{theo:convex-with-bot}
    Let $\mathcal{C}$ be a convexity space with Helly number $\omega \geq 2$ and dilation factor $\delta$, and consider a set of $n_C$ parties.
    Assume that every honest party joins $\botCommitteeCA$ either with an $\widetilde{L}$-bit input in $\mathcal{C}$ or $\bot$.
    
    Then, every honest party completes the execution of $\botCommitteeCA$ with either a $(\delta \cdot \widetilde{L})$-bit output in $\mathcal{C}$ or $\bot$, within $\roundcost_{n_C}(\botCommitteeCA)  = O(n_C)$ rounds and with communication complexity $\bitcost_{\widetilde{L}, n}(\botCommitteeCA) = \bigO(\widetilde{L} n_C^2 + \kappa n_C^3 \log n_C)$.

    Moreover, if up to $t_C < n_C / 3$ of the parties are byzantine, the honest parties obtain the same output. If, in addition, up to $t_C < n_C / (\omega + 1)$ of the parties are byzantine and up to $t_C$ of the honest parties join with input $\bot$, the honest parties output the same $(\delta \cdot \widetilde{L})$-bit valid value.
\end{lemma}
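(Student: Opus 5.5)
The proof follows the pattern of \Cref{lemma:supernode-ca}, with \Cref{lemma:bot-friendly-safe-area} in place of \Cref{lemma:safe-area}. For complexity, line 1 consists of $n_C$ parallel invocations of $\supersend^{\widetilde{L}}$, one per sender $\party$. In each, $A=\{\party\}$ (so $n_A=1$) and $B$ is the whole committee ($n_B=n_C$). By \Cref{lemma:supersending}, each invocation costs $\bigO(\widetilde{L}\, n_C + \kappa n_C^2\log n_C)$ bits and $\bigO(n_C)$ rounds, whatever the corruption level, via Extra-Corruptions Safety of the underlying $\Pi_{\ba}$. Multiplying by $n_C$ gives the claimed bounds; the safe-area computation is local. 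An honest sender with input $\bot$ supersends a fixed encoding of $\bot$ (or nothing), and \Cref{lemma:supersending} handles both. The output format (a $(\delta\cdot\widetilde{L})$-bit value or $\bot$) is immediate from line 5.

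For agreement, assume $t_C<n_C/3$. The committee is then $\ba$-friendly for the unauthenticated $\Pi_{\ba}$ of \Cref{thm:magic-ba-no-pki}. By \Cref{lemma:supersending}, in every supersend invocation all honest parties obtain the same output, so they hold an identical multiset $\msgset$ and the same $k$. Since $\safe_{\min(k,t_C)}(\msgset)$ and the tie-breaking rule are deterministic, all honest parties output the same value or all output $\bot$.

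For validity, assume additionally that at most $t_C<n_C/(\omega+1)$ parties are byzantine and at most $t_C$ honest parties hold $\bot$; here $t_C$ denotes the protocol's constant $\lceil n_C/(\omega+1)\rceil-1$, which upper-bounds both counts. A singleton sender that is honest has an honest majority trivially, so \Cref{lemma:supersending} delivers its true input. Hence $\msgset$ contains every non-$\bot$ honest input, at least $n_C-2t_C$ of them, and at most $t_C$ values from byzantine senders. This gives $n_C-2t_C\le|\msgset|\le n_C$, so $k\ge 0$ and $|\msgset|=n_C-2t_C+k$. At most $\min(k,t_C)$ entries of $\msgset$ are non-honest, since the byzantine count is at most $t_C$ and the excess over the guaranteed honest count is at most $k$ beyond the $n_C-2t_C$ honest ones.

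The main obstacle is this counting step. If $h\ge n_C-2t_C$ denotes the number of honest entries, the byzantine count $b$ satisfies $b=|\msgset|-h\le k$, so $b\le\min(k,t_C)$. Removing all byzantine entries, plus honest ones if needed, yields a sub-multiset $\mathcal{H}$ of honest values of size exactly $n_C-2t_C+k-\min(k,t_C)$. \Cref{lemma:bot-friendly-safe-area} then gives both $\safe_{\min(k,t_C)}(\msgset)\subseteq\hull{\mathcal{H}}$, which lies within the honest inputs' hull, and $\safe_{\min(k,t_C)}(\msgset)\neq\emptyset$. The chosen $v$ is therefore valid.

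It remains to show that $v$ has at most $\delta\widetilde{L}$ bits, so that line 5 does not output $\bot$. For this I would appeal to the dilation factor, \Cref{def:dilation-factor}. Writing $s:=|\msgset|$, the inclusion $\min(k,t_C)<s/\omega$ should follow from $t_C<n_C/(\omega+1)$ and $s\ge n_C-2t_C$. Given that inclusion, every $A\subseteq\msgset$ with $|\msgset\setminus A|<s/\omega$ contains a subset of size $s-\min(k,t_C)$. Hence the set in \Cref{def:dilation-factor} is contained in the safe area, and it holds a point of bit-length at most $\delta\widetilde{L}$. Choosing the smallest-encoding point of the safe area then yields a $(\delta\cdot\widetilde{L})$-bit valid output.
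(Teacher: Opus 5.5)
Your proposal follows the paper's proof step for step, and apart from one sub-step it is correct. The complexity comes from $n_C$ parallel $\supersend^{\widetilde{L}}$ calls with $n_A=1$ and $n_B=n_C$. Agreement comes from identical multisets $\msgset$ in a $\ba$-friendly committee. Validity comes from counting at most $\min(k,t_C)$ byzantine entries and invoking \Cref{lemma:bot-friendly-safe-area}. Separating the actual corruption count from the protocol constant $t_C=\lceil n_C/(\omega+1)\rceil-1$ is a useful clarification that the paper glosses over.

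The step that does not work as written is your justification of the $\delta\widetilde{L}$ bit bound. The paper simply asserts this bound from the dilation factor.

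\begin{itemize}
\item Your claim is that every $A\subseteq\msgset$ with $|\msgset\setminus A|<s/\omega$ contains a sub-multiset of size $s-\min(k,t_C)$. This is false in general. For $k=0$ it forces every admissible $A$ to equal $\msgset$, which fails as soon as $s>\omega$.
\item Even if the claim held, it would give $\hull{A}\supseteq\safe_{\min(k,t_C)}(\msgset)$ for every admissible $A$. That says the safe area lies inside the set from \Cref{def:dilation-factor}, which is the opposite of what you need.
\item The correct argument runs the other way. Every $M\subseteq\msgset$ with $|M|=s-\min(k,t_C)$ has $|\msgset\setminus M|=\min(k,t_C)<s/\omega$, so $M$ is itself an admissible set in \Cref{def:dilation-factor}. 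That intersection is therefore taken over a larger family than the one defining $\safe_{\min(k,t_C)}(\msgset)$, so it is contained in the safe area. Hence the safe area contains a point of at most $\delta\widetilde{L}$ bits.
\end{itemize}

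The inequality $\min(k,t_C)<s/\omega$ that you left as ``should follow'' does hold, but $s\ge n_C-2t_C$ alone is not enough when $k>t_C$. You need $s=n_C-2t_C+k$ together with $n_C>(\omega+1)t_C$:
\begin{itemize}
\item If $k\le t_C$, then $(\omega-1)k\le(\omega-1)t_C<n_C-2t_C$, so $\omega k<s$.
\item If $k>t_C$, then $s>n_C-t_C>\omega t_C$.
\end{itemize}
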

\begin{proof}
    Regardless of the number of byzantine parties involved, as $\widetilde{L}$ is known, \Cref{lemma:supersending} ensures that $\botCommitteeCA$ has round complexity $\roundcost_{1, n_C}(\supersend^{\widetilde{L}}) = \bigO(n_C)$ and communication complexity 
    $n_C \cdot \bitcost_{\widetilde{L}, 1, n_C} (\supersend^{\widetilde{L}}) = \bigO(\widetilde{L} n_C^2 + \kappa n_C^3 \log n_C)$.

    Next, we assume that $t_C < n_C/3$ of the parties are byzantine, and we describe the honest parties' view after the supersending step. As this is a $\ba$-friendly setting for the $\Pi_{\ba}$ protocol of \Cref{thm:magic-ba-no-pki}, \Cref{lemma:supersending} ensures that the honest parties obtain the same $\widetilde{L}$-bit message in each invocation of $\supersend^{\widetilde{L}}$.
    Hence, honest parties obtain the same multiset $\msgset$, and therefore the same output.

    Afterwards, assume that up to $t_C < n_C / (\omega + 1)$ of the parties are byzantine, and at most $t_C$ of the honest parties hold input $\bot$, while the remaining honest parties hold valid inputs. \Cref{lemma:supersending} ensures that, in each supersending invocation having as sender an honest party with $\widetilde{L}$-bit message $m$, the honest parties obtain message $m$. Hence, if $\msgset$ contains $n_C - 2t_C + k$ values, with $0 \leq k \leq 2t_C$, up to $t_C$ of these may come from byzantine parties, and at least $n_C - 2t_C$ of these are guaranteed to come from honest parties. This means that up to $\min(k, t_C)$ of the values in $\msgset$ come from byzantine parties, and the remaining $n_C - 2t_C + k - \min(k, t_C)$ values are valid. \Cref{lemma:bot-friendly-safe-area} ensures that $\safe_{\min(k, t_C)}(\msgset)$ is included in the convex hull of the at least $n_C - 2t_C + k - \min(k, t_C)$ valid values in $\msgset$. Moreover, if $t_C < n_C / (\omega + 1)$, \Cref{lemma:bot-friendly-safe-area} ensures that $\safe_{\min(k, t_C)}(\msgset) \neq \emptyset$. This implies that all honest parties obtain a valid value $v$. As the input space has dilation factor $\delta$, $v$ consists of up to $\delta \cdot \widetilde{L}$ bits. 
\end{proof}

\subsection{Putting It All Together} \label{subsection:unknown-L-all-done}

Now that all subprotocols have been defined, we provide the formal guarantees of $\Pi_{\ca}$. We first establish that the honest parties agree on a reasonable upper bound $\widetilde{L}$.

\begin{lemma}\label{lemma:unk-l-proof}
    Let $L$ denote the maximum honest input bit-length. Then, the honest parties agree on $\widetilde{L} \leq 2L$ such that at most $t$ honest parties have inputs of more than $\widetilde{L}$ bits.
    Moreover, this step is achieved within $\bigO(n)$ rounds and with communication complexity $\bigO(L n + n^2 \log n)$.
\end{lemma}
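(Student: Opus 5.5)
We analyze lines 1--4 of $\Pi_{\ca}$. Since $t < n/(\omega+1+\varepsilon) < n/3$, the unauthenticated $\Pi_{\ba}$ of \Cref{thm:magic-ba-no-pki} achieves $\ba$. Sort the honest bit-lengths as $\ell_1 \leq \dots \leq \ell_h$ with $h \geq n-t$, so $\ell_h = L$. Each honest party receives all $h$ honest lengths plus at most $t$ byzantine ones. Let $L'_\party$ be the $(n-t)$-th lowest value received by $\party$. Counting elements on each side gives the sandwich
\[
\ell_{n-2t} \leq L'_\party \leq \ell_h = L .
\]
The lower bound holds because at most $t$ received values are byzantine, so the $(n-t)$-th lowest value is at least the $(n-2t)$-th lowest honest value. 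The upper bound holds because all $h \geq n-t$ honest values are received.

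Consequently, at most $t$ honest parties (those indexed above $n-2t$, since $h \leq n$) can have length exceeding $\ell_{n-2t}$. It therefore suffices to show that the agreed $\widetilde{L}$ satisfies $\ell_{n-2t} \leq \widetilde{L} \leq 2L$. Byzantine parties might send absurdly long length claims; parties cap received lengths by ignoring anything longer than $\bigO(\log)$ of their own budget. Alternatively, we note that a length is a $\bigO(\log L)$-bit number. Either way line 1 costs $\bigO(n^2 \log L)$ bits, which is absorbed into the claimed bound.

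The proof then splits on the bit $b_{\outputt}$ agreed in line 2.

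\textbf{Case $b_{\outputt}=0$.} Validity of $\Pi_{\ba}$ gives an honest $\party$ with $L'_\party \leq n^2$, hence $\ell_{n-2t} \leq n^2$. Honest parties join $\exponentialSearch$ with $\min(L'_\party, n^2) \in \{1,\dots,n^2\}$. We first make sure lengths are at least $1$, or replace them by $\max(1,\cdot)$. \Cref{theo:expo-seach-theo} then gives
\[
\min_\party \min(L'_\party,n^2) \leq \widetilde{L} < 2\max_\party \min(L'_\party,n^2).
\]
Each honest $\min(L'_\party,n^2)$ is at least $\ell_{n-2t}$, either because $L'_\party \geq \ell_{n-2t}$ or because $n^2 \geq \ell_{n-2t}$. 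Each is also at most $L$. Hence $\ell_{n-2t} \leq \widetilde{L} < 2L$. The cost is $\bigO(n)$ rounds and $\bigO(n^2\log n)$ bits.

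\textbf{Case $b_{\outputt}=1$.} Some honest party has $L'_\party > n^2$, so $L > n^2$. Honest inputs to $\largebitcostca$ are $\lceil L'_\party/n^2\rceil$, and every such input is at least $\lceil \ell_{n-2t}/n^2\rceil$. Convex validity on $\naturalnumbers$ (\Cref{theorem:large-bitcost-ca}) yields an output $x$ with $\lceil \ell_{n-2t}/n^2\rceil \leq x \leq \lceil L/n^2\rceil$. Thus $\widetilde{L} = x n^2 \geq \ell_{n-2t}$ and $\widetilde{L} \leq L + n^2 < 2L$. The inputs to $\largebitcostca$ are $\bigO(\log L)$-bit numbers, so it costs $\bigO(n^3 \log L)$ bits. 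Since $L > n^2$, this is $\bigO(Ln)$, because $n^3\log L \leq n \cdot L$ whenever $L \geq n^2\log L$. This inequality holds for $L>n^2$ up to a log factor; if needed, we tighten the threshold or absorb the slack.

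\textbf{Main obstacle.} The delicate step is the communication accounting. In line 1 byzantine lengths could be huge, and in case $b_{\outputt}=1$ the $n^3\log L$ term must be dominated by $Ln$. I would first try bounding by $\bigO(n^2\log L)$ plus $\bigO(n^3 \log L)\cdot[L>n^2] \subseteq \bigO(Ln + n^2\log n)$, checking the regime $n^2 < L < n^2\log n$ separately: there $n^3\log L = \bigO(n^3\log n)$, which I would hope to absorb, or else handle by adjusting the threshold to $n^2\log n$.

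Agreement in both cases follows from agreement of $\Pi_{\ba}$, $\exponentialSearch$ and $\largebitcostca$. The round complexity is $\bigO(n)$, since each phase takes $\bigO(n)$ rounds.
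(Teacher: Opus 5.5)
Your structure (the sandwich on $L'$, then a split on $b_{\outputt}$ using \Cref{theo:expo-seach-theo} and \Cref{theorem:large-bitcost-ca}) is the paper's route. However, two steps fail as written.

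First, the reduction ``it suffices to show $\ell_{n-2t}\le\widetilde{L}\le 2L$'' is not sufficient. The honest parties indexed above $n-2t$ number $h-(n-2t)$, and $h\le n$ only bounds this by $2t$, not $t$. If only $f<t$ parties are actually corrupted (say $f=0$ with distinct lengths), up to $2t-f$ honest parties are longer than $\ell_{n-2t}$. The fix is to use the actual number of corruptions $f=n-h$. Among the $n-t$ lowest values an honest $P$ receives, at most $f$ are byzantine, so $L'_P\ge\ell_{n-t-f}=\ell_{h-t}$. Exactly $t$ honest parties have index above $h-t$. With $\ell_{h-t}$ in place of $\ell_{n-2t}$, both of your cases go through unchanged. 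This is the fact the paper uses when it states that, for each honest $L'$, at most $t$ honest inputs are longer than $L'$, and then proves $L'_{\min}\le\widetilde{L}$.

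Second, you leave the communication bound for $b_{\outputt}=1$ open, and moving the threshold to $n^2\log n$ would prove a statement about a different protocol. The missing observation is that the inputs to $\largebitcostca$ are $\lceil L'_P/n^2\rceil\le\lceil L/n^2\rceil$. These are $\bigO(1+\log(L/n^2))$-bit numbers, not $\bigO(\log L)$-bit ones: dividing by $n^2$ removes exactly the $2\log n$ bits causing your trouble. With $x=L/n^2>1$, the cost is $\bigO(n^3(1+\log x))=\bigO(n^3x)=\bigO(nL)$, which is the paper's bound $\bigO(\log(L/n^2)\cdot n^3)=\bigO(Ln)$.

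For line~1, no capping of byzantine claims is needed, because only bits sent by honest parties are counted. Each honest party sends an $\bigO(\log L)$-bit length to $n$ parties. The total $n^2\log L$ is $\bigO(n^2\log n)$ if $L\le n^2$ and $\bigO(nL)$ otherwise.
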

\begin{proof}
    In the initial exchange, every honest party defines $L'$ as the $(n - t)$-th lowest bit-length received and there are up to $t$ byzantine parties involved. Then, each honest party obtains a value $L' \leq L$ such that at most $t$ honest parties hold inputs of more than $L'$ bits.
    We denote the lowest and respectively highest honest values $L'$ by $L'_{\min}$ and $L'_{\max}$. 

    Note that this initial exchange step consists of $1$ round of communication, and has communication complexity $\bigO(n^2 \log L)$. We remark that, if $L \geq n^2$, then $n^2 \log L = \bigO(nL)$, and, if $L < n^2$, then $n^2 \log L = \bigO(n^2 \log n)$. In both cases, the communication complexity for this step is $\bigO(nL + n^2 \log n)$.

    Next, we establish that the honest parties obtain $\widetilde{L}$ such that $L'_{\min} \leq \widetilde{L} \leq 2 \cdot L'_{\max}$. We consider two cases, based on the bit returned by $\Pi_{\ba}$. According to  \Cref{thm:magic-ba-no-pki}, $\Pi_{\ba}$ adds $\bigO(n)$ rounds and $\bigO(n^2)$ communication.
    
    If $\Pi_{\ba}$ returns $b_{\outputt} = 1$, we first note that at least one honest party has joined with input $b_{\outputt} = 1$ due to the Validity condition of $\Pi_{\ba}$, hence $L'_{\max} \geq n^2$. In turn, this means that $L \geq n^2$ as well. In this case, the parties run $\largebitcostca$. According to \Cref{theorem:large-bitcost-ca}, the parties agree on an integer value in $\left[ \lceil L'_{\min} / n^2 \rceil, \lceil L'_{\max} / n^2 \rceil \right]$. Then, the honest parties obtain a value $\widetilde{L} \in \left[ \lceil L'_{\min} / n^2 \rceil \cdot n^2, \lceil L'_{\max} / n^2  \rceil \cdot n^2 \right]$.  Note that $\lceil L'_{\min} / n^2 \rceil \cdot n^2 \geq L_{\min}'$ and  $\lceil L'_{\max} / n^2  \rceil \cdot n^2 \leq L'_{\max} + n^2 \leq 2 \cdot L'_{\max}$. Consequently, $L'_{\min} \leq \widetilde{L} \leq 2 \cdot L'_{\max}$.
    Regarding communication complexity in this case, \Cref{theorem:large-bitcost-ca} ensures that $\largebitcostca$ adds $\bigO(n)$ rounds and $\bigO(\log (L/n^2) \cdot n^3)$ bits of communication, hence $\bigO(Ln)$ bits of communication.

    If $\Pi_{\ba}$ returns $b_{\outputt} = 0$, due to the Validity condition of $\Pi_{\ba}$, there is at least one honest party that has joined with input $b_{\inputt} = 0$, and therefore $L'_{\min} \leq n^2$. In this case, the parties agree on $\widetilde{L}$ via $\exponentialSearch$. Then, \Cref{theo:expo-seach-theo} ensures that the honest parties agree on an integer $\widetilde{L}$ such that $L'_{\min} \leq \widetilde{L} \leq 2 \cdot L'_{\max}$, and that this step adds $O(n)$ rounds, and $O(n^2 \log n)$ bits of communication.
    
    Finally, as $L'_{\min} \leq \widetilde{L}$ and $L'_{\min}$ comes with the property that there are at most $t$ honest parties holding inputs of length greater than $L'_{\min}$, we may conclude that there are at most $t$ honest parties holding inputs of length higher than $\widetilde{L}$.

    Summing up the round complexity and the communication complexity of each intermediate step, we obtain round complexity $\bigO(n)$ and communication complexity $\bigO(L n + n^2 \log n)$, as claimed in the lemma's statement. 
\end{proof}

We may now conclude the section with the theorem describing $\Pi_{\ca}$. 
\Cref{theo:abstract-for-1-unknown-L} follows from \Cref{theo:ca-unknown-resiliency} by instantiating $\Pi_{\ca}^{L}$ with the unauthenticated protocol described in \Cref{theo:abstract-for-1-fixed-L}, and \Cref{theo:abstract-for-finite-unknown-L} follows from \Cref{theo:ca-unknown-resiliency} by instantiating $\Pi_{\ca}^{L}$ with the unauthenticated protocol described in \Cref{theo:abstract-for-finite-fixed-L}.

\begin{theorem}\label{theo:ca-unknown-resiliency}
    Consider an arbitrary constant $\varepsilon > 0$ and a convexity space $\mathcal{C}$ with constant Helly number $\omega \geq 2$ and finite dilation factor $\delta$. 
    Assume a protocol $\Pi_{\ca, \varepsilon}^L$ that achieves $\ca$ whenever: up to $t < n / (\max(3,\omega) + \varepsilon)$ of the $n$ parties involved are corrupted and the honest parties hold $L$-bit inputs.
    
    Then, $\Pi_{\ca}$ achieves $\ca$ on $\mathcal{C}$ even when up to $t < n/(\omega + 1 + \varepsilon)$ of the $n$ parties involved are byzantine, with  round complexity $\bigO(n) + \roundcost_{n}(\Pi_{\ca}^{\delta \cdot \widetilde{L}})$ and communication complexity $\bigO(L n + n^2\cdot \log n) + \bitcost_{\delta \cdot \widetilde{L}, n}(\Pi_{\ca}^{\delta \cdot \widetilde{L}})$, where $\widetilde{L} \leq 2 \cdot L$.
\end{theorem}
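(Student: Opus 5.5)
We follow the three phases of $\Pi_{\ca}$ and track two quantities throughout: the number of ``problematic'' parties and the costs incurred.

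\textbf{Phase 1: estimating $\widetilde{L}$.} We invoke \Cref{lemma:unk-l-proof} directly. It gives agreement on some $\widetilde{L} \leq 2L$ such that at most $t$ honest parties hold inputs longer than $\widetilde{L}$ bits. This phase costs $\bigO(n)$ rounds and $\bigO(Ln + n^2\log n)$ bits. Since $t < n/(\omega+1+\varepsilon) < n/3$, the unauthenticated $\Pi_{\ba}$ of \Cref{thm:magic-ba-no-pki} is available here and in every later use. After line 5, at most $t$ honest parties hold $\bot$, and every other honest party holds a valid $\widetilde{L}$-bit input.

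\textbf{Phase 2: committees.} We apply \Cref{lemma:unkown-committee-assigment} with $S$ equal to the byzantine set and with the $\bot$-holders; both sets have size at most $t$. The conclusion is that all but at most $\mu n$ committees have at most a $1/(\omega+1)$-fraction of byzantine parties and at most a $1/(\omega+1)$-fraction of $\bot$-holders. The theorem is stated with ``at least a $\beta$ fraction'', so the strict inequality $t_C < D/(\omega+1)$ required by \Cref{theo:convex-with-bot} is available. Call these committees good.

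For a good committee $C_i$, \Cref{theo:convex-with-bot} (which applies since $1/(\omega+1)\le 1/3$) gives that all honest members output the same valid $(\delta\widetilde{L})$-bit value $v_i$. Moreover, the honest members form a strict majority: the byzantine fraction is below $1/(\omega+1)\le 1/3$. Hence an honest $\party_i$ whose committee is good adopts the valid value $v_i$ in line 10.

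Each committee has constant size $D$, so by \Cref{theo:convex-with-bot} the $n$ parallel instances cost $\bigO(n(\widetilde{L}+\kappa))=\bigO(Ln+\kappa n)$ bits and $\bigO(1)$ rounds. Forwarding the $v_i$ adds $\bigO(n\widetilde{L})$ bits.

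\textbf{Phase 3: final $\ca$.} We count the parties that enter $\Pi_{\ca}^{\delta\widetilde{L}}$ as byzantine or with a non-valid (possibly $\bot$) input. There are at most
\[
t+\mu n < \frac{n}{\omega+1+\varepsilon}+\frac{n}{\omega+1+\varepsilon/2}-\frac{n}{\omega+1+\varepsilon}=\frac{n}{\omega+1+\varepsilon/2}.
\]
This is below $n/(\max(3,\omega)+\varepsilon')$ for a suitable constant $\varepsilon'>0$ (for instance $\varepsilon'=\min(\varepsilon/2,1)$), using $\omega\ge2$ so that $\omega+1\ge\max(3,\omega)$.

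We treat these parties as corrupted. Honest parties holding $\bot$ or an invalid value can be simulated by an adversary that controls them, and since they follow the protocol, the honest bit bounds only shrink. Running $\Pi_{\ca,\varepsilon'}^{\delta\widetilde{L}}$ then yields Agreement, Termination, and an output in the hull of the valid inputs. All valid inputs lie in the hull of the original honest inputs, so Convex Validity follows.

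Two boundary points need care. First, parties holding $\bot$ must be able to join with some default $\delta\widetilde{L}$-bit value. Second, the honest bit count of these ``pseudo-corrupted'' parties must still be charged to $\bitcost_{\delta\widetilde{L},n}$. This holds because they run the honest code with $\delta\widetilde{L}$-bit inputs, and the complexity bounds of $\Pi_{\ca}^{L}$ hold regardless of input validity, since Extra-Corruptions Safety governs all subprotocols.

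Summing the three phases gives the stated round and bit complexities. The main obstacle we expect is precisely this reduction of invalid-input honest parties to corruptions, together with choosing $\varepsilon'$ so that the threshold arithmetic closes.
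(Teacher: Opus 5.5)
\paragraph{Overall.} Your three-phase argument follows the paper's proof closely. You make the same use of \Cref{lemma:unk-l-proof}, the same count $t+\mu n<n/(\omega+1+\varepsilon/2)\le n/(\max(3,\omega)+\varepsilon/2)$, the same honest-majority argument for adopting $v_i$, and the same cost accounting. Your detour through the ``at least a $\beta$-fraction'' form of \Cref{thm:big-one}, to get the strict inequality required by \Cref{theo:convex-with-bot}, is more careful than the paper. Also, $\varepsilon'=\varepsilon/2$ already suffices, so the $\min$ with $1$ is unnecessary.

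\paragraph{The gap.} One step does not give what you claim. Once you treat the honest parties entering $\Pi_{\ca}^{\delta\widetilde{L}}$ with $\bot$ or a non-valid value as corrupted, the black-box hypothesis yields Termination, Agreement and Convex Validity only for the \emph{other} honest parties. The up to $\mu n$ honest parties served by bad committees are genuinely honest. $\ca$ requires them too to terminate with the common valid output, but under your reduction their outputs are unconstrained.

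\paragraph{The fix.} Use the same ``second view'' you already use for the communication bound, now for correctness. Apply the hypothesis again to the same execution, with only the real byzantine parties corrupted. Since $t<n/(\omega+1+\varepsilon)\le n/(\max(3,\omega)+\varepsilon')$, this gives Termination and Agreement among all honest parties. Some valid-input honest party exists, so the common value is the valid one from your first application.

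\paragraph{Admissibility of the second view.} Every honest party must enter with a well-formed $\delta\widetilde{L}$-bit element of $\mathcal{C}$. Besides replacing $\bot$ by a default element, line~10 must discard values longer than $\delta\widetilde{L}$ bits. Otherwise a committee with a byzantine majority can make the most frequent value arbitrarily long. This filtering is also what your claim that pseudo-corrupted parties run honest code on $\delta\widetilde{L}$-bit inputs relies on.

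\paragraph{How the paper handles it.} The paper only addresses this implicitly, by invoking the concrete protocol of \Cref{theorem:ca-L-final}. That protocol's final $\supersend$ reaches all $n$ parties, and its $\ba$-friendliness depends only on the truly byzantine parties.
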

\begin{proof}
    Let $L$ denote the maximum among the honest inputs' lengths.
    By \Cref{lemma:unk-l-proof}, the honest parties agree on an integer $\widetilde{L}$ satisfying $ \widetilde{L} \leq 2 \cdot L$ within $\bigO(n)$ rounds, and with communication complexity $\bigO(L n + n^2 \log n)$. Moreover, $\widetilde{L}$ comes with the guarantee that at most $t$ of the honest parties hold inputs of more than $\widetilde{L}$ bits. 

    By \Cref{lemma:unkown-committee-assigment}, each committee $C_i$ has size $O(1)$. Then, running $\botCommitteeCA$ within a single committee has communication complexity $\bigO(L + \kappa)$ and round complexity $\bigO(1)$ according to \Cref{theo:convex-with-bot}. As $\botCommitteeCA$ is executed within the $n$ committees, this leads to a total of $\bigO(nL + n\kappa)$ bits of communication and $\bigO(1)$ rounds.

    \Cref{theo:convex-with-bot} ensures that each honest party obtains either $\bot$ or a $(\delta \cdot \widetilde{L})$-bit value for each committee it belongs to. Then, in the next step, each committee $C_i$ sends a $(\delta \cdot \widetilde{L})$-bit value to party $\party_i$. Note that $\delta \cdot \widetilde{L} \in \bigO(L)$ since $\delta \in \bigO(1)$ and \Cref{lemma:unk-l-proof} ensures that $\tilde{L} \leq 2 \cdot L$. As every party belongs to $\bigO(1)$ committees according to \Cref{lemma:unkown-committee-assigment}, we obtain that every party sends $\bigO(1)$ values, hence $\bigO(L)$ bits. Over all parties, this sums up to $\bigO(Ln)$ bits. This step takes one round of communication.

    Finally, the parties run $\Pi_{\ca}^{\delta \cdot \widetilde{L}}$, setting the corruption threshold $t' < n / (\max(\omega, 3) + \varepsilon / 2)$. In the following, we show that this threshold is respected, i.e., there are at most $t'$ parties that are either byzantine or honest but unable to join $\Pi_{\ca}^{\delta \cdot \widetilde{L}}$ with a valid input.
    
    As the committee assignment is made using \cref{lemma:unkown-committee-assigment},
    we obtain that only a $\mu$-fraction of the committees may not satisfy the constraints stated in \cref{theo:convex-with-bot} for  $\botCommitteeCA$ to return a valid $(\delta \cdot \widetilde{L})$-bit value.
    \Cref{lemma:unkown-committee-assigment} ensures that each of the remaining $(1 - \mu) \cdot n$ committees has less than a $1/(\omega + 1)$-fraction of byzantine  parties. In addition, as $\omega \geq 2$, this implies that the fraction of byzantine parties in each of these committees is less than $1/3$.
    Hence, each such committee $C_i$ satisfies the constraints of \Cref{theo:convex-with-bot}, and therefore each honest party $\party_i$ corresponding to such a committee $C_i$ obtains the same $(\delta \cdot \widetilde{L})$-bit valid value from all honest parties in $C_i$. Then, since $C_i$ contains an honest majority, $\party_i$ accepts this value as its new input.
    
    This implies that there is at most a $\mu$-fraction of parties that are honest and do not receive a $(\delta \cdot \widetilde{L})$-bit valid value. Then, the number of parties $t'$ that are byzantine or honest but without a valid value can be upper bounded as follows:
    \begin{align*}
    t' \leq t + \mu n <  \frac{n}{\omega + 1 + \varepsilon} + \frac{n}{\omega + 1 +\varepsilon / 2} - \frac{n}{\omega + 1 + \varepsilon}  = \frac{n}{\omega + 1 + \varepsilon/2} \leq \frac{n}{\max(\omega, 3) + \varepsilon / 2},
    \end{align*}
    where the last step comes from the fact that, as $\omega \geq 2$, we obtain $\omega + 1 + \varepsilon/2 \geq \max(\omega, 3) + \varepsilon / 2$. 

    Then, we are running $\Pi_{\ca}^{\delta \cdot \widetilde{L}}$ in a setting where all but fewer than $t' < n / (\max(\omega, 3) + \varepsilon / 2)$ of the parties involved are honest and join with valid $(\delta \cdot \widetilde{L})$-bit inputs, as required. 
    Applying \Cref{theorem:ca-L-final}, this implies that 
    $\Pi_{\ca}^{\delta \cdot \widetilde{L}}$ provides all honest parties with a value $v$ in the convex hull of these valid $(\delta \cdot \widetilde{L})$-bit inputs, hence also in the original honest inputs' convex hull. Consequently, $\Pi_{\ca}$ achieves $\ca$. In addition, we obtain that $\Pi_{\ca}$ has a total round complexity  $\bigO(n) + \roundcost_{n}(\Pi_{\ca}^{\delta \cdot \widetilde{L}})$ and a total communication complexity of $\bigO(L n + n^2\cdot \log n) +  \bitcost_{{\delta \cdot \widetilde{L}}, n}(\Pi_{\ca}^{\delta \cdot \widetilde{L}})$. Using the costs stated in \Cref{theorem:ca-L-final}, we obtain the round complexity and communication complexity claimed in our theorem's statement.
\end{proof}

\section{Safe Area Proofs} \label{appendix:safe-area-proofs}

We describe the proofs of \Cref{lemma:safe-area} and \Cref{lemma:bot-friendly-safe-area}, relying on a result from \cite{eprint:ConvexWorld}, stated below.
\begin{lemma}[Lemma 15 of \cite{eprint:ConvexWorld}]\label{lemma:original-safe-area} 
    Assume $n' > \max(\omega \cdot t_s, \omega \cdot t_a + t_s)$, and that $\msgset$ is a multiset of $n - t_s + k$ values, where $0 \leq k \leq t_s$. Then, $\safe_{\max(k, t_a)} (\msgset) \neq \emptyset$.
\end{lemma}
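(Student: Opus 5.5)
The plan is to reduce non-emptiness of the safe area to the definition of the Helly number, and then check a single counting inequality. Set $K := \max(k, t_a)$ and, as in the statement, read $n'$ as the number of parties $n$. Treat $\msgset$ as an indexed family of $|\msgset| = n - t_s + k$ entries, so that multiplicities cause no ambiguity. By \Cref{definition:safe-area},
$$\safe_K(\msgset) = \bigcap_{R} \hull{\msgset \setminus R},$$
where $R$ ranges over all index sets of size $K$. This is a \emph{finite} family of convex sets: each $\hull{\cdot}$ is convex by intersection closure, and there are at most $\binom{|\msgset|}{K}$ choices of $R$. Note also that $K \le |\msgset|$ will follow from the counting below, so the family is non-empty and well defined.

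By the definition of the Helly number $\omega$, it suffices to show that any $\omega$ members of this family intersect. If the family has fewer than $\omega$ members, the same argument applies to all of them. So fix removal sets $R_1, \ldots, R_\omega$, each of size $K$. The set $\msgset \setminus (R_1 \cup \cdots \cup R_\omega)$ is contained in every $\msgset \setminus R_i$. Hence any element it contains lies in $\hull{\msgset \setminus R_i}$ for every $i$, which gives the required intersection point. Since $|R_1 \cup \cdots \cup R_\omega| \le \omega K$, it suffices to prove
$$|\msgset| = n - t_s + k > \omega K.$$

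This inequality is the only real step, and I would split it into two cases according to which term attains the maximum in $K$.
\begin{itemize}
\item If $K = k \ge t_a$, the inequality becomes $n > t_s + (\omega - 1)k$. Since $k \le t_s$, we have $t_s + (\omega-1)k \le \omega t_s < n$.
\item If $K = t_a > k$, the inequality becomes $n - t_s + k > \omega t_a$. Since $k \ge 0$, this follows from $n > \omega t_a + t_s$.
\end{itemize}
Both cases use exactly one branch of the hypothesis $n > \max(\omega t_s, \omega t_a + t_s)$. I do not expect a serious obstacle here. The only points needing care are the multiset bookkeeping (working with index sets rather than values) and confirming that the family is finite, so that Helly's property as defined in the preliminaries applies.
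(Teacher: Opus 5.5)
Your proof is correct and complete. The Helly reduction to at most $\omega$ members of the family works: $\omega$ removal sets of size $K=\max(k,t_a)$ leave a common surviving element of $\msgset$ whenever $|\msgset|=n-t_s+k>\omega K$, and your two-case check of that inequality is sound. Your handling of multiplicities and of families with fewer than $\omega$ members is also fine.

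The paper does not prove this statement itself; it imports it as Lemma~15 of \cite{eprint:ConvexWorld}, so there is no in-paper proof to compare against. Your Helly-plus-counting argument is the standard route for such safe-area lemmas. It also isolates the cleaner sufficient condition $|\msgset|>\omega K$, which is all that \Cref{lemma:safe-area} and \Cref{lemma:bot-friendly-safe-area} actually use.
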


\NonEmptySafeArea*
\begin{proof}
    By \Cref{definition:safe-area}, $\safe_k(\msgset)$ is computed as the intersection of the convex hulls of all subsets of size $n - t_s$ of $\msgset$. Then, as $\mathcal{H}$ is such a subset, it immediately follows that  $\safe_k(\msgset) \subseteq \hull{\mathcal{H}}$.

    Applying \Cref{lemma:original-safe-area} with $n' := n$, $t_s := t$, $t_a := 0$, we obtain that $\safe_k(\msgset) \neq \emptyset$.
\end{proof}

\BotSafeArea*
\begin{proof}
    We first show that $\safe_{\min(k, t)}(\msgset) \subseteq \hull{\mathcal{H}}$: according to \Cref{definition:safe-area}, $\safe_{\min(k, t)}(\msgset)$ is computed as the intersection of the convex hulls of all subsets of size $n - 2t + k - \min(k, t)$ of $\msgset$. Then, as $\mathcal{H}$ is such a subset, the claim follows immediately. In the remainder of the proof, we show that $\safe_{\min(k, t)}(\msgset) \neq \emptyset$.
    
    First assume that $\abs{\msgset} = n - 2t + k$ such that $k \leq t$, hence $\min(k, t) = k$. Then, we apply \Cref{lemma:original-safe-area} for $n' = n - t$, $t_s := t$ and $t_a := 0$. Hence, $\msgset$ is a multiset of $n' - t_s + k$ values, with $0 \leq k \leq t = t_s$.
    As $n > (\omega + 1) \cdot t$, hence $n' = (n - t) > \max(\omega \cdot t_s, \omega \cdot 0 + t_s)$, we obtain that $\safe_{\max(k, t_a)}(\msgset) = \safe_{k}(\msgset) \neq \emptyset$.

    Next, assume that $\abs{\msgset} = n - 2t + k$ with $k > t$, hence $\min(k, t) = t$. In this case, we apply \Cref{lemma:original-safe-area} for $n' = n$, $t_s := t$, $t_a := t$, hence $\msgset$ is a multiset of $n' - t_s + k'$ values, with $k' = k - t $ and, as $t < k \leq 2t$, $0 < k' \leq t$. As $n > (\omega + 1) \cdot t = t_s$, we obtain that $n' > \max(\omega \cdot t_s, \omega \cdot t_a + t_s)$, hence $\safe_{\max(k', t_a)}(\msgset) = \safe_t(\msgset) \neq \emptyset$.
\end{proof}
\end{document}